\definecolor{lightgray}{rgb}{0.85,0.85,0.85}
\tikzstyle{player1}=[draw, thick, circle, fill=lightgray,inner sep=2pt, minimum width=12pt]
\tikzstyle{player2}=[draw, thick, fill=lightgray,inner sep=3pt, minimum width=15pt,minimum height=15pt]
\tikzstyle{random}=[draw, thick, diamond, rounded corners, fill=lightgray,inner sep=2pt, minimum width=12pt]
\tikzstyle{vplayer1}=[player1,inner sep=0.5pt]
\tikzstyle{vplayer2}=[player2,inner sep=0.5pt]
\tikzstyle{vrandom}=[random,inner sep=0.5pt]
\tikzstyle{p1}=[player1,minimum size=17pt]
\tikzstyle{p2}=[player2,minimum size=15pt]
\tikzstyle{arrow}=[->,line width=1pt,>=stealth',bend angle=15]
\g@addto@macro\bfseries{\boldmath}
\g@addto@macro\mdseries{\unboldmath}
\g@addto@macro\normalfont{\unboldmath}
\g@addto@macro\rmfamily{\unboldmath}
\g@addto@macro\upshape{\unboldmath}
\renewcommand*{\multicitedelim}{\addcomma\space}
\newcommand{\myhref}[1]{%
  \iffieldundef{doi}
    {\iffieldundef{url}
       {#1}
       {\href{\strfield{url}}{#1}}}
    {\href{http://dx.doi.org/\strfield{doi}}{#1}}%
}
    \newlength{\temp@x}%
    \newlength{\temp@y}%
    \newlength{\temp@w}%
    \newlength{\temp@h}%
    \def\my@coords#1#2#3#4{%
      \setlength{\temp@x}{#1}%
      \setlength{\temp@y}{#2}%
      \setlength{\temp@w}{#3}%
      \setlength{\temp@h}{#4}%
      \adjustlengths{}%
      \my@pdfliteral{\strip@pt\temp@x\space\strip@pt\temp@y\space\strip@pt\temp@w\space\strip@pt\temp@h\space re}}%
      \def\my@pdfliteral#1{\pdfliteral page{#1}}%
      \def\adjustlengths{}%
      \def\my@pdfliteral #1{}%
      \def\adjustlengths{\setlength{\temp@h}{-\temp@h}\addtolength{\temp@y}{1in}\addtolength{\temp@x}{-1in}}%
    \def\Hy@colorlink#1{%
      \begingroup
        \ifHy@ocgcolorlinks
          \def\Hy@ocgcolor{#1}%
          \my@pdfliteral{q}%
          \my@pdfliteral{7 Tr}%
        \else
          \HyColor@UseColor#1%
        \fi
    }%
    \def\Hy@endcolorlink{%
      \ifHy@ocgcolorlinks%
        \my@pdfliteral{/OC/OCPrint BDC}%
        \my@coords{0pt}{0pt}{\pdfpagewidth}{\pdfpageheight}%
        \my@pdfliteral{F}%
        \my@pdfliteral{EMC/OC/OCView BDC}%
        \begingroup%
          \expandafter\HyColor@UseColor\Hy@ocgcolor%
          \my@coords{0pt}{0pt}{\pdfpagewidth}{\pdfpageheight}%
          \my@pdfliteral{F}%
        \endgroup%
        \my@pdfliteral{EMC}%
        \my@pdfliteral{0 Tr}%
        \my@pdfliteral{Q}%
      \fi
      \endgroup
    }%
\colorlet{DarkRed}{red!50!black}
\colorlet{DarkGreen}{green!50!black}
\colorlet{DarkBlue}{blue!50!black}
\newsavebox{\@brx}
\newcommand{\llangle}[1][]{\savebox{\@brx}{\(\m@th{#1\langle}\)}%
  \mathopen{\copy\@brx\mkern2mu\kern-0.9\wd\@brx\usebox{\@brx}}}
\newcommand{\rrangle}[1][]{\savebox{\@brx}{\(\m@th{#1\rangle}\)}%
  \mathclose{\copy\@brx\mkern2mu\kern-0.9\wd\@brx\usebox{\@brx}}}
\newcommand{\set}[1]{\{#1\}}
\newcommand{\lu}{\textup{(}}
\newcommand{\ru}{\textup{)}}
\newcommand{\upbr}[1]{\lu #1\ru}
\newcommand{\at}{\mathit{Attr}}
\newcommand{\Inf}{\mathrm{Inf}}
\newcommand{\objsty}[2]{\textrm{#1}\left(#2\right)}
\newcommand{\objstytxt}[2]{\textrm{#1}(#2)}
\newcommand{\pat}{\omega\xspace}
\newcommand{\Path}{\Omega\xspace}
\newcommand{\straa}{\sigma\xspace}
\newcommand{\Straa}{\Sigma\xspace}
\newcommand{\strab}{\pi\xspace}
\newcommand{\Strab}{\Pi\xspace}
\newcommand{\obj}{\psi\xspace}
\newcommand{\sseq}{\langle v_0,v_1,v_2,\ldots\rangle}
\newcommand{\vo}{V_1\xspace}
\newcommand{\target}{T\xspace}
\newcommand{\intarget}{\expandafter\MakeLowercase\expandafter{\target}\xspace}
\newcommand{\ec}{X\xspace}
\newcommand{\inec}{\expandafter\MakeLowercase\expandafter{\ec}\xspace}
\newcommand{\scc}{C\xspace}
\newcommand{\inscc}{\expandafter\MakeLowercase\expandafter{\scc}\xspace}
\newcommand{\Out}{\mathit{Out}}
\newcommand{\In}{\mathit{In}}
\newcommand{\OutDeg}{\mathit{Outdeg}}
\newcommand{\InDeg}{\mathit{Indeg}}
\newcommand{\pl}{{p}}
\newcommand{\op}{{\bar{p}}}
\newcommand{\vt}{V_2\xspace}
\newcommand{\ato}{\mathit{Attr}_1}
\newcommand{\att}{\mathit{Attr}_2}
\newcommand{\game}{\mathcal{G}}
\newcommand{\best}{best_\ell}
\newcommand{\cnt}{cnt_\ell}
\newcommand{\incr}{incr^\ell_v}
\newcommand{\cif}{\text{if }}
\newcommand{\domsize}{h}
\newcommand{\kmax}{\domsize_{\text{max}}}
\DeclareMathOperator{\domalg}{\ProcNameSty{kGenB{\"u}chiDominion}}
\DeclareMathOperator{\buchialg}{\ProcNameSty{GenB{\"u}chiGame}}
\DeclareMathOperator{\progress}{\ProcNameSty{GenB{\"u}chiProgressMeasure}}
\theoremstyle{plain}
\declaretheorem[numberwithin=section]{theorem}
\declaretheorem[numberlike=theorem]{lemma}
\declaretheorem[numberlike=theorem]{corollary}
\declaretheorem[numberlike=theorem]{remark}
\newtheorem{proposition}[theorem]{Proposition}
\newtheorem{reduction}[theorem]{Reduction}
\newtheorem{conjecture}[theorem]{Conjecture}
\newtheorem{invariant}[theorem]{Invariant}
\title{Conditionally Optimal Algorithms for Generalized Büchi Games}
\author[1]{Krishnendu Chatterjee}
\author[2]{Wolfgang Dvo{\v r}{\' a}k}
\author[2]{Monika Henzinger}
\author[2]{Veronika~Loitzenbauer}
\affil[1]{IST Austria}
\affil[2]{University of Vienna, Faculty of Computer Science, Vienna, Austria}
\date{}
\begin{document}
\maketitle

\begin{abstract}
Games on graphs provide the appropriate framework to study several 
central problems in computer science, such as the verification and synthesis 
of reactive systems.
One of the most basic objectives for games on graphs is the liveness (or Büchi)
objective that given a target set of vertices requires that some vertex
in the target set is visited infinitely often.
We study generalized B\"uchi objectives (i.e., conjunction of liveness objectives),
and implications between two generalized B\"uchi objectives (known as GR(1) 
objectives), that arise in numerous applications in computer-aided verification.
We present improved algorithms and conditional super-linear lower bounds 
based on widely believed assumptions about the complexity of 
(A1) combinatorial Boolean matrix multiplication and (A2) CNF-SAT. 
We consider graph games with $n$ vertices, $m$ edges, and 
generalized B\"uchi objectives with $k$ conjunctions.
First, we present an algorithm with running time $O(k \cdot n^2)$, 
improving the previously known $O(k \cdot n \cdot m)$ and $O(k^2 \cdot n^2)$ 
worst-case bounds.
Our algorithm is optimal for dense graphs under (A1).
Second, we show that the basic algorithm for the problem is 
optimal for sparse graphs when the target sets have constant size under (A2). 
Finally, we consider GR(1) objectives, with $k_1$ conjunctions in the 
antecedent and $k_2$ conjunctions in the consequent, and present an 
$O(k_1 \cdot k_2 \cdot n^{2.5})$-time algorithm, improving the previously 
known $O(k_1 \cdot k_2 \cdot n \cdot m)$-time algorithm for $m > n^{1.5}$.
\end{abstract}

\section{Introduction}

\noindent{\em Games on graphs.}
Two-player games on graphs, between player~1 and the adversarial player~2, 
are central in many problems in computer science, especially in the formal analysis
of reactive systems, where vertices of the graph represent states of the 
system, edges represent transitions, infinite paths of the graph represent 
behaviors (or non-terminating executions) of the system, and the two players 
represent the system and the environment, respectively.
Games on graphs have been used in many applications related to the verification 
and synthesis of systems, such as, the synthesis of systems from specifications 
and controller-synthesis~\cite{Church62,PnueliRosner89,RamadgeWonham87},  
the verification of open systems~\cite{AHK02}, checking interface 
compatibility~\cite{InterfaceAutomata}, well-formedness of 
specifications~\cite{Dill89book}, and many others. 
We will distinguish between results most relevant for \emph{sparse graphs}, 
where the number of edges~$m$ is roughly proportional to the number of vertices~$n$,
and \emph{dense graphs} with $m = \Theta(n^2)$.
Sparse graphs arise naturally in program verification, as control-flow graphs are 
sparse~\cite{Thorup98,CIPG15}. Graphs obtained as synchronous product of several 
components (where each component makes transitions at each step)~\cite{KP09,CGIP16}
can lead to dense graphs.

\smallskip\noindent{\em Objectives.}
Objectives specify the desired set of behaviors of the system. 
The most basic objective for reactive systems is the {\em reachability} 
objective, and the next basic objective is the {\em B\"uchi \upbr{also called 
liveness or repeated 
reachability}} objective that was introduced in the seminal work of 
B\"uchi~\cite{Buchi60,Buchi62,BuchiLandweber69} for automata over infinite words.
B\"uchi objectives are specified with a target set 
$\target$ and the objective specifies the set of infinite paths in the graph 
that visit some vertex in the target set infinitely often. 
Since for reactive systems there are multiple requirements, a very central
objective to study for games on graphs is the conjunction of B\"uchi objectives,
which is known as \emph{generalized B\"uchi objective.}
Finally, currently a very popular class of objectives to specify behaviors
for reactive systems is called the \emph{GR\upbr{1} \upbr{generalized reactivity \upbr{1}}}
objectives~\cite{PPS06}.
A GR(1) objective is an implication between two generalized B\"uchi objectives{: 
the antecedent generalized B\"uchi objective is called the assumption and 
the consequent generalized B\"uchi objective is called the guarantee.
In other words, the objective requires that if the assumption generalized 
B\"uchi objective is satisfied, then the guarantee generalized B\"uchi 
objective must also be satisfied}.

We present a brief discussion about the significance of the objectives we 
consider, for a detailed discussion see~\cite{ChatterjeeH14}.
The conjunction of B\"uchi objectives is required to specify progress
conditions of mutual exclusion protocols, and deterministic B\"uchi automata
can express many important properties of linear-time temporal logic (LTL)
(the de-facto logic to specify properties of reactive systems)~\cite{KV05,KV98,AT04,KPB94}.
The analysis of reactive systems with such objectives naturally gives rise
to graph games with generalized B\"uchi objectives.
Finally, graph games with GR(1) objectives have been used in many applications,
such as the industrial example of synthesis of AMBA AHB 
protocol~\cite{BGJPPW07,GCH}
as well as in robotics applications~\cite{FKP05,CCGK15}.

\smallskip\noindent{\em Basic problem and conditional lower bounds.}
In this work we consider games on graphs with generalized B\"uchi 
and GR(1) objectives, and the basic algorithmic problem is to compute
the \emph{winning set}, i.e., the set of starting vertices where player~1 
can ensure the objective irrespective of the way player~2 plays; the way player~1 
achieves this is called her \emph{winning strategy}.
These are core algorithmic problems in verification 
and synthesis.
For the problems we consider, while polynomial-time algorithms are known,
there are no super-linear lower bounds.
Since for polynomial-time algorithms unconditional super-linear lower bounds 
are extremely rare in the whole of computer science, we consider 
{\em conditional lower bounds}, which assume that for some well-studied 
problem the known algorithms are optimal up to some lower-order factors. 
In this work we consider two such well-studied assumptions:
(A1)~there is no combinatorial\footnote{Combinatorial here means avoiding fast matrix multiplication~\cite{LeGall14}, 
see also the discussion in~\cite{HenzingerKNS15}.} algorithm with running time of
$O(n^{3-\varepsilon})$ for any $\varepsilon > 0$ to multiply two $n \times n$ Boolean matrices; 
or (A2)~for all $\varepsilon>0$ there exists a $k$ such that there is no algorithm 
for the $k$-CNF-SAT problem that runs in $O(2^{(1-\varepsilon) \cdot n} \cdot \operatorname{poly}(m))$ 
time, where $n$ is the number of variables and $m$ the number of clauses. 
These two assumptions have been used to establish lower bounds for 
several well-studied problems, such as dynamic graph algorithms~\cite{AbboudW14,AbboudWY15}, 
measuring the similarity of strings~\cite{AbboudWW14,Bringmann14,
BringmannK15,BackursI15,AbboudBW15b}, context-free grammar
parsing~\cite{Lee02,AbboudBW15a}, and verifying first-order graph 
properties~\cite{PatrascuW10,Williams14}. 

\smallskip\noindent{\em Our results.} 
We consider games on graphs with $n$ vertices, $m$ edges, generalized B\"uchi objectives
with $k$ conjunctions, and target sets of size $b_1, b_2, \ldots, b_k$, 
and GR(1) objectives with $k_1$ conjunctions in the assumptions and 
$k_2$ conjunctions in the guarantee.
Our results are as follows. 
\begin{itemize}
\item {\em Generalized B\"uchi objectives.}
The classical algorithm for generalized B\"uchi objectives 
requires $O(k \cdot \min_{1 \leq i \leq k} b_i \cdot m)$ time. 
Furthermore,
there exists an $O(k^2 \cdot n^2)$-time algorithm via a reduction to Büchi 
games~\cite{BloemCGHJ10,ChatterjeeH14}.
\begin{enumerate}
\item {\em Dense graphs.} 
Since $\min_{1\leq i \leq k} b_i=O(n)$ and $m=O(n^2)$,
the classical algorithm has a worst-case running time of $O(k\cdot n^3)$.
First, we present an algorithm with worst-case running time 
$O(k \cdot n^2)$, which is an improvement for instances with 
$\min_{1 \leq i \leq k} b_i \cdot m = \omega(n^2)$.
Second, for dense graphs with $m = \Theta(n^2)$ and $k = \Theta(n^c)$ for any $0<c\leq 1$
our algorithm is optimal under (A1); i.e.,
improving our algorithm for dense graphs would imply a faster 
(sub-cubic) combinatorial Boolean matrix multiplication algorithm.

\item {\em Sparse graphs.} 
We show that for $k = \Theta(n^c)$ for any $0<c\leq 1$, 
for target sets of constant size, and sparse 
graphs with $m = \Theta(n^{1+o(1)})$ 
the basic algorithm is optimal under (A2).
In fact, our conditional lower bound under (A2) holds even when each target 
set is a singleton.
Quite strikingly, our result implies that improving the basic algorithm for 
sparse graphs even with singleton sets would require a major breakthrough 
in overcoming the exponential barrier for SAT.
\end{enumerate}
In summary, for games on graphs, we present an improved algorithm for 
generalized B\"uchi objectives for dense graphs that is optimal under (A1); 
and show that under (A2) the basic algorithm is optimal 
for sparse graphs and constant size target sets.

The conditional lower bound for dense graphs means in particular that 
for unrestricted inputs the dependence of the runtime on~$n$ cannot be improved,
whereas the bound for sparse graphs makes the same statement for the dependence
on~$m$.
Moreover, as the graphs in the reductions for our lower bounds can be made acyclic by deleting a single vertex,
our lower bounds also apply to a broad range of digraph parameters. 
For instance, let $w$ be the DAG-width~\cite{BerwangerDHK06} of a graph, then there is no $O(f(w) \cdot n^{3-o(1)})$-time algorithm under (A1) and no $O(f(w)\cdot m^{2-o(1)})$-time algorithm under~(A2).

\item {\em GR(1) objectives.} 
We present an algorithm for games on graphs with GR(1) objectives that has 
$O( k_1 \cdot k_2 \cdot n^{2.5})$ running time and improves the previously 
known $O(k_1 \cdot k_2 \cdot n \cdot m)$-time algorithm~\cite{JuvekarP06} 
for $m > n^{1.5}$.
Note that since generalized B\"uchi objectives are special cases of GR(1) 
objectives, our conditional lower bounds for  generalized B\"uchi objectives 
apply to GR(1) objectives as well, but are not tight.
\end{itemize}
All our algorithms can easily be modified to also return the corresponding 
winning strategies for both players within the same time bounds.

\smallskip\noindent{\em Implications.}  
We discuss the implications of our results.
\begin{enumerate}
\item {\em Comparison with related models.}
We compare our results for game graphs to the special case of standard graphs
(i.e., games on graphs with only player~1) and the related model of 
Markov decision processes (MDPs) (with only player~1 and stochastic transitions).
First note that for reachability objectives, linear-time algorithms exist for 
game graphs~\cite{Beeri80,Immerman81}, whereas for MDPs\footnote{For MDPs the 
winning set refers to the almost-sure winning set 
that requires that the objective is satisfied with probability~1.}
the best-known algorithm has 
running time $O(\min(n^2, m^{1.5}))$~\cite{ChatterjeeJH03,ChatterjeeH14}.
For MDPs with reachability objectives, a linear or even $O(m \log n)$ time algorithm is a 
major open problem, i.e., there exist problems that seem harder for MDPs than for game 
graphs.
Our conditional lower bound results show that under assumptions 
(A1) and (A2) the algorithmic problem for generalized B\"uchi objectives 
is strictly harder for games on graphs as compared to standard graphs and MDPs.
More concretely, for $k=\Theta(n)$, 
(a)~for dense graphs ($m=\Theta(n^2)$) 
and $\min_{1\leq i \leq k} b_i=\Omega(\log n)$, our lower bound for games on graphs under (A2) 
is $\Omega(n^{3-o(1)})$, whereas both the graph and the 
MDP problems can be solved in $O(n^2)$ 
time~\cite{ChatterjeeH12,ChatterjeeH14}; and 
(b)~for sparse graphs ($m=\Theta(n^{1+o(1)})$) with 
$\min_{1\leq i \leq k} b_i=O(1)$,  
our lower bound for games on graphs under (A1) is $\Omega(m^{2-o(1)})$, 
whereas the graph problem can be solved in $O(m)$ time and the MDP problem 
in $O(m^{1.5})$ time~\cite{AlurHenzingerBook,ChatterjeeH11}; respectively.

\item {\em Relation to SAT.} We present an algorithm for game graphs
with generalized B\"uchi objectives  and show that improving the algorithm
would imply a better algorithm for SAT, and thereby establish an interesting
algorithmic connection for classical objectives in game graphs and the SAT problem.
\end{enumerate}

{
\smallskip\noindent{\em Outline.}
In Section~\ref{sec:prelim} we provide formal definitions and state the 
conjectures on which the conditional lower bounds are based. 
In Section~\ref{sec:GenBuchiAlg} we consider algorithms for generalized Büchi objectives
and first present a basic algorithm which is in $O(knm)$
time and then improve it to an $O(k \cdot n^2)$-time algorithm.
In Section~\ref{sec:lowerbounds} we provide conditional lower bounds for generalized Büchi objectives.
Finally, in Section~\ref{sec:GR1Alg} we study algorithms for games with GR(1) objective
and first give a basic algorithm which is in $O(k_1 \cdot k_2 \cdot n^3)$ time and 
then improve it to an $O(k_1 \cdot k_2 \cdot n^{2.5})$-time algorithm.}

\section{Preliminaries}\label{sec:prelim}
\subsection{Basic definitions for Games on Graphs}
\noindent{\em Game graphs.}
A \emph{game graph} $\game = ((V, E), (V_1, V_2))$ is a directed graph $G = (V, E)$
with a set of vertices $V$ and a set of edges $E$ and a partition of $V$
into \emph{player~1 vertices}~$V_1$ and \emph{player~2 vertices}~$V_2$.
Let $n = \lvert V \rvert$ and $m = \lvert E \rvert$.
Given such a game graph $\game$, we denote with $\overline{\game}$ the game graph where the player~1 and player~2 vertices of $\game$ are  interchanged, i.e, $\overline{\game} = ((V, E), (V_2, V_1))$. We use~$\pl$ to denote a player and $\op$ to denote
its opponent.
For a vertex $u\in V$, we write $\Out(u)=\set{v\in V \mid (u,v) \in E}$ 
for the set of successor vertices of~$u$ and $\In(u)=\set{v \in V \mid (v,u) 
\in E}$ for the set of predecessor vertices of $u$. If necessary, we refer to 
the successor vertices in a specific graph by using, e.g., $\Out(G, u)$.
We denote by $\OutDeg(u)=|\Out(u)|$ the number of outgoing edges from $u$, 
and by $\InDeg(u)=|\In(u)|$ the number of incoming edges. We assume for 
technical convenience $\OutDeg(u) \ge 1$ for all $u \in V$.

\medskip\noindent{\em Plays and strategies.}
A \emph{play} on a game graph is an infinite sequence $\pat = \sseq$ of vertices 
such that $(v_\ell, v_{\ell+1}) \in E$ for all $\ell \ge 0$. The set of all plays
is denoted by $\Path$. Given a finite prefix $\pat \in V^* \cdot V_\pl$ of a 
play that ends at a player~$\pl$ vertex~$v$, a \emph{strategy}
$\straa: V^* \cdot V_\pl \rightarrow V$ of player~$\pl$
is a function that chooses a successor vertex $\straa(\pat)$ among the vertices
of $\Out(v)$. 
We denote by $\Sigma$ and $\Pi$ the set of all strategies of player~1 and player~2 respectively.
The play $\pat(v, \straa, \strab)$ is uniquely defined by a start 
vertex~$v$, a player~1 strategy~$\straa \in \Straa$, and a player~2 strategy
$\strab \in \Strab$
as follows: 
$v_0=v$ and for all $j \geq 0$, if $v_j \in V_1$, then $v_{j+1}=\straa(\langle 
v_1,\dots,v_j\rangle)$, 
and if $v_j \in V_2$, then $v_{j+1}=\strab(\langle v_1,\dots,v_j \rangle)$.

\medskip\noindent{\em Objectives.}
An objective $\obj$ is a set of plays that is winning for a player. 
We consider zero-sum games where for a player-1 objective~$\obj$ the complementary
objective $\Omega \setminus \obj$ is winning for player~2.
In this work we consider 
only \emph{prefix independent objectives}, for which the set of desired plays 
is determined by the set of vertices~$\Inf(\pat)$ that occur
\emph{infinitely often}  in a play~$\pat$. Given a target set $\target \subseteq V$,
a play~$\pat$ belongs to the \emph{Büchi objective} $\objsty{Büchi}{\target}$
iff $\Inf(\pat) \cap \target \ne \emptyset$. For the complementary \emph{co-Büchi
objective}
we have $\pat \in \objsty{coBüchi}{\target}$ iff $\Inf(\pat) \cap \target = \emptyset$.
A \emph{generalized \upbr{or conjunctive} Büchi objective} is specified by a set of $k$ target sets
$\target_\ell$ for $1 \le \ell \le k$ and is satisfied for a play~$\pat$ iff 
$\Inf(\pat) \cap \target_\ell \ne \emptyset$ for \emph{all} $1 \le \ell \le k$. Its 
complementary objective is the \emph{disjunctive co-Büchi objective} that 
is satisfied iff $\Inf(\pat) \cap \target_\ell = \emptyset$ for \emph{one of} the 
$k$ target sets.
A \emph{generalized reactivity-1} (GR(1)) objective is specified by 
two generalized Büchi objectives, $\bigwedge_{t=1}^{k_1} \objsty{Büchi}{L_t}$
and $\bigwedge_{\ell=1}^{k_2} \objsty{Büchi}{U_\ell}$, and is satisfied if 
whenever the first generalized Büchi objective holds, then also the second 
generalized Büchi objective holds; in other words, either 
$\bigvee_{t=1}^{k_1} \objsty{coBüchi}{L_t}$ holds, or 
$\bigwedge_{\ell=1}^{k_2} \objsty{Büchi}{U_\ell}$ holds.

In this paper we specify a game by a game graph $\game$ and an objective $\obj$ 
for player~1. Player~2 has the complementary objective $\Omega \setminus \obj$.

\smallskip\noindent\emph{Winning strategies and sets.}
A strategy~$\straa$ is winning for player~$\pl$ at a start vertex~$v$ if the 
resulting play is winning for player~$\pl$ irrespective of the strategy of 
his opponent, player~$\op$, i.e., $\omega(v,\sigma,\pi) \in \obj$ for all $\pi$. 
A vertex~$v$ belongs to 
the \emph{winning set} $W_\pl$ of player~$\pl$ if player~$\pl$ has a winning 
strategy from $v$. Every vertex is winning for exactly 
one of the two players~\cite{Mar75}{ (cf.\ Theorem~\ref{th:det})}. 
When an explicit reference to a specific game $(\game, \obj)$ is required,
we use $W_\pl(\game, \obj)$ to refer to the winning sets.
{
\begin{theorem}[\cite{Mar75}]\label{th:det}
	In graph games with prefix independent objectives
	the winning sets of the two players partition the vertex set~$V$. 
\end{theorem}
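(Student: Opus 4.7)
The plan is to establish the partition $V = W_1 \cup W_2$ by showing disjointness and coverage separately. Disjointness is immediate: if some $v$ belonged to $W_1 \cap W_2$, then there would exist witnessing strategies $\straa \in \Straa$ and $\strab \in \Strab$, and the unique induced play $\pat(v,\straa,\strab)$ would simultaneously lie in $\obj$ (because $\straa$ wins) and in $\Path \setminus \obj$ (because $\strab$ wins), a contradiction.

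For coverage, my plan is to reduce to Martin's Borel determinacy theorem. First I would endow $V^\omega$ with the product topology induced by the discrete topology on $V$, and verify that every prefix independent objective considered in this paper is a Borel set. For a single target set $\target$, the set $\{\pat : \Inf(\pat) \cap \target \neq \emptyset\}$ equals $\bigcap_{n \ge 0} \bigcup_{j \ge n} \{\pat : v_j \in \target\}$, which is a $G_\delta$ set; so B{\"u}chi objectives are Borel, co-B{\"u}chi objectives are the complementary $F_\sigma$ sets, generalized B{\"u}chi objectives are finite intersections of $G_\delta$ sets, and GR(1) objectives are finite Boolean combinations of the above. Martin's theorem then yields that from every starting vertex exactly one player has a winning strategy, which together with disjointness gives the claimed partition.

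The main obstacle is of course Martin's theorem itself, whose proof proceeds by transfinite induction over the Borel hierarchy using auxiliary unravellings and is far beyond what one could reproduce in this paper; I would treat it as a black box and cite it. A secondary subtlety worth addressing is that Martin's original formulation concerns infinite games on trees, whereas our game is played on a finite graph. This is resolved by unfolding $\game$ at the chosen start vertex~$v$ into the (infinite) tree of finite prefixes in $V^* \cdot V$; strategies in the unfolding are in bijection with strategies in $\Straa$ and $\Strab$ on $\game$, and the objective pulls back to a Borel set of branches, so Martin's theorem applies directly. No further case analysis on the specific shape of the objective (B{\"u}chi, generalized B{\"u}chi, GR(1)) is needed, since all of them were shown Borel above.
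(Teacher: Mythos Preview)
The paper gives no proof of this theorem at all; it is stated with a bare citation to Martin~\cite{Mar75} and used as a black box thereafter. Your proposal---disjointness by the obvious argument, coverage by invoking Martin's Borel determinacy after verifying that the relevant objectives are Borel---is exactly the intended justification behind that citation, spelled out in more detail than the paper itself provides. One small remark: the theorem as phrased in the paper speaks of arbitrary prefix-independent objectives, whereas Martin's theorem only covers Borel payoff sets; your decision to verify Borelness explicitly for each objective actually used (B\"uchi, co-B\"uchi, generalized B\"uchi, GR(1)) is therefore the right way to close this gap, and is more careful than the paper's own treatment.
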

} 
{For the analysis of our algorithms we further introduce the notions of 
 \emph{closed sets}, \emph{attractors}, and \emph{dominions}.}
 
\smallskip\noindent\emph{Closed sets.}
A set $U \subseteq V$ is \emph{$\pl$-closed} (in $\game$) if for all $\pl$-vertices 
$u$ in $U$ we have $\Out(u) \subseteq U$ and for all $\op$-vertices 
$v$ in $U$ there exists a vertex $w \in \Out(v) \cap U$.
Note that player~$\op$ can ensure that a play that currently ends in a $\pl$-closed 
set never leaves the $\pl$-closed set against any strategy of player~$\pl$ by 
choosing an edge $(v,w)$ with $w \in \Out(v) \cap U$ whenever the current 
vertex~$v$ is in $U \cap V_\op$~\cite{Zielonka98}.
Given a game graph $\game$ and a $\pl$-closed set $U$, we denote by $\game[U]$ 
the game graph induced by the set of vertices~$U$. Note that given that in $\game$
each vertex has at least one outgoing edge, the same property holds for $\game[U]$.
We further use the shortcut $\game \setminus X$ to denote $\game[V \setminus X]$.

\smallskip\noindent\emph{Attractors.}
In a game graph $\game$, a $\pl$-\emph{attractor} $\at_\pl(\game, U)$ of a set 
$U \subseteq V$ is the set of vertices from which player~$\pl$ has a strategy 
to reach $U$ against all strategies of player~$\op$~\cite{Zielonka98}. We have 
that $U \subseteq \at_\pl(\game, U)$. 
A $\pl$-attractor can be constructed inductively as follows: Let $R_0=U$; and 
for all $j \geq 0$ let 
{
\begin{equation*}\label{eq:attr}
	R_{j+1} = R_j \cup \set{v \in V_\pl \mid \Out(v) \cap R_j \neq \emptyset}
		      \cup \set{v \in V_\op \mid \Out(v) \subseteq R_j}.
\end{equation*}}
Then $\at_\pl(\game, U)\!=\!\bigcup_{j\ge 0}\!R_j$.

{The \emph{$\pl$-rank} of a vertex $v$ w.r.t.\ a set $U$ is given by $rank_\pl(\game,U,v)\!=\min\{j\mid v \in R_j\}$ if $v \in \at_\pl(\game, U)$
and is $\infty$ otherwise.}

\smallskip\noindent\emph{Dominions.}
A set of vertices $D \ne \emptyset$ is a player-$\pl$ \emph{dominion} if 
player~$\pl$ has a winning strategy from every vertex in $D$ that
also ensures only vertices in $D$ are visited. 
The notion of dominions was introduced by~\cite{JurdzinskiPZ08}. Note that 
a player-$\pl$ dominion is also a $\op$-closed set and the 
$\pl$-attractor of a player-$\pl$ dominion 
is again a player-$\pl$ dominion.

\smallskip
{The lemma below summarizes some well-known facts about closed sets, attractors, and winning sets.}
{\begin{lemma}\label{lem:attr}
The following assertions hold for game graphs~$\game$ where each vertex has 
at least one outgoing edge. The assertions referring to winning sets hold for 
graph games with prefix independent objectives. Let $X \subseteq V$.
\begin{compactenum}
{\item From each vertex of $\at_\pl(\game, X)$ player~$\pl$ has a memoryless 
strategy that stays within $\at_\pl(\game, X)$ to reach~$X$ against any 
strategy of player~$\op$~\cite{Zielonka98}.\label{sublem:attrstr}
\item The attractor $\at_\pl(\game, X)$ can be computed in 
$O(\sum_{v \in \at_\pl(\game, X)} \lvert\In(v)\rvert)$ 
time~\cite{Beeri80,Immerman81}.\label{sublem:attrtime}
}
\item The set $V \setminus \at_\pl(\game, X)$ is $\pl$-closed on 
$\game$~\cite[Lemma~4]{Zielonka98}.\label{sublem:complattr}
\item Let $X$ be $\pl$-closed on $\game$. Then $W_\op(\game[X]) \subseteq W_\op(\game)$~\cite[Lemma~4.4]{JurdzinskiPZ08}.\label{sublem:winclosed}
\item Let $X$ be a subset of the winning set $W_\pl(\game)$ of player~$\pl$ and 
let $A$ be its $\pl$-attractor $\at_\pl(\game, X)$. Then the winning set $W_\pl(\game)$
of the player~$\pl$ is the union of $A$ and the winning set 
$W_\pl(\game[V \setminus A])$, 
and the winning set $W_\op(\game)$ of the opponent~$\op$ is equal to 
$W_\op(\game[V \setminus A])$
~\cite[Lemma~4.5]{JurdzinskiPZ08}.
\label{sublem:subgraph}
\end{compactenum}
\end{lemma}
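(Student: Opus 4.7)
The plan is to treat the five items in turn; all are standard and attributed to Zielonka or Jurdzinski--Paterson--Zwick, so the sketches will be short and I will not reprove the cited results from scratch. For parts~\ref{sublem:attrstr} and~\ref{sublem:attrtime} I would define the memoryless \emph{attractor strategy} that from any attractor vertex of positive rank moves to a successor of strictly smaller rank (such a successor exists by the recurrence defining $R_{j+1}$), and verify that the rank strictly decreases along every consistent play, so $X$ is reached after at most $|\at_\pl(\game,X)|$ moves without leaving the attractor. For the running time I would describe a backwards BFS from~$X$ that, for each $\op$-vertex, maintains a counter of successors not yet in the attractor and adds the vertex once the counter hits zero; each edge is touched only once, yielding the bound $O(\sum_{v \in \at_\pl(\game,X)} |\In(v)|)$.

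Part~\ref{sublem:complattr} I would read off the attractor recurrence directly: a $\pl$-vertex outside the attractor must have \emph{no} successor in the attractor (otherwise it would have been added), and an $\op$-vertex outside the attractor must have \emph{some} successor outside it (otherwise all its successors would lie in the attractor and it too would be there). This gives both closure conditions at once.

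For part~\ref{sublem:winclosed} the plan is to lift any winning strategy of player~$\op$ from $\game[X]$ to~$\game$. The crucial observation is that $\pl$-closedness of~$X$ means every $\pl$-vertex in~$X$ has all successors in~$X$, so player~$\pl$ cannot leave~$X$ from a start vertex in~$X$ no matter what she does. Hence the resulting play agrees with a play of the subgame, is won by~$\op$ there, and by prefix independence of the objective is won by~$\op$ in~$\game$ as well.

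The substantive work is concentrated in part~\ref{sublem:subgraph}. Writing $A = \at_\pl(\game,X)$, I would first argue that $A \subseteq W_\pl(\game)$ by composing the attractor strategy of part~\ref{sublem:attrstr} with a winning strategy on $X \subseteq W_\pl(\game)$, switching once $X$ is reached. Next, part~\ref{sublem:complattr} gives that $V \setminus A$ is $\pl$-closed, so part~\ref{sublem:winclosed} yields $W_\op(\game[V \setminus A]) \subseteq W_\op(\game)$. The main obstacle is the reverse inclusion -- showing that vertices in $W_\pl(\game[V \setminus A])$ remain winning for player~$\pl$ in the full game~$\game$, because only then does Theorem~\ref{th:det}, applied to both~$\game$ and $\game[V \setminus A]$, pin down the claimed partition. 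To handle it I would let player~$\pl$ follow her subgame strategy as long as the play stays in $V \setminus A$, noting that $\pl$-closedness prevents her own moves from leaving; the only way the play can enter~$A$ is via an $\op$-move, after which player~$\pl$ switches to the winning strategy on~$A$ established above. In either scenario the play is won by player~$\pl$, and the claimed decomposition of $W_\pl(\game)$ and $W_\op(\game)$ follows.
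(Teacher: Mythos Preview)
The paper does not prove Lemma~\ref{lem:attr}; it is stated as a summary of well-known facts, with each item attributed to the cited references (Zielonka, Beeri, Immerman, Jurdzi\'nski--Paterson--Zwick) and no proof given in the paper itself. Your sketch is correct and reproduces exactly the standard arguments one finds in those sources: the rank-decreasing attractor strategy for~\ref{sublem:attrstr}, the backward-BFS-with-counters algorithm for~\ref{sublem:attrtime}, the direct contrapositive of the attractor recurrence for~\ref{sublem:complattr}, the strategy-lifting argument relying on $\pl$-closedness for~\ref{sublem:winclosed}, and the strategy-composition plus determinacy argument for~\ref{sublem:subgraph}. One minor remark: in part~\ref{sublem:subgraph} you establish both $W_\pl(\game[V\setminus A]) \subseteq W_\pl(\game)$ and $W_\op(\game[V\setminus A]) \subseteq W_\op(\game)$, but in fact either inclusion together with $A \subseteq W_\pl(\game)$ and determinacy in both games already forces the partition, so one of the two is redundant---though proving both does no harm.
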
}

\subsection{Conjectured Lower Bounds}\label{sec:conjectures}

While classical complexity results are based on assumptions about relationships between complexity classes,
e.g., $\P \ne \NP$, polynomial lower bounds are often based on widely believed,
conjectured lower bounds 
for well studied algorithmic problems.
We next discuss the popular conjectures that will be the basis for our lower bounds{ for generalized Büchi games}.

First, we consider conjectures on Boolean matrix multiplication~\cite{WilliamsW10,AbboudW14} and 
triangle detection~\cite{AbboudW14} in graphs, which build the basis for our lower bounds on dense graphs. 
A triangle in a graph is a triple $x, y, z$ of distinct vertices 
such that $(x, y), (y, z), (z, x) \in E$.

\begin{conjecture}[Combinatorial Boolean Matrix Multiplication Conjecture (BMM)]\label{conj:bmm}
There is no $O(n^{3 - \varepsilon})$ time combinatorial algorithm for computing the Boolean product of 
two $n \times n$ matrices for any $\varepsilon > 0$.
\end{conjecture}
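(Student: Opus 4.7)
The statement is formally a \emph{conjecture}, not a theorem, so I would not attempt a formal proof: a genuine proof would constitute a major breakthrough in fine-grained complexity, since it would separate combinatorial algorithms from the $n^{3-\varepsilon}$ regime and we currently have no superquadratic unconditional lower bound for Boolean matrix multiplication at all. The role of the statement in the paper is to serve as a working hypothesis from which the conditional lower bounds of Section~\ref{sec:lowerbounds} are derived. Accordingly the sensible ``proof proposal'' is really a motivation and a sanity check that the paper's reductions are compatible with the conjecture.

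For motivation I would first recall the state of the art for combinatorial algorithms, which after decades of work still runs in time $n^3$ divided only by polylogarithmic factors (the Four Russians method and its refinements), whereas all truly subcubic algorithms for matrix multiplication rely on algebraic identities and are explicitly excluded from the conjecture; a careful discussion of what ``combinatorial'' is taken to mean here is given in~\cite{HenzingerKNS15}. I would then note that a refutation would propagate via known fine-grained reductions to yield subcubic combinatorial algorithms for triangle detection, transitive closure, context-free grammar parsing, and a broad class of dynamic graph problems, which is why BMM has acquired the status of a pillar of fine-grained complexity, on the same footing as SETH and the $3$SUM conjecture.

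For the sanity check I would verify that the reductions used later in the paper to transfer BMM-hardness to generalized Büchi games are themselves combinatorial, i.e., that they do not covertly invoke fast matrix multiplication. Since those reductions will be syntactic graph constructions of near-linear size that manipulate only vertex and edge sets, this is essentially immediate, but it is the one point that really needs to be checked in order for the conjecture to be a meaningful basis for the lower bounds stated in the introduction.

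The main obstacle, unsurprisingly, is not anything internal to the paper but the conjecture itself: proving any superlinear lower bound against combinatorial algorithms is presently out of reach, so this statement should be read as an axiom of the framework rather than as a theorem with a hidden proof. What the paper has to do is not to establish the conjecture but to faithfully relay it: build reductions that are both combinatorial and tight enough that a hypothetical $O(n^{3-\varepsilon})$ algorithm for generalized Büchi games would compose with them to contradict BMM.
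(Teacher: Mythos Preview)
Your assessment is correct and matches the paper's treatment: the statement is a conjecture, the paper offers no proof of it, and its role is purely as a hypothesis underpinning the conditional lower bounds in Section~\ref{sec:lowerbounds}. The paper simply states the conjecture alongside STC, SETH, and OVC with a brief pointer to the literature; your added motivation and sanity-check discussion go somewhat beyond what the paper itself provides, but the core point---that there is nothing to prove here---is exactly right.
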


\begin{conjecture}[Strong Triangle Conjecture (STC)]\label{conj:triangle}
There is {no $O(\min\{n^{\omega - \varepsilon},$ $m^{2 \omega/(\omega + 1) - \varepsilon}\})$ expected time algorithm and }no
$O(n^{3 - \varepsilon})$ time combinatorial algorithm that can detect whether a graph contains a triangle for any $\varepsilon > 0${, 
where $\omega < 2.373$ is the matrix multiplication exponent}. 
\end{conjecture}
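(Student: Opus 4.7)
The statement is a widely-believed open conjecture rather than a result with a known proof, so my ``proof proposal'' can only aim to support its plausibility by reducing each of its two parts to better-understood hardness assumptions.

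For the combinatorial lower bound of $\Omega(n^{3-\varepsilon})$, the plan is to establish equivalence with Conjecture~\ref{conj:bmm} (BMM) up to polylogarithmic factors. One direction is immediate: given a graph with adjacency matrix~$A$, compute $A \cdot A$ via any combinatorial BMM algorithm and scan for an index pair $(i,j)$ with $A[i,j]=1$ and $(A \cdot A)[i,j] \ge 1$; a triangle exists iff such a pair exists. Hence a faster combinatorial triangle algorithm would follow from a faster combinatorial BMM algorithm. The more informative direction, from BMM to triangle detection, proceeds via a tripartite gadget: given $n \times n$ matrices $A,B$, build classes $X,Y,Z$ of size~$n$ with edges $(x_i,y_j)$ iff $A[i,j]=1$ and $(y_j,z_k)$ iff $B[j,k]=1$, then for any candidate set $S \subseteq Z \times X$ add edges $(z_k,x_i)$ for $(k,i)\in S$; a triangle exists iff $S$ intersects the support of $AB$. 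One then carves out the support of the product by divide-and-conquer over~$S$, invoking the oracle $O(\log^2 n)$ times per recovered nonzero entry. Together the two reductions yield equivalence, so the combinatorial part of STC is as plausible as BMM.

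For the algebraic lower bound $\Omega(\min\{n^{\omega-\varepsilon}, m^{2\omega/(\omega+1)-\varepsilon}\})$, the plan is to verify that the bound matches the best known upper bounds. The $n^{\omega}$ term is attained by squaring the adjacency matrix using fast matrix multiplication and checking for a witness along an edge; the $m^{2\omega/(\omega+1)}$ term is attained by the Alon--Yuster--Zwick splitting that handles high- and low-degree vertices separately. Both upper bounds have stood for decades, and several subcubic equivalences connect triangle detection to other fundamental problems, giving further evidence that they are optimal.

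The main obstacle is unavoidable: turning this proposal into a proof would require an unconditional super-linear lower bound for a concrete polynomial-time problem, which is far beyond current techniques. The two reductions above therefore constitute not a proof of STC but a transfer of belief, anchoring the combinatorial part to Conjecture~\ref{conj:bmm} and the algebraic part to the folklore conviction that the long-standing matrix multiplication upper bounds for triangle detection are essentially tight.
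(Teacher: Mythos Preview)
You correctly recognize that this is a conjecture, not a provable statement, and the paper treats it exactly the same way: it states STC as an assumption and offers no proof, only the remark that the combinatorial part of STC is equivalent to BMM by the result of Vassilevska~Williams and Williams. Your sketch of that equivalence and of the Alon--Yuster--Zwick upper bound goes beyond what the paper provides but is consistent with it; there is nothing to compare against and nothing to correct.
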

{By a result of 
Vassilevska~Williams and Williams~\cite{WilliamsW10}, we have that 
BMM is equivalent to the combinatorial part of STC. Moreover, if we do not restrict 
ourselves to combinatorial algorithms, STC still gives a super-linear lower bound.}

Second, we consider the Strong Exponential Time Hypothesis~\cite{ImpagliazzoPZ01,CalabroIP09}
and the Orthogonal Vectors Conjecture~\cite{AbboudWW16}, the former dealing with satisfiability in propositional logic and the latter with 
the \emph{Orthogonal Vectors Problem}.

\emph{The Orthogonal Vectors Problem} (OV). Given two sets $S_1, S_2$ of $d$-bit 
vectors with $|S_1 |, |S_2| \leq N$ and $d \in \Theta(\log N)$, are there $u \in S_1$ 
and $v \in S_2$ such that $\sum_{i=1}^{d} u_i \cdot  v_i = 0$?

\begin{conjecture}[Strong Exponential Time Hypothesis (SETH)]\label{conj:seth}
For each  $\varepsilon >0$ there is a $k$ such that k-CNF-SAT on $n$ variables and $m$ clauses
cannot be solved in time~$O(2^{(1-\varepsilon)n} \operatorname{poly}(m))$.
\end{conjecture}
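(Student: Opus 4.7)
The Strong Exponential Time Hypothesis is a widely believed but unresolved \emph{conjecture} of fine-grained complexity theory, not a theorem to be proved. There is accordingly no proof to propose, and none is expected from the authors either: the role of this statement at this point in the paper is purely that of a hypothesis on which the conditional lower bounds of Section~\ref{sec:lowerbounds} will rest. All that should appear here is a precise formulation together with the citations already provided immediately above.

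If one were asked to \emph{motivate} rather than prove SETH, the plan would be to survey the state of the art on $k$-CNF-SAT. The best known algorithms achieve savings of the form $2^{(1 - c_k)\,n}$ over exhaustive search, where the constant $c_k$ tends to~$0$ as $k \to \infty$; despite decades of effort, no algorithm is known whose savings~$\varepsilon$ is independent of the clause width~$k$, which is exactly what SETH rules out. A refutation of SETH would, moreover, imply nontrivial circuit lower bounds through the algorithm-to-lower-bound framework of Williams, so the accumulated negative evidence is substantial. This body of evidence is the honest substitute for a proof and is the reason the statement is introduced as a conjecture rather than delayed or weakened.

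The real forward-looking work is therefore not to establish SETH but to \emph{use} it: in Section~\ref{sec:lowerbounds} the authors will need to construct a reduction from $k$-CNF-SAT (or, more conveniently, from the Orthogonal Vectors problem defined just above the conjecture, whose conditional hardness follows from SETH via a standard reduction) to the problem of solving a generalized Büchi game on a sparse graph. The main obstacle going forward will therefore be engineering that reduction so that the resulting game graph has $n^{1+o(1)}$ edges and singleton target sets, as promised in the introduction; the present conjecture requires no further argument beyond the statement given.
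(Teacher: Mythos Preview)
Your assessment is correct: the statement is a conjecture, not a theorem, and the paper offers no proof of it either---it is simply stated (with the citations \cite{ImpagliazzoPZ01,CalabroIP09}) and then used via OVC to derive the conditional lower bounds in Section~\ref{sec:lowerbounds}. Your forward-looking remarks about the reduction to Orthogonal Vectors and the sparse-graph construction accurately anticipate what the paper does next.
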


\begin{conjecture}[Orthogonal Vectors Conjecture (OVC)]\label{conj:ov}
There is no $O(N^{2-\varepsilon})$ time algorithm for the Orthogonal Vectors Problem for any $\varepsilon > 0$.
\end{conjecture}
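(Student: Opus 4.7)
The plan is to derive Conjecture~\ref{conj:ov} (OVC) from Conjecture~\ref{conj:seth} (SETH), following the line of argument of R.\ Williams. Concretely, I would show that an $O(N^{2-\varepsilon})$-time algorithm for OV (with $d \in \Theta(\log N)$) would yield an $O(2^{(1-\varepsilon')n}\operatorname{poly}(m))$-time algorithm for $k$-CNF-SAT for every constant $k$ and some $\varepsilon' > 0$, contradicting SETH.

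The starting point is a $k$-CNF formula~$\phi$ with $n$ variables and $m$ clauses. I would split the variable set into two halves $V^{(1)}, V^{(2)}$ of size $n/2$. For each of the $2^{n/2}$ partial assignments $\alpha$ to $V^{(1)}$ I would form a vector $a_\alpha \in \{0,1\}^{m}$ with $(a_\alpha)_i = 0$ iff $\alpha$ already satisfies clause~$C_i$ (under its projection to $V^{(1)}$), and analogously a vector $b_\beta \in \{0,1\}^{m}$ for each assignment $\beta$ to $V^{(2)}$. Orthogonality of $a_\alpha$ and $b_\beta$ would translate directly into the fact that every clause is satisfied by at least one of $\alpha, \beta$, so $\phi$ is satisfiable iff the resulting OV instance has an orthogonal pair.

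The main obstacle is the dimension: this construction produces vectors of dimension $m$, which need not be $O(\log N)$ with $N = 2^{n/2}$. To close this gap I would invoke the sparsification lemma of Impagliazzo, Paturi, and Zane: for every $\delta > 0$ and every fixed $k$, a $k$-CNF formula on $n$ variables can be written, in time $2^{\delta n}\operatorname{poly}(n)$, as an OR of at most $2^{\delta n}$ $k$-CNF formulas each having $O(n)$ clauses (with constants depending on $k$ and~$\delta$). Applying the vector construction to each sparsified subinstance yields an OV instance with $N = 2^{n/2}$ and dimension $O(n) = \Theta(\log N)$, as required by the conjecture.

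Combining the pieces, a hypothetical $O(N^{2-\varepsilon})$-time OV algorithm would decide each subinstance in time $2^{(1-\varepsilon/2)n}\operatorname{poly}(n)$, and summing over the at most $2^{\delta n}$ subinstances gives total time $2^{(1-\varepsilon/2+\delta)n}\operatorname{poly}(m)$. I expect the only delicate step to be choosing $\delta$ small enough in terms of~$\varepsilon$ so that the exponential savings survive the sparsification blow-up; picking $\delta = \varepsilon/4$ yields a net saving of $2^{-\varepsilon n/4}$ and thereby contradicts SETH. Everything else in the argument is a mechanical construction.
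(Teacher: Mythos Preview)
The statement is a \emph{conjecture}, not a theorem: the paper does not prove OVC, it assumes it. There is therefore no ``paper's own proof'' to compare against. What you have written is not a proof of OVC but a proof that SETH implies OVC; the paper acknowledges this implication in one sentence and simply cites Williams~\cite{Williams05} for it rather than reproducing the argument.

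That said, your sketch of the Williams reduction is accurate: the split--and--enumerate construction combined with the Impagliazzo--Paturi--Zane sparsification lemma is exactly the standard route, and your parameter choice $\delta = \varepsilon/4$ works. Just be clear that this establishes the conditional statement ``SETH $\Rightarrow$ OVC'', not OVC itself; OVC remains open unconditionally, which is precisely why the paper lists it as a conjecture.
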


{By a result of Williams~\cite{Williams05} 
we know that SETH implies OVC}, i.e.,
whenever a problem is hard assuming OVC, it is also hard when assuming SETH.
Hence, it is preferable to use OVC for proving lower bounds.
Finally, to the best of our knowledge, no such relations between the former two conjectures and the latter two conjectures 
are known.
\begin{remark}
	The conjectures that no \emph{polynomial} improvements over the best known
	running times are possible do not exclude improvements by sub-polynomial 
	factors such as poly-logarithmic factors or factors of, e.g., $2^{\sqrt{\log n}}$
	as in \cite{Williams14a}.
\end{remark}

\section{Algorithms for Generalized Büchi Games}\label{sec:GenBuchiAlg}
For generalized Büchi games we first present the basic algorithm 
that follows from the results of~\cite{EJ91,McNaughton93,Zielonka98}.
The basic algorithm{ (cf.\ Algorithm \ref{alg:ConjBuchiBasic})} runs in
time $O(knm)$. We then improve it to an $O(k \cdot n^2)$-time algorithm by 
exploiting ideas from the $O(n^2)$-time algorithm for Büchi games 
in~\cite{ChatterjeeH12}. 
The basic algorithm is fast for instances where one Büchi set is  small, i.e., the algorithm runs in time 
$O(k \cdot min_{1 \leq \ell \leq k} b_\ell \cdot m)$~time, where $b_\ell=|\target_\ell|$.

{
\smallskip\noindent\emph{Reduction to Büchi Games.} 
Another way to implement generalized Büchi games is by a reduction to Büchi games as follows (see also \cite{BloemCGHJ10}).
Make $k$ copies $V^\ell$, $1 \le \ell \le k$, 
of the vertices of the original game graph and draw 
an edge $(v^j,u^j)$ if $(v,u)$ is an edge in the original graph and $v \not \in \target_j$, and
an edge $(v^j,u^{j\oplus 1})$ if $(v,u)$ is an edge in the original graph and $v \in \target_j$
(where $j\oplus 1= j+1$ for $j<k$ and $k\oplus 1 = 1$). Finally, pick the Büchi set $\target_\ell$ of minimal size and
make its copy $\target_\ell^\ell$ in $V^\ell$ the target set for the Büchi game.
This reduction results in another $O(k \cdot min_{1 \leq \ell \leq k} b_\ell \cdot m)$~time algorithm 
when combined with the basic algorithm for Büchi and in
an $O(k^2 n^2)$~time algorithm when combined with the $O(n^2)$ time algorithm for Büchi~\cite{ChatterjeeH12}.}

{\smallskip\noindent\emph{Notation.}} Our algorithms iteratively identify sets of vertices that are winning for 
player~2, i.e., player-2 dominions, and remove them from the graph.
{In the algorithms and their analysis we denote the sets in the} $j$th-iteration 
with superscript~$j$, in particular
$\game^1 = \game$, where $\game$ is the input game graph,
$G^j$ is the graph of $\game^j$, $V^j$ is the 
vertex set of $G^j$, and $\target^j_\ell = V^j \cap T_\ell$.
We also use $\set{\target^j_\ell}$ to denote the list of Büchi sets $(\target^j_1, \target^j_2, \dots, \target^j_k)$,
in particular when updating all the sets in a uniform way.

{\subsection{Basic Algorithm}}
For each set~$U$ that is closed for player 1 we have that from each vertex $u\in U$
player~2 has a strategy to ensure that the play never 
leaves $U$~\cite{Zielonka98}. Thus, if there is a Büchi set $\target_\ell$ with 
$\target_\ell \cap U = \emptyset$, then the set~$U$ is a player-2 dominion.
Moreover, if $U$ is a player-2 dominion, also the attractor $\att(\game, U)$ of $U$ 
is a player-2 dominion.
The basic algorithm{ (cf.\ Algorithm \ref{alg:ConjBuchiBasic})} proceeds as follows. 
It iteratively computes vertex sets~$S^j$ closed for player~1 that do not 
intersect with one of the Büchi sets. If such a player-2 dominion~$S^j$ is found,
then all vertices of $\att(\game^j, S^j)$ are marked as winning for player~2 and
removed from the game graph; the remaining game graph is denoted by $\game^{j+1}$. 
To find a player-2 dominion~$S^j$, for each $1 \leq \ell \leq k$ the 
attractor $Y^j_\ell = \ato(\game^j, \target^j_\ell)$ of the Büchi set
$\target^j_\ell$ is determined.
If for some~$\ell$ the complement of $Y^j_\ell$
is not empty, then we assign $S^j = V^j \setminus Y^j_\ell$ for the smallest
such~$\ell$.
The algorithm terminates if in some iteration $j$ for each $1 \leq \ell \leq k$ the attractor $Y^j_\ell$ 
contains all vertices of~$V^j$.
In this case the set~$V^j$ is returned as the winning set 
of player~1. The winning strategy of player~1 from these vertices is then
a combination of the attractor strategies to the sets $\target^j_\ell$.

{
\begin{algorithm}
	\SetAlgoRefName{GenBuchiGameBasic}
	\caption{Generalized Büchi Games in $O(k \cdot b_1 \cdot m)$ Time}
	\label{alg:ConjBuchiBasic}
	\SetKwInOut{Input}{Input}
	\SetKwInOut{Output}{Output}
	\SetKw{break}{break}
	\BlankLine
	\Input{Game graph $\game=((V, E),(\vo,\vt))$ and objective $\bigwedge_{1 \le \ell \le k} \objsty{Büchi}{\target_\ell}$}
	\Output
	{
	  Winning set of player~1 
	}
	\BlankLine
	$\game^1 \gets \game$\;
	$\set{\target^1_\ell} \gets \set{\target_\ell}$\; 
	$j \gets 0$\;
	\Repeat
	  {
	    $D^j = \emptyset$
	  }
	  {
	  $j \gets j+1$\;
	    \For{$1\leq \ell \leq k$}
	      {
					$Y^j_\ell \gets \ato(\game^j, \target^j_\ell)$\;
					$S^j \gets V^j \setminus Y^j_\ell$\;
					\lIf{$S^j \ne \emptyset$}{
						\break
					}
	      }
	    $D^j \gets \att(\game^j, S^j)$\;
	    $\game^{j+1} \gets \game^j \setminus D^j$\;
	    $\set{\target^{j+1}_\ell}\gets \set{\target^j_\ell \setminus D^j}$\; 
	  }
	  \Return $V^j$\;
\end{algorithm}

\begin{proposition}[Runtime Algorithm~\ref{alg:ConjBuchiBasic}]
	 The basic algorithm for generalized Büchi games terminates in $O(k \cdot b_1 \cdot m)$~time, where
	 $b_1=|T_1|$, and thus also in $O(knm)$~time .
\end{proposition}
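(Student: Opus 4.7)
The plan is to bound the runtime by separately analyzing the per-iteration cost and the total number of iterations of the outer repeat-loop.

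For the per-iteration cost, observe that in iteration $j$ the inner for-loop performs at most $k$ player-1 attractor computations of the form $Y^j_\ell = \ato(\game^j, T^j_\ell)$ in the subgame $\game^j$, each of which runs in $O(m)$ time by Lemma~\ref{lem:attr}(\ref{sublem:attrtime}). The player-2 attractor $D^j = \att(\game^j, S^j)$ likewise costs $O(m)$, and updating the target sets via $T^{j+1}_\ell = T^j_\ell \setminus D^j$ costs $O(k \cdot |D^j|)$, which sums to $O(kn)$ over all iterations. Hence each outer iteration costs $O(km)$.

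For the number of outer iterations, the plan is to argue that the loop executes $O(b_1)$ times. A key structural observation is that whenever the for-loop in iteration $j$ breaks at some index $\ell_j$, the removed set $D^j = \att(\game^j, S^j) \supseteq S^j$ satisfies $V^{j+1} = V^j \setminus D^j \subseteq Y^j_{\ell_j}$, and moreover $V^{j+1}$ is $2$-closed in $\game^j$ by Lemma~\ref{lem:attr}(\ref{sublem:complattr}). A Büchi-style argument---exploiting that $D^j$ equals the player-2 winning set of the Büchi game on $\game^j$ with target $T^j_{\ell_j}$, so every vertex of $V^{j+1}$ is player-1 winning for that Büchi game---then yields $\ato(\game^{j+1}, T^{j+1}_{\ell_j}) = V^{j+1}$, so the for-loop in iteration $j+1$ cannot break at the same index $\ell_j$, forcing an alternation of break indices.

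The main obstacle is converting this alternation into an $O(b_1)$ bound on iterations. The natural monovariant $|T^j_1|$ does not strictly decrease in every productive iteration, since a break at some $\ell_j > 1$ may produce a dominion $D^j$ disjoint from $T^j_1$. I would therefore design a more refined potential, weighing the sizes $|T^j_1|, \ldots, |T^j_k|$ together with the current break index so that each productive iteration forces a strict decrease while the overall potential remains $O(b_1)$; the amortization exploits the minimality $|T_1| = b_1 = \min_\ell b_\ell$. Combining the per-iteration cost $O(km)$ with the $O(b_1)$ bound on iterations yields the claimed $O(k \cdot b_1 \cdot m)$ running time, and $O(knm)$ follows immediately from $b_1 \le n$.
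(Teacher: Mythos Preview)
Your per-iteration cost analysis is fine, but the iteration-count argument contains a false step and an unfilled gap.

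The false step is the claim that ``$D^j$ equals the player-2 winning set of the B\"uchi game on $\game^j$ with target $T^j_{\ell_j}$.'' The set $D^j = \att(\game^j, V^j \setminus \ato(\game^j, T^j_{\ell_j}))$ is exactly one round of the classical iterative B\"uchi algorithm; in general it is a \emph{strict} subset of the player-2 B\"uchi winning set. Removing it can leave vertices in $V^{j+1}$ from which player~1 still cannot reach $T^{j+1}_{\ell_j}$ in $\game^{j+1}$---just consider an ordinary B\"uchi game (so $k=1$) where the classical algorithm needs many rounds; there the break index is always $\ell_j = 1$. Hence your alternation claim fails.

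The unfilled gap is that even if alternation held, you never construct the ``more refined potential''; you only assert one exists. No obvious combination of $|T^j_1|,\ldots,|T^j_k|$ is bounded by $O(b_1)$ and strictly decreasing, and the assumption $b_1 = \min_\ell b_\ell$ you invoke is not part of the proposition's hypotheses.

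The paper's argument is simpler and avoids both issues: it uses $|T^j_1|$ itself as the measure and shows by a four-way case split (on whether $\ato(\game^j, T^j_1) = V^j$ and whether $D^j \cap T^j_1 \neq \emptyset$) that $|T^j_1|$ strictly decreases at least every second iteration. The only subtle case is when the break is at $\ell = 1$ yet $D^j \cap T^j_1 = \emptyset$. There one argues $D^j = S^j$: any $v \in D^j \setminus S^j$ would lie in $\ato(\game^j, T^j_1)$, and since player~1 can force reaching $T^j_1$ from $v$ while player~2 can force staying in the $1$-closed set $D^j$, the set $D^j$ would have to meet $T^j_1$. From $D^j = S^j$ one gets $V^{j+1} = \ato(\game^j, T^j_1)$ with $T^{j+1}_1 = T^j_1$, so in the next iteration $\ato(\game^{j+1}, T^{j+1}_1) = V^{j+1}$; the break (if any) is then at some $\ell > 1$, and since $T^{j+1}_1 \subseteq \ato(\game^{j+1}, T^{j+1}_\ell)$ would force $\ato(\game^{j+1}, T^{j+1}_\ell) = V^{j+1}$, at least one vertex of $T_1$ is removed. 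This gives at most $2b_1 + 2$ iterations.
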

\begin{proof}
	In each iteration of the repeat-until loop at most $k+1$ attractor computations
	are performed, which can each be done in $O(m)$ time. 
	We next argue that the repeat-until loop terminates after at most $2 b_1 +2$ iterations.
	We use that \upbr{a} a player-2 edge from $Y^j_\ell = \ato(\game^j, \target^j_\ell)$
	to $V^j \setminus Y^j_\ell$ has to originate from a vertex of $\target^j_\ell$
	and \upbr{b} if a player-1 attractor contains a vertex, then it contains also 
	the player-1 attractor of this vertex.
	In each iteration we have one of the following situations:
	\begin{enumerate}
	 \item $S^j = \emptyset$: The algorithm terminates.
	 \item $\ato(\game^j, \target^j_1) = V^j$ and $\ato(\game^j, \target^j_\ell) \not= V^j$ for some $\ell > 1$:	       
	       We have that $\target^j_1 \not\subseteq \ato(\game^j, \target^j_\ell)$
	       as $\target^j_1 \subseteq \ato(\game^j, \target^j_\ell)$ would imply that
	       also $\ato(\game^j, \target^j_1) = V^j \subseteq \ato(\game^j, \target^j_\ell)$
	       which is in contradiction to the assumption.
	       Thus we obtain $|\target^{j+1}_1|<|\target^j_1|$.
	 \item $\ato(\game^j, \target^j_1) \not= V^j$ and $D^j \cap \target^j_1\not=\emptyset$:
	       We immediately get $|\target^{j+1}_1|<|\target^j_1|$.
	 \item $\ato(\game^j, \target^j_1) \not= V^j$ and $D^j \cap \target^j_1=\emptyset$:
               In this case $D^j = \att(\game^j, S^j)=S^j$ and thus in the next iteration we have either situation (1) or (2).
               Notice that, for each vertex $v \in \ato(\game^j, \target^j_1)$ player 1 
               has a strategy to reach $\target^j_1$ and thus for $v$ to be in $D^j$ the set $D^j$
               has to contain at least one vertex of $\target^j_1$.
               
	\end{enumerate}
	By the above we have that $\target^{j}_1$ is decreased in at least every second iteration of the loop.
	For $\target^{j}_1 =\emptyset$ we have $\ato(\game^j, \target^j_1)= \emptyset$ and thus $V^{j+1}=\emptyset$,
	which terminates the algorithm in the next iteration.
	Thus we have 
	that each iteration takes time $O(km)$ and there are $2b_1+2$, i.e., $O(b_1)$, iterations.
\end{proof}
As we can always rearrange the Büchi sets such that $b_1=\min_{1 \leq \ell \leq k} b_\ell$ 
this gives an $O(k \cdot \min_{1 \leq \ell \leq k} b_\ell \cdot m)$ algorithm for generalized Büchi games.

For the final game graph $\game^j$ we have that all vertices are in all the player-1 attractors of Büchi sets $\target_\ell$.
Thus player~1 can win the game by following one attractor strategy until the corresponding Büchi set is reached and 
then switching to the attractor strategy of the next Büchi set.

\begin{proposition}[Soundness Algorithm~\ref{alg:ConjBuchiBasic}]\label{prop:soundness_ConjBuchiBasic}
	Let $V^{j^*}$ be the set of vertices returned by the algorithm.
Each vertex in $V^{j^*}$ is winning for player~1.
\end{proposition}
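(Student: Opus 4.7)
The plan is to prove the statement in three stages: first show that $V^{j^*}$ is a region to which player~1 can confine every play in the original game $\game$, then use the termination condition to extract, for each Büchi set, a reachability strategy inside $\game^{j^*}$, and finally splice these local strategies into a single player-1 strategy that satisfies the generalized Büchi objective.

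For the first stage I would show by induction on~$j$ that every set $V^j$ produced by the algorithm is $2$-closed in $\game$. The base case $V^1 = V$ is immediate. For the induction step, $V^{j+1} = V^j \setminus \att(\game^j, S^j)$ by construction, so Lemma~\ref{lem:attr}(\ref{sublem:complattr}) applied with $\pl = 2$ yields that $V^{j+1}$ is $2$-closed in $\game^j$. Combining this with the inductive hypothesis that $V^j$ is $2$-closed in $\game$ promotes $2$-closedness to $\game$ after a short check of the definition at player-1 and player-2 vertices of $V^{j+1}$.

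For the second stage I exploit that the loop terminates precisely when $\ato(\game^{j^*}, \target^{j^*}_\ell) = V^{j^*}$ for every $1 \le \ell \le k$. By Lemma~\ref{lem:attr}(\ref{sublem:attrstr}), each of these attractor computations yields a memoryless player-1 strategy $\sigma_\ell$ in $\game^{j^*}$ that, from any vertex of $V^{j^*}$, forces the play to reach $\target^{j^*}_\ell$ within a bounded number of steps (bounded by the maximal attractor rank) against every player-2 response.

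For the third stage I would assemble the composite strategy using a finite-memory counter $\ell \in \{1, \ldots, k\}$: follow $\sigma_\ell$ until the play enters $\target^{j^*}_\ell$, then increment $\ell$ modulo $k$ and continue. Each phase terminates in finitely many moves by stage two, so every target $\target_\ell$ is visited infinitely often and the generalized Büchi objective is satisfied. The main obstacle I expect is the transfer of the $\sigma_\ell$'s from the subgraph $\game^{j^*}$ back to the full game $\game$, since player~2 in $\game$ is not a priori restricted to edges inside $V^{j^*}$. The $2$-closedness established in the first stage is exactly what bridges this gap: it forces every successor of a player-2 vertex in $V^{j^*}$ to lie in $V^{j^*}$, so no player-2 response can ever leave $V^{j^*}$, and the local strategies $\sigma_\ell$ remain valid responses at every step.
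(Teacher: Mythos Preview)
Your proposal is correct and follows essentially the same approach as the paper's proof: both establish that $V^{j^*}$ is $2$-closed in $\game$ (the paper does this by applying Lemma~\ref{lem:attr}(\ref{sublem:complattr}) inductively, exactly as you outline), then use the termination condition $\ato(\game^{j^*},\target^{j^*}_\ell)=V^{j^*}$ to obtain attractor strategies, and finally cycle through these with a counter to build the winning strategy. The only minor point the paper makes explicit that you leave implicit is the observation that either $V^{j^*}=\emptyset$ or every $\target^{j^*}_\ell$ is nonempty, which ensures the attractor strategies are genuinely reaching something; this follows immediately from your stage-two hypothesis, so no real gap.
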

\begin{proof}
	Let the $j^*$-th iteration be the last iteration of the algorithm. 
	We have $S^{j^*} = \emptyset$. Thus each vertex of $V^{j^*}$
	is contained in $\ato(\game^{j^*}, \target^{j^*}_\ell)$ for each $1 \le \ell \le k$.
	Additionally, either $V^{j^*} = \emptyset$ or 
	$\target^{j^*}_\ell \ne \emptyset$ for all $1 \le \ell \le k$.
	Further we have that $V^{j^*}$ is closed
	for player~2 as only player~2 attractors were removed from $V$ to obtain
	$V^{j^*}$ (i.e., we apply Lemma~\ref{lem:attr}(\ref{sublem:complattr}) 
	inductively). Hence player~1 has the following winning strategy (with memory) 
	on the vertices of $V^{j^*}$: On the vertices of $V^{j^*} \setminus 
	\bigcap_{\ell = 1}^k \target^{j^*}_\ell$ first follow the attractor strategy 
	{(Lemma~\ref{lem:attr}(\ref{sublem:attrstr}))}
	for $\target^{j^*}_1$ until a vertex of $\target^{j^*}_1$ is reached, then
	the attractor strategy for $\target^{j^*}_2$ until a vertex of 
	$\target^{j^*}_2$ is reached and so on until the set $\target^{j^*}_k$ 
	is reached; then restart with $\target^{j^*}_1$. 
	On the vertices of $\bigcap_{\ell = 1}^k \target^{j^*}_\ell 
	\cap V_1$ player~1 can pick any outgoing edge whose endpoint is in $V^{j^*}$.
	Since $V^{j^*}$ is closed for player~2 and $\target^{j^*}_\ell \ne \emptyset$ 
	for all $1 \le \ell \le k$, this strategy exists,
	never leaves the set $V^{j^*}$, and satisfies the generalized Büchi objective.
\end{proof}

For completeness we use that each 1-closed set that avoids one Büchi set is winning for player 2 and that, 
by Lemma~\ref{lem:attr}(\ref{sublem:subgraph}), we can remove such sets from the game graph.

\begin{proposition}[Completeness Algorithm~\ref{alg:ConjBuchiBasic}]
	Let $V^{j^*}$ be the set returned by the algorithm.
	Player~2 has a winning strategy from each vertex in $V \setminus V^{j^*}$.
\end{proposition}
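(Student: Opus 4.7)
The plan is to show that each iteration removes only vertices that are actually winning for player~2 in the original game $\game$, so the removed set $V \setminus V^{j^*}$ is contained in $W_2(\game)$. This hinges on (i) the invariant that every $V^j$ is $1$-closed in $\game$ and (ii) the fact that the set $S^j$ picked in each iteration is a player-2 dominion in the subgame $\game^j$, so its $\att$-closure $D^j$ is too.

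First I would establish by induction on $j$ that $V^j$ is $1$-closed in $\game$. The base case $V^1 = V$ is trivial. For the step, note that Lemma~\ref{lem:attr}(\ref{sublem:complattr}) applied in $\game^j$ gives that $V^{j+1} = V^j \setminus \att(\game^j,S^j)$ is $1$-closed in $\game^j$; combining this with the inductive hypothesis that all $\game$-successors of a player-1 vertex in $V^j$ already lie in $V^j$ yields that these successors even lie in $V^{j+1}$, so $V^{j+1}$ is $1$-closed in $\game$. This invariant justifies viewing $\game^j = \game[V^j]$ as a legitimate subgame and, via Lemma~\ref{lem:attr}(\ref{sublem:winclosed}), will let us transfer winning conclusions from $\game^j$ back to $\game$.

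Next I would argue that, in every iteration $j < j^*$ where $S^j \ne \emptyset$, the set $S^j = V^j \setminus \ato(\game^j,\target^j_\ell)$ is contained in $W_2(\game^j)$. By Lemma~\ref{lem:attr}(\ref{sublem:complattr}) in $\game^j$ the set $S^j$ is $1$-closed in $\game^j$, so from any starting vertex in $S^j$ player~2 has a memoryless strategy that keeps the play inside $S^j$ forever. Because $\target^j_\ell \subseteq \ato(\game^j,\target^j_\ell)$ we have $S^j \cap \target^j_\ell = \emptyset$, and since $S^j \subseteq V^j$ this means $S^j \cap \target_\ell = \emptyset$ as well; hence any such play never visits $\target_\ell$, violating the $\ell$-th conjunct of the generalized Büchi objective and winning for player~2. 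Consequently $S^j$ is a player-2 dominion in $\game^j$, and by the remark in the preliminaries that the $\op$-attractor of a player-$\op$ dominion is again a player-$\op$ dominion, so is $D^j = \att(\game^j,S^j) \subseteq W_2(\game^j)$. Applying Lemma~\ref{lem:attr}(\ref{sublem:winclosed}) with $X = V^j$ (which is $1$-closed in $\game$ by the invariant above) gives $W_2(\game^j) \subseteq W_2(\game)$, hence $D^j \subseteq W_2(\game)$.

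Finally, since the vertices removed by the algorithm satisfy $V \setminus V^{j^*} = \bigcup_{j=1}^{j^*-1} D^j$, every vertex outside $V^{j^*}$ is winning for player~2, which is the desired statement. The only real subtlety is the bookkeeping for the first step — making sure that $1$-closedness in the current subgame combines with $1$-closedness of the subgame's vertex set in $\game$ to give $1$-closedness of $V^{j+1}$ in $\game$ — so that Lemma~\ref{lem:attr}(\ref{sublem:winclosed}) is legitimately applicable and player~2's strategies from $\game^j$ really do win in $\game$.
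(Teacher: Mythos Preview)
Your proof is correct and follows essentially the same line as the paper's: show that $S^j$ is player-1 closed in $\game^j$ (as the complement of a player-1 attractor) and disjoint from $\target^j_\ell$, hence a player-2 dominion there, and then transfer back to $\game$. The only cosmetic difference is that the paper invokes Lemma~\ref{lem:attr}(\ref{sublem:subgraph}) directly to handle the iterative removal, whereas you unroll this into the explicit invariant that $V^j$ is $1$-closed in $\game$ and appeal to Lemma~\ref{lem:attr}(\ref{sublem:winclosed}) instead.
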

\begin{proof}
	By Lemma~\ref{lem:attr}(\ref{sublem:subgraph}) it is sufficient to show
	that, in each iteration $j$, player~2 has a winning strategy in $\game^j$ from each vertex of $S^j$.
	Let $\ell$ be such that
	$S^j = V^j \setminus \ato(\game^j, \target^j_\ell)$. By 
	Lemma~\ref{lem:attr}(\ref{sublem:complattr}) the set $S^j$ is closed for player~1 in
	$\game^j$, that is, player~2 has a strategy that keeps the play within $\game^j[S_j]$
	against any strategy of player~1. Since $S^j \cap \target^j_\ell = \emptyset$,
	this strategy is winning for player~2 (i.e., it satisfies $\objsty{coBüchi}{\target^j_\ell}$ and thus the disjunctive
	co-Büchi objective).
\end{proof}
}

{\subsection{Our Improved Algorithm}\label{sec:genBuchifast}}
\newcommand{\algConjBuchi}{
	\SetAlgoRefName{GenBuchiGame}
	\caption{Generalized Büchi Games in $O(k \cdot n^2)$ Time}
	\label{alg:ConjBuchi}
	\SetKwInOut{Input}{Input}
	\SetKwInOut{Output}{Output}
	\SetKw{break}{break}
	{\BlankLine}
	\Input{Game graph $\game = ((V, E),(\vo,\vt))$ and objective 
	$\bigwedge_{1 \le \ell \le k} \objsty{Büchi}{\target_\ell}$}
	\Output
	{
	  Winning set of player~1
	}
	{
	\BlankLine
	$\game^1 \gets \game$\;
	$\set{\target^1_\ell} \gets \set{\target_\ell}$\; 
	$j \gets 0$\;	
	}
		
	\Repeat
	 {
		$D^j = \emptyset$
	 }
	 {
		$j \gets j + 1$\;
	    \For
	      {
		$i \gets 1$ \KwTo $\lceil \log_2 n \rceil$
	      }
	      {
	      construct $G^j_i$\;
		$Z^j_i \gets \{v \in \vt \mid \OutDeg(G^j_i, v)=0\} \cup 
			     \{v \in \vo \mid \OutDeg(G^j, v)>2^i\}$\;
			\For{$1\leq \ell \leq k$}
			{
				$Y^j_{\ell,i} \gets \ato(G^j_i, \target^j_\ell \cup Z^j_i)$\;
				$S^j \gets V^j \setminus Y^j_{\ell,i}$\;
				\lIf{$S^j \ne \emptyset$}{player~2 dominion found, continue with line~\ref{l:domattr}}
			}
	      }
	    $D^j \gets \att(G^j, S^j)$\label{l:domattr}\;
	    $\game^{j+1} \gets \game^j \setminus D^j${\;}	    $\set{\target^{j+1}_\ell} \gets \set{\target^j_\ell \setminus D^j}$\; 	    
	  }
	  \Return $V^j$\;
}
The $O(k \cdot n^2)$-time Algorithm~\ref{alg:ConjBuchi} for generalized Büchi games combines
the basic algorithm{ for generalized Büchi games} described above
with the method used for the $O(n^2)$-time Büchi game 
algorithm~\cite{ChatterjeeH14}, called \emph{hierarchical graph 
decomposition}~\cite{HenzingerKW99}. The hierarchical
graph decomposition defines for a directed graph $G = (V, E)$ and integers 
$1 \le i \le \lceil \log_2 n \rceil$ the graphs $G_i = (V, E_i)$. 
Assume the incoming edges of each vertex in $G$ are given in some fixed 
order in which first the edges from vertices of $V_2$ and then the edges from
vertices of $V_1$ are listed. The set of edges $E_i$ contains all the 
outgoing edges of each $v \in V$ with $\OutDeg(G, v) \le 2^i$ and the first $2^i$
incoming edges of each vertex. Note that $G = G_{\lceil \log_2 n \rceil}$ and
$\lvert E_i \rvert \in O(n \cdot 2^i)$.{ See Figure~\ref{fig:decomp} for an example. 
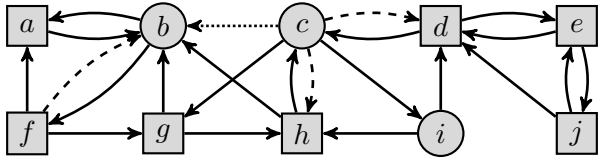
\begin{figure}
\centering
\begin{tikzpicture}
\matrix[column sep=12mm, row sep=8mm]{
	\node[p2] (a) {$a$};
	& \node[p1] (b) {$b$};
	& \node[p1] (c) {$c$};
	& \node[p2] (d) {$d$};
	& \node[p2] (e) {$e$};\\
	\node[p2] (f) {$f$};
	& \node[p2] (g) {$g$};
	& \node[p2] (h) {$h$};
	& \node[p1] (i) {$i$};
	& \node[p2] (j) {$j$};\\
};
\path (a) edge[arrow, bend right] (b)
			(b) edge[arrow, bend right] (a)
					edge[arrow, bend left] (f)
			(c) edge[arrow, densely dotted] (b)
					edge[arrow] (g)
					edge[arrow, dashed, bend left] (h)
					edge[arrow] (i)
					edge[arrow, dashed, bend left] (d)
			(d) edge[arrow, bend left] (c)
					edge[arrow, bend left] (e)
			(e) edge[arrow, bend left] (d)
					edge[arrow, bend left] (j)
			(f) edge[arrow] (a)
					edge[arrow, bend left, dashed] (b)
					edge[arrow] (g)
			(g) edge[arrow] (b)
					edge[arrow] (h)
			(h) edge[arrow] (b)
					edge[arrow, bend left] (c)
			(i) edge[arrow] (h)
					edge[arrow] (d)
			(j)	edge[arrow] (d)
					edge[arrow, bend left] (e);
\end{tikzpicture}

\caption{Illustration of the hierarchical graph decomposition. 
Circles denote player~1 vertices, squares denote player~2 vertices. 
In the original graph $G$ all the edges (solid, dashed, dotted) are present.
From the hierarchical graph decomposition we consider the graphs $G_1$, $G_2$ and $G_3$.
$G_1$ contains only the solid edges, 
$G_2$ additionally the dashed edges, 
and $G_3$ also contains the dotted edge, i.e., $G=G_3$. 
Let the target sets be $T_1 = \set{a, e, i}$ and $T_2 = \set{b, d}$. 
Then the set $\set{e, j}$ is a player-2 dominion of $G$ that can be detected in $G_1$.
Its player-2 attractor contains additionally the vertex $d$. To detect that 
the remaining vertices are winning for player~1 it is necessary to consider the 
dotted edge.}
\label{fig:decomp}
\end{figure}} The runtime analysis uses that we can identify 
small player-2 dominions (i.e., player-1 closed sets that do not intersect
one of the target sets) that contain $O(2^i)$ vertices by considering only $G_i$.
The algorithm first searches for such a set $S^j$ in $G_i$ for $i = 1$ and 
each target set and then increases $i$ until the search is successful. 
In this way the time spent for the search is proportional to $k \cdot n$ times
the number of vertices in the found dominion, which yields a total runtime bound
of $O( k \cdot n^2)$. To obtain the $O(k \cdot n^2)$ running time bound, it 
is crucial to put the loop over the different Büchi sets as the innermost 
part of the algorithm.
Given a game graph $\game=(G,(V_1, V_2))$, we denote by $\game_i$ the game graph 
where $G$ was replaced by $G_i$ from the hierarchical graph decomposition, i.e., $\game_i=(G_i,(V_1, V_2))$.
{\begin{algorithm}[t]
	 \algConjBuchi 
	 \end{algorithm}}

{
\smallskip\par\noindent{\em Properties of hierarchical graph decomposition.}
The essential properties of the hierarchical graph decomposition for 
(generalized) Büchi games are summarized in the following lemma. The first part 
is crucial for correctness: When searching in $G_i$ for a player~1 closed set 
that does not contain one of the target sets, we can ensure that such a set 
is also closed for player~1 in $G$ by excluding certain vertices that
are missing outgoing edges in $G_i$ from the search.
The second part is crucial for the runtime argument: Whenever the basic 
algorithm would remove (i.e., identify as winning for player~2)
a set of vertices with at most $2^i$ vertices, then
we can identify this set also by searching in $G_i$ instead of~$G$.
The vertices~$Z_i$ we exclude for the search on $G_i$ are player-1 vertices
with more than $2^i$ outgoing edges and player-2 vertices with no 
outgoing edges in $G_i$. Note that the latter can only happen
if $\OutDeg(G, v) > 2^i$.
}
\begin{lemma}\label{lem:decomp}
	Let $\game = (G = (V, E), (V_1, V_2))$ be a game graph and $\set{G_i}$ its 
	hierarchical graph decomposition.
	For $1 \le i \le \lceil \log_2 n \rceil$ let 
	$Z_i$ be the set consisting of the player~2 vertices that have no outgoing edge
	in $\game_i$ and the player~1 vertices with $>2^i$ outgoing edges in~$\game$. 
	\begin{compactenum}
		\item If a set $S \subseteq V \setminus Z_i$ 
		is closed for player~1 in $\game_i$, then $S$ is closed for player~1 in $\game$.
		\label{sublem:sound}
		\item If a set $S \subseteq V$ is 
		closed for player~1 in $\game$ and $\lvert \att(\game, S) \rvert \le 2^i$,
		then \upbr{i} $\game_i[S] = \game[S]$, \upbr{ii} the set $S$ is in $V \setminus Z_i$, and
	  \upbr{iii} $S$ is closed for player~1 in $\game_i$.
		\label{sublem:size}
	\end{compactenum}
\end{lemma}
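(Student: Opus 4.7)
The plan is to verify both closure conditions directly from the definitions, using one key observation that powers Part~\ref{sublem:size}: every player-2 vertex with an outgoing edge into $S$ must lie in $\att(\game, S)$, since from it player~2 can force reaching $S$ in a single step. Consequently, for any $w \in S$ the number of incoming edges of $w$ originating in $V_2$ is at most $\lvert \att(\game, S) \cap V_2 \rvert \le 2^i$.

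For Part~\ref{sublem:sound}, I would check the two 1-closedness conditions in $\game$ for a set $S \subseteq V \setminus Z_i$ that is 1-closed in $\game_i$. For a player-1 vertex $u \in S$: since $u \notin Z_i$, we have $\OutDeg(G, u) \le 2^i$, so by construction every outgoing edge of $u$ in $G$ already appears in $G_i$; combined with closure in $\game_i$ this yields $\Out(G, u) \subseteq S$. For a player-2 vertex $v \in S$: since $v \notin Z_i$, $v$ has at least one outgoing edge in $G_i$, and closure in $\game_i$ produces some $w \in \Out(G_i, v) \cap S$; the inclusion $E_i \subseteq E$ then gives $w \in \Out(G, v) \cap S$.

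For Part~\ref{sublem:size}, the bulk of the work is item~(i), namely that every edge $(v, w) \in E$ with $v, w \in S$ also lies in $E_i$. If $v \in V_1$, 1-closedness of $S$ in $\game$ forces $\Out(G, v) \subseteq S \subseteq \att(\game, S)$, so $\OutDeg(G, v) \le 2^i$ and $(v, w) \in E_i$ by the outgoing-edge rule defining $E_i$. If $v \in V_2$, the key observation bounds the number of player-2 incoming edges of $w$ by $2^i$, and since the fixed ordering lists player-2 incoming edges before player-1 ones, every such edge, in particular $(v, w)$, is among the first $2^i$ incoming edges of $w$ and hence in $E_i$. Items~(ii) and~(iii) then fall out: player-1 vertices of $S$ satisfy $\OutDeg(G, v) \le 2^i$ as just argued, and each player-2 vertex $v \in S$ has a successor in $S$ by 1-closedness in $\game$, which by the $v \in V_2$ argument of~(i) is already a successor in $\game_i$; hence $S \cap Z_i = \emptyset$, and 1-closedness of $S$ in $\game_i$ follows immediately from~(i) together with 1-closedness in $\game$.

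The only nontrivial step, and thus the main obstacle, is the bound on the player-2 incoming edges of $w \in S$ via $\att(\game, S)$. The attractor size assumption and the specific incoming-edge ordering that places player-2 edges first are precisely what make this step go through; everything else reduces to unwinding the definitions of 1-closedness, $Z_i$, and $E_i$.
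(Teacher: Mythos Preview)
Your proposal is correct and follows essentially the same argument as the paper: for Part~\ref{sublem:sound} both proofs just check the two closure conditions using $E_i \subseteq E$ and the fact that player-1 vertices outside $Z_i$ have identical out-neighbourhoods in $G$ and $G_i$, and for Part~\ref{sublem:size} both hinge on the observation that every player-2 predecessor of a vertex in $S$ lies in $\att(\game,S)$, hence there are at most $2^i$ such predecessors, which together with the incoming-edge ordering places all $V_2$-edges into $S$ inside $E_i$. Your derivation of~(iii) directly from~(i) and 1-closedness in $\game$ is in fact slightly cleaner than the paper's somewhat opaque ``by~(1)'' reference.
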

{
\begin{proof}
	We show the two points separately.
	\begin{enumerate}
		\item By $S \subseteq V \setminus Z_i$ we have for all $v \in S \cap V_1$
		that $\Out(G, v) = \Out(G_i, v)$.
		Thus if $\Out(G_i, v) \subseteq S$, then also $\Out(G, v) \subseteq S$.
		Each edge of $G_i$ is contained in $G$, thus we have for all $v \in S \cap V_2$
		that $\Out(G_i, v) \cap S \ne \emptyset$ implies $\Out(G, v) \cap S \ne \emptyset$.
		\item Since $S$ is closed for player~1 and $\lvert S \rvert \le 2^i$,
		\upbr{a} 
		the set $S$ does not contain vertices $v \in V_1$ with $\OutDeg(G, v) > 2^i$.
		Further for every vertex of $S$ also
		the vertices in $V_2$ from which it has incoming edges are contained 
		in $\att(\game, S)$. Thus by $\lvert \att(\game, S) \rvert \le 2^i$ no 
		vertex of $S$ has more than $2^i$ incoming edges from vertices of $V_2$.
		Hence, by the ordering of incoming edges in the construction of $G_i$, we obtain
		\upbr{b} for the vertices of $S$ all incoming edges from vertices of $V_2$
		are contained in $E_i$. 
		Combining \upbr{a}, i.e., $\Out(G, v) = \Out(G_i, v)$ for $v \in S \cap V_1$,
		and \upbr{b}, i.e., $(u, w) \in E_i$ for $u \in V_2$ and $w \in S$, 
		we have \upbr{i} $\game_i[S] = \game[S]$.
		Since $S$ is closed for player~1 in $\game$, every vertex
		$u \in S \cap V_2$ has an outgoing edge to another vertex $w \in S$ in $\game$. 
		Thus in particular these edges $(u, w)$ are contained in $E_i$ and 
		hence every vertex $u \in S \cap V_2$ has an outgoing edge to another
		vertex $w \in S$ in $G_i$. It follows that \upbr{ii} $S \cap Z_i = \emptyset$,
		and \upbr{iii} $S$ is closed for player~1 in $\game_i$ (by (1)).\qedhere
	\end{enumerate}
\end{proof}
}
{
That is, in all but the last iteration of Algorithm~\ref{alg:ConjBuchi} whenever the graph $G_i$
is considered a dominion of size at least $2^{i-1}$ is identified and removed from the graph.

\begin{corollary}\label{cor:size}
	If in Algorithm~\ref{alg:ConjBuchi} for some $\ell$, $j$, and $i > 1$
	we have that 
	$S^j = V^j \setminus \ato(\game^j_i, \target^j_\ell \cup Z^j_i)$ is not empty
	but for $i-1$ the set $V^j \setminus \ato(\game^j_{i-1}, \target^j_\ell \cup Z^j_{i-1})$
	is empty, then $\lvert \att(\game^j, S^j) \rvert > 2^{i-1}$.
\end{corollary}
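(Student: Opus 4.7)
}
The plan is to argue by contradiction: assuming $\lvert\att(\game^j,S^j)\rvert\le 2^{i-1}$, I would show that $S^j$ is already detectable at level $i-1$, contradicting the second hypothesis. The key conceptual step is that the complement of a player-1 attractor is 1-closed, and a 1-closed set disjoint from a target is always disjoint from the player-1 attractor of that target.

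First, I would establish that $S^j$ is 1-closed on the full graph~$\game^j$. By construction $S^j = V^j \setminus \ato(\game^j_i, \target^j_\ell \cup Z^j_i)$, so Lemma~\ref{lem:attr}(\ref{sublem:complattr}) gives that $S^j$ is 1-closed in $\game^j_i$. Since $S^j$ is disjoint from $\ato(\game^j_i, \target^j_\ell \cup Z^j_i)$ it is in particular disjoint from $Z^j_i$, so Lemma~\ref{lem:decomp}(\ref{sublem:sound}) yields that $S^j$ is 1-closed in $\game^j$.

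Now assume for contradiction that $\lvert\att(\game^j, S^j)\rvert \le 2^{i-1}$. Applying Lemma~\ref{lem:decomp}(\ref{sublem:size}) to the 1-closed set $S^j$ in $\game^j$ with parameter $i-1$, I obtain that $S^j \cap Z^j_{i-1} = \emptyset$ and that $S^j$ is 1-closed in $\game^j_{i-1}$. Since $S^j \cap \target^j_\ell = \emptyset$ by definition of $S^j$, the set $S^j$ is a 1-closed subset of $\game^j_{i-1}$ that is disjoint from $\target^j_\ell \cup Z^j_{i-1}$. A straightforward induction on the attractor construction (using that for a 1-closed set all successors of its player-1 vertices lie inside and each player-2 vertex has some successor inside) then shows that $S^j$ must be disjoint from $\ato(\game^j_{i-1}, \target^j_\ell \cup Z^j_{i-1})$, i.e., $S^j \subseteq V^j \setminus \ato(\game^j_{i-1}, \target^j_\ell \cup Z^j_{i-1})$. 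By the second hypothesis the right-hand side is empty, so $S^j = \emptyset$, contradicting the first hypothesis.

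The only subtlety, and the step I expect to need most care with, is bookkeeping the switch between parameter~$i$ (where $S^j$ is defined and closedness in $\game^j_i$ is automatic) and parameter~$i-1$ (where the assumed size bound applies via Lemma~\ref{lem:decomp}(\ref{sublem:size})); the bridge is provided by promoting closedness to $\game^j$ first and then pushing it down to $\game^j_{i-1}$. No other nontrivial ingredient is needed.
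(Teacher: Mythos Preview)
Your proposal is correct and follows essentially the same route as the paper's proof: both argue by contradiction, promote 1-closedness of $S^j$ from $\game^j_i$ to $\game^j$ via Lemma~\ref{lem:decomp}(\ref{sublem:sound}), push it down to $\game^j_{i-1}$ via Lemma~\ref{lem:decomp}(\ref{sublem:size}) under the assumed size bound, and then observe that a 1-closed set disjoint from $\target^j_\ell \cup Z^j_{i-1}$ cannot meet its player-1 attractor. The only cosmetic difference is that you phrase the last step as an induction on the attractor construction, whereas the paper phrases it as a strategy argument; both are standard and interchangeable.
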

\begin{proof}
	By Lemma~\ref{lem:attr}(\ref{sublem:complattr}) $S^j$ is closed for player~1 in
	$\game^j_i$ and by Lemma~\ref{lem:decomp}(\ref{sublem:sound}) also in $\game^j$.
	Assume by contradiction that $\lvert \att(\game^j, S^j) \rvert \le 2^{i-1}$.
	Then by Lemma~\ref{lem:decomp}(\ref{sublem:size}) we have that $S^j \subseteq 
	V^j \setminus Z^j_{i-1}$ and $S^j$ is closed for player~1 in $\game^j_{i-1}$.
	Since this means that in $\game^j_{i-1}$ 
	player~1 has a strategy to keep a play within $S^j$ against any 
	strategy of player~2
	and $S^j$ does not contain a vertex 
	of $Z^j_{i-1}$ or $\target^j_\ell$, the set $S^j$ does not intersect 
	with $\ato(\game^j_{i-1}, \target^j_\ell \cup Z^j_{i-1})$, a contradiction to
	$V^j \setminus \ato(\game^j_{i-1}, \target^j_\ell \cup Z^j_{i-1})$ being empty.
\end{proof}
} 

{
We next two Propositions show the correctness of the algorithm by 
(i) showing that all vertices in the final set $V^j$ are winning for player~1 and
(ii) all vertices not in $V^j$ are winning for player~2.
\begin{proposition}[Soundness Algorithm~\ref{alg:ConjBuchi}]
Let $V^{j^*}$ be the set returned by the algorithm.
Each vertex in $V^{j^*}$ is winning for player~1.
\end{proposition}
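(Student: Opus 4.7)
The plan is to mimic the soundness argument of Algorithm~\ref{alg:ConjBuchiBasic}. Specifically, I will first establish two structural properties of the terminal set~$V^{j^*}$ and then reuse the cycling-attractor strategy of the basic algorithm to conclude that each vertex of $V^{j^*}$ is winning for player~1.

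First, I would show by induction on the iteration index~$j$ that $V^{j}$ is closed for player~2 in~$\game$. The base case $V^1 = V$ is immediate. For the inductive step, $D^j = \att(\game^j, S^j)$ is a player-2 attractor in~$\game^j$, so by Lemma~\ref{lem:attr}(\ref{sublem:complattr}) the set $V^{j+1} = V^j \setminus D^j$ is closed for player~2 in~$\game^j$, and therefore, by combining with the inductive hypothesis, also in~$\game$. Consequently, player~2 cannot force any play starting inside $V^{j^*}$ to leave $V^{j^*}$.

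Second, I would show that $\ato(\game^{j^*}, \target^{j^*}_\ell) = V^{j^*}$ for every $1 \le \ell \le k$. The key observation is that at the maximal decomposition level $i = \lceil \log_2 n \rceil$ we have $\game^{j^*}_i = \game^{j^*}$ (every vertex has outdegree at most $n \le 2^i$, so all outgoing edges are kept; and no vertex has more than $n \le 2^i$ incoming edges, so all incoming edges are kept) and $Z^{j^*}_i = \emptyset$ (no player-1 vertex exceeds outdegree $2^i$, and by the standing assumption $\OutDeg(u) \ge 1$ no player-2 vertex loses all outgoing edges). The algorithm terminates at iteration~$j^*$ precisely when $D^{j^*} = \emptyset$, hence the inner loops must have completed all their iterations without the early exit to the dominion-removal step ever being triggered; in particular, for $i = \lceil \log_2 n \rceil$ and each~$\ell$ the test $S^{j^*} \ne \emptyset$ failed, which reads $V^{j^*} \setminus \ato(\game^{j^*}, \target^{j^*}_\ell) = \emptyset$. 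This yields the claim, and in particular also $\target^{j^*}_\ell \ne \emptyset$ for every~$\ell$ whenever $V^{j^*} \ne \emptyset$.

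Given these two facts, I would conclude with the same memoryful strategy as in the soundness proof of Algorithm~\ref{alg:ConjBuchiBasic}: from any vertex in $V^{j^*}$, follow the player-1 attractor strategy (Lemma~\ref{lem:attr}(\ref{sublem:attrstr})) toward $\target^{j^*}_1$ until a vertex of that set is reached, then switch to the attractor strategy toward $\target^{j^*}_2$, and so on cyclically through all $k$ target sets. Because $V^{j^*}$ is player-2 closed in~$\game$, no move of player~2 can leave $V^{j^*}$, so each attractor strategy remains valid within $V^{j^*}$; and because each attractor covers $V^{j^*}$, every set $\target^{j^*}_\ell \subseteq \target_\ell$ is visited infinitely often, witnessing the generalized Büchi objective. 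The main obstacle I anticipate is the careful verification of the two identities $\game^{j^*}_i = \game^{j^*}$ and $Z^{j^*}_i = \emptyset$ at the maximal decomposition level, as this is what collapses the termination condition of the hierarchical algorithm to that of the basic algorithm; once this reduction is in place, the rest is a direct transcription of the earlier argument.
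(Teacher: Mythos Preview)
Your proposal is correct and follows essentially the same approach as the paper: you observe that at the maximal decomposition level $i = \lceil \log_2 n \rceil$ one has $\game^{j^*}_i = \game^{j^*}$ and $Z^{j^*}_i = \emptyset$, so the termination condition collapses to that of the basic algorithm, after which the cycling-attractor strategy from Proposition~\ref{prop:soundness_ConjBuchiBasic} applies verbatim. The paper's proof is a compressed version of exactly this argument, simply citing the basic algorithm's soundness proof once the reduction is noted; you have spelled out the details it leaves implicit.
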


\begin{proof}
When the algorithm terminates we have $i = \lceil \log_2 n \rceil$ and 
$S^j = \emptyset$. Since for $i = \lceil \log_2 n \rceil$ 
we have $G^j_i = G^j$ and $Z^j_i = \emptyset$, 
the winning strategy of player~1 can be constructed in the same way 
as for the set returned by Algorithm~\ref{alg:ConjBuchiBasic}
(cf.\ Proof of Proposition~\ref{prop:soundness_ConjBuchiBasic}).
\end{proof}

\begin{proposition}[Completeness Algorithm~\ref{alg:ConjBuchi}]
	Let $V^{j^*}$ be the set returned by the algorithm.
	Player~2 has a winning strategy from each vertex in $V \setminus V^{j^*}$.
\end{proposition}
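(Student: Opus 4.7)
The plan is as follows. By Lemma~\ref{lem:attr}(\ref{sublem:subgraph}), applied iteratively for $j = 1, 2, \ldots$, it suffices to show that in each iteration~$j$ with $D^j \ne \emptyset$ player~2 has a winning strategy in $\game^j$ from every vertex of~$S^j$; the extension to $D^j = \att(\game^j, S^j)$ (since the $\pl$-attractor of a $\pl$-dominion is again a $\pl$-dominion) and the conclusion $V \setminus V^{j^*} \subseteq W_2(\game)$ then follow.

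Fix such an iteration~$j$ and let $\ell$, $i$ be the indices at which the inner loop breaks, so that $S^j = V^j \setminus \ato(\game^j_i, \target^j_\ell \cup Z^j_i) \ne \emptyset$. The key step is to verify that $S^j$ is closed for player~1 in $\game^j$ itself, not only in the sparsified graph $\game^j_i$. By Lemma~\ref{lem:attr}(\ref{sublem:complattr}), $S^j$ is closed for player~1 in~$\game^j_i$, and since $Z^j_i$ is part of the attractor target, $S^j \cap Z^j_i = \emptyset$. Lemma~\ref{lem:decomp}(\ref{sublem:sound}) then lifts the closedness from $\game^j_i$ to~$\game^j$. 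Moreover, $S^j \cap \target^j_\ell = \emptyset$ because $\target^j_\ell$ is also contained in the attractor target.

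Given closedness in~$\game^j$, the winning strategy is immediate, exactly as in the completeness proof for the basic algorithm: player~2 can keep every play starting in $S^j$ within $S^j$ against any strategy of player~1 in $\game^j$, and since the play never visits $\target^j_\ell$, it satisfies $\objsty{coBüchi}{\target^j_\ell}$ and hence the disjunctive co-Büchi objective of player~2. This establishes $S^j \subseteq W_2(\game^j)$ and completes the induction.

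The main obstacle I anticipate is this closedness transfer: the standard complement-of-attractor lemma yields closedness only in the graph on which the attractor was computed, whereas the winning strategy for player~2 must work in the full graph~$\game^j$. The algorithm's choice of $Z^j_i$---the player-1 vertices with more than $2^i$ outgoing edges together with the player-2 vertices without successors in $G^j_i$---is precisely what makes Lemma~\ref{lem:decomp}(\ref{sublem:sound}) applicable, and is thereby the pivot of the whole argument.
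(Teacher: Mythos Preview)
Your proof is correct and follows essentially the same approach as the paper: reduce via Lemma~\ref{lem:attr}(\ref{sublem:subgraph}) to showing $S^j \subseteq W_2(\game^j)$, use Lemma~\ref{lem:attr}(\ref{sublem:complattr}) to get that $S^j$ is player-1 closed in $\game^j_i$, invoke Lemma~\ref{lem:decomp}(\ref{sublem:sound}) (using $S^j \cap Z^j_i = \emptyset$) to transfer closedness to $\game^j$, and conclude from $S^j \cap \target^j_\ell = \emptyset$ that player~2's stay-in-$S^j$ strategy satisfies the disjunctive co-Büchi objective. Your explicit mention of why $S^j \subseteq V^j \setminus Z^j_i$ holds is a detail the paper leaves implicit, but otherwise the arguments coincide.
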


\begin{proof}
	By Lemma~\ref{lem:attr}(\ref{sublem:subgraph}) it is sufficient to show
	that, in each iteration $j$, player~2 has a winning strategy in $\game^j$ from each vertex of $S^j$.
	For a fixed $j$ with $S^j \ne \emptyset$, let $i$ and $\ell$ be such that
	$S^j = V^j \setminus \ato(\game^j_i, \target^j_\ell \cup Z^j_i)$. By 
	Lemma~\ref{lem:attr}(\ref{sublem:complattr}) the set $S^j$ is closed for player~1 in
	$\game^j_i$ and by Lemma~\ref{lem:decomp}(\ref{sublem:sound}) also in $\game^j$.
	That is, player~2 has a strategy that keeps the play within $\game^j[S_j]$
	against any strategy of player~1. Since $S^j \cap \target^j_\ell = \emptyset$,
	this strategy is winning for player~2 (i.e., satisfies the disjunctive
	co-Büchi objective).
\end{proof}

Finally, the $O(k \cdot n^2)$  runtime bound is by Corollary~\ref{cor:size}, Lemma~\ref{lem:attr}(\ref{sublem:attrtime}) and 
the fact that we can construct the graphs $G_i$ efficiently.

\begin{proposition}[Runtime Algorithm~\ref{alg:ConjBuchi}]\label{prop:time_conjbuchi}
	The algorithm can be implemented to terminate in $O(k \cdot n^2)$ time.
\end{proposition}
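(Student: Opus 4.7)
The plan is to amortize the cost of each non-terminating iteration against the number of vertices it removes, so that on average each removed vertex is charged $O(k \cdot n)$ work. Since at most $n$ vertices are ever removed, this yields the $O(k \cdot n^2)$ bound, and the terminating iteration is analyzed separately.

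First I would bound the work at a single level $i$ within iteration $j$: constructing $G^j_i$ takes $O(\lvert E_i \rvert) = O(n \cdot 2^i)$ time (after preprocessing of incoming-edge orderings on the original graph), and each of the up to $k$ attractor computations on $G^j_i$ takes $O(n \cdot 2^i)$ time by Lemma~\ref{lem:attr}(\ref{sublem:attrtime}), giving $O(k \cdot n \cdot 2^i)$ total at level $i$. For an iteration $j$ that first succeeds at level $i$, summing the geometric series $\sum_{i'=1}^{i} O(k \cdot n \cdot 2^{i'})$ over the previously-failed levels still yields $O(k \cdot n \cdot 2^i)$. By Corollary~\ref{cor:size}, at least $2^{i-1}$ vertices are removed when $i > 1$, and at least one vertex is removed when $i = 1$, so the cost per removed vertex is $O(k \cdot n)$. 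Summing over all such iterations, and using that the removed sets $D^j$ are pairwise disjoint, gives a total contribution of $O(k \cdot n^2)$.

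It remains to account for (a) the final attractor computations $\att(\game^j, S^j)$ in line~\ref{l:domattr}, which by Lemma~\ref{lem:attr}(\ref{sublem:attrtime}) take time proportional to the sum of in-degrees of vertices in $D^j$ in $G^j$; since the sets $D^j$ are disjoint across iterations, these costs sum to $O(m) \subseteq O(n^2)$; and (b) the terminating iteration, which scans all $\lceil \log_2 n \rceil$ levels of the decomposition for every $\ell$ without finding a non-empty $S^j$, costing $\sum_{i=1}^{\lceil \log_2 n \rceil} O(k \cdot n \cdot 2^i) = O(k \cdot n^2)$. Adding the three contributions yields the claimed $O(k \cdot n^2)$ bound.

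The main obstacle I expect is the bookkeeping for constructing $G^j_i$ in $O(n \cdot 2^i)$ time after vertex deletions, since the ordering of incoming edges used in the definition of $E_i$ must respect which vertices are still present. I would handle this by sorting, once at the very start, the incoming edges of each vertex so that edges from $V_2$ precede those from $V_1$, and when building $G^j_i$ scanning these lists while skipping endpoints in $V \setminus V^j$, stopping once $2^i$ surviving incoming edges have been collected per vertex; the work is then proportional to $\lvert E_i \rvert$ plus the skipped edges, which can be charged in the same amortized manner, keeping the overall bound intact.
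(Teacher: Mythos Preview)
Your proposal is correct and follows essentially the same approach as the paper: amortize the cost of each non-terminating outer iteration against the vertices removed (using Corollary~\ref{cor:size} and the geometric series over levels), bound the final attractor computations by $O(m)$, and bound the single terminating iteration by $O(k \cdot n^2)$.

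The one place where the paper is slightly more explicit than your last paragraph is the edge-list maintenance: rather than merely \emph{skipping} obsolete entries when scanning the incoming-edge list, the paper physically removes an obsolete entry the first time it is encountered. This guarantees each edge is touched as ``obsolete'' at most once, so the total overhead for list maintenance is $O(m)$ outright, without needing any further amortization argument. Your phrasing ``charged in the same amortized manner'' is a bit vague, since skipped-but-not-removed edges could in principle be rescanned across many iterations~$j$; making the deletion explicit closes that gap cleanly.
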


\begin{proof}
	To efficiently construct the graphs $G^j_i$ and the vertex sets $Z^j_i$ we
	maintain (sorted) lists of the incoming and the outgoing edges of each vertex.
	These lists can be updated whenever an obsolete entry is encountered in the 
	construction of $G^j_i$; as
	each entry is removed at most once, maintaining this data structures takes
	total time $O(m)$. For a given iteration~$j$ of the outer repeat-until loop and the $i$th 
	iteration of the inner repeat-until loop we have that 
	the graph $G^j_i$ contains $O(2^i \cdot n)$ edges and both $G^j_i$ and the 
	set $Z^j_i$ can be constructed from the maintained lists in time $O(2^i \cdot n)$.
	Further the $k$ attractor computations in the for-loop can be done in time 
	$O(k \cdot 2^i \cdot n)$, thus for any $j$ the $i$th iteration
	of the inner repeat-until loop takes time $O(k \cdot 2^i \cdot n)$.
	The time spent in the iterations up to the $i$th iteration forms a 
	geometric series and can thus also be bounded by $O(k \cdot 2^i \cdot n)$.
	When a non-empty set~$S^j$ is found in the $j$th iteration of the 
	outer repeat-until and in the $i$th iteration of the inner repeat-until loop, then by 
	Corollary~\ref{cor:size} we have $\lvert \att(\game^j, S^j) \rvert > 2^{i-1}$.
	The vertices in $\att(\game^j, S^j)$ are then removed from $G^j$ to obtain $G^{j+1}$
	and are not considered further by the algorithm. Thus we can charge the
	time of $O(k \cdot 2^i \cdot n)$ to identify $S^j$ to the vertices 
	in $\att(\game^j, S^j)$, which yields a bound on the total time spent in the 
	inner repeat-until loop, whenever $S^j \ne \emptyset$, of $O(k \cdot n^2)$.
	{By Lemma~\ref{lem:attr}(\ref{sublem:attrtime}) t}he total time for computing
	the attractors $\att(\game^j, S^j)$ can be bounded by $O(m)$.
	Finally the time for the last iteration of the while loop, when $S^j = 
	\emptyset$ and $i = \lceil \log_2 n \rceil$, can be bounded by $O(k \cdot 
	2^{\lceil \log_2 n \rceil} \cdot n) = O(k \cdot n^2)$.
\end{proof}
} 
{
\begin{remark}
Algorithm~\ref{alg:ConjBuchi} can easily be modified to also return winning strategies
for both players within the same time bound: For player~2 a winning strategy for 
the dominion~$D^j$ that is identified in iteration~$j$ of the algorithm 
can be constructed by combining his strategy to stay within the set~$S^j$ that is 
closed for player~1 with his attractor strategy to the set $S^j$.
For player~1 we can obtain a winning strategy by combining her attractor 
strategies to the sets $T_\ell$ for $1\le \ell \le k${ 
\upbr{as described in the proof of Proposition~\ref{prop:soundness_ConjBuchiBasic}}}.
\end{remark}
}

\section{Conditional Lower bounds for Generalized Büchi Games}\label{sec:lowerbounds}

In this section we present two conditional lower bounds, one for dense graphs
($m = \Theta(n^2)$) based on STC \& BMM, and one for sparse graphs 
($m = \Theta(n^{1+o(1)})$) based on OVC \& SETH. 

\begin{theorem}\label{thm:genbuchi_STChard}
  There is no combinatorial $O(n^{3-\epsilon})$ or $O((k\cdot n^2)^{1-\epsilon})$-time algorithm \upbr{for any $\epsilon > 0$} for generalized Büchi games under  Conjecture~\ref{conj:triangle} \upbr{i.e., unless STC \& BMM fail}. 
  {In particular, there is no such algorithm deciding whether the winning set is non-empty
  or deciding whether a specific vertex is in the winning set.}
\end{theorem}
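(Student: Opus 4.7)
My plan is to establish the lower bound via a fine-grained reduction from triangle detection, the problem at the heart of Conjecture~\ref{conj:triangle}. Given an $n$-vertex graph $G = (V,E)$, the goal is to construct, in time linear in the size of~$G$, a generalized Büchi game $(\game, \bigwedge_{\ell=1}^{k} \objstytxt{Büchi}{\target_\ell})$ with $\Theta(n)$ vertices, $O(n^2)$ edges, and $k = \Theta(n)$ target sets such that a distinguished vertex $s$ lies in player~1's winning set iff $G$ contains a triangle. Equipped with such a reduction, any combinatorial algorithm running in time $O(n^{3-\epsilon})$ or in time $O((k\cdot n^2)^{1-\epsilon})$, which for $k=\Theta(n)$ is $O(n^{3(1-\epsilon)})$, would decide triangles strictly sub-cubically, contradicting STC (and by the Williams--Williams equivalence, also BMM).

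For the construction itself, I would use a three-layered gadget: take three disjoint copies $V^1, V^2, V^3$ of $V$ together with a start vertex~$s$, and insert edges $v^i \to u^{i+1}$ (indices modulo~$3$) whenever $(v,u)\in E$. Vertex ownership is alternated between the players so as to force an adversarial verification of a triangle through the three layers, and $s$ feeds into layer~$V^1$. The $k = n$ target sets $\target_v$ are indexed by the vertices $v\in V$ and placed so that hitting every $\target_v$ infinitely often forces the play to repeatedly close a length-$3$ cycle of the layered graph whose projection to $V$ is a triangle in~$G$; intuitively, target $\target_v$ is met only when the play completes a ``certification round'' through the copy of~$v$. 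The role of player~2 together with the $k$ targets is to prevent player~1 from satisfying the Büchi conditions by wandering through the layers without ever genuinely closing a triangle; this is the same design principle as in the known $\Omega(n^2)$ lower bound for Büchi games via BMM, augmented by an extra index to push the target count up to~$n$.

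The crux of the proof is the correctness of this gadget: the claim that player~1 wins from~$s$ iff $G$ has a triangle. In one direction, given a triangle $a,b,c$ in $G$, player~1 has a strategy with memory of size~$O(k)$ that keeps cycling through the layered triangle $a^1\to b^2\to c^3\to a^1$ and, by routing through the relevant target layer, also hits every $\target_v$ infinitely often. In the other direction, if $G$ is triangle-free then the layered graph contains no closed $3$-cycle projecting to a triangle of~$G$, and player~2 can, against any strategy of player~1, perpetually steer the play so as to avoid at least one target set. The main technical obstacle, and the step that I expect to demand the most care, is calibrating ownership and target placement so that player~1 cannot cheat by satisfying the Büchi conditions via longer $V^1{\to}V^2{\to}V^3{\to}V^1$ traversals that do not correspond to triangles of~$G$; this is where the interplay between the player~2 challenges and the precise layer in which the targets live has to be tuned carefully. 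Once this equivalence is established, both stated time lower bounds follow immediately from the parameters of the construction, and the statements about non-emptiness of the winning set and membership of a specific vertex are direct consequences, since $s$ is a single vertex and triangle existence is a yes/no question.
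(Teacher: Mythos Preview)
Your overall strategy---reduce from triangle detection via a layered copy of~$G$ with $\Theta(n)$ target sets---is the right one, and the parameter arithmetic at the end is fine. But the concrete gadget you sketch has the roles inverted and, as stated, does not work.

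The crucial point is that triangle detection is an \emph{existential} question (``does there exist a closed $3$-walk?''), and in a generalized B\"uchi game it is player~2, the \emph{disjunctive} co-B\"uchi player, whose objective has this existential flavour: player~2 wins by avoiding \emph{some} single target set. Accordingly, the natural reduction makes all vertices belong to player~2 and sets up the targets so that a triangle $a,b,c$ gives player~2 a cycle through~$s$ that avoids the target set~$\target_a$. Concretely, the paper uses \emph{four} copies $V^1,\dots,V^4$ (not three), with $s \to V^1 \to V^2 \to V^3 \to V^4 \to s$, and $\target_v = (V^1\setminus\{v^1\})\cup(V^4\setminus\{v^4\})$; then $G$ has a triangle iff $s\notin W_1$ iff $W_1=\emptyset$. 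Your proposal instead asks that player~1 win iff there is a triangle, i.e., that a triangle enable player~1 to hit \emph{all} $n$ targets, and that its absence let player~2 block one of them. You never specify target placement or ownership that would achieve this, and it is hard to see how a single triangle could help player~1 satisfy $n$ distinct B\"uchi conditions, each tied to a different vertex of~$G$.

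There is a second, structural problem with the three-layer mod-$3$ construction: because edges go $V^3\to V^1$ directly, cycles in your layered graph need not have length~$3$ and hence need not project to triangles of~$G$; any closed walk in~$G$ of length divisible by~$3$ gives a cycle. The paper's use of a fourth layer and the return edge $V^4\to s$ forces every cycle through~$s$ to correspond to a length-$3$ closed walk $v\to u\to w\to v$, and the target sets $\target_v$ then pin down both endpoints to be the \emph{same} vertex~$v$. Your proposal explicitly flags ``calibrating ownership and target placement'' as the main obstacle; the resolution is to drop the alternation entirely (give everything to player~2), add the fourth layer, and reverse the direction of the equivalence.
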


The result can be obtained from a reduction from triangle detection to disjunctive co-B\"uchi objectives on graphs in~\cite{ChatterjeeDHL16}, 
and we present the reduction in terms of game graphs below and illustrate it on an example in Figure~\ref{fig:TriangletoGraphs}.
\begin{reduction}\label{red:TriangletoGraphs}
 Given a graph $G=(V,E)$ \upbr{for triangle detection}, we build a game graph $\game'=((V',E'),(V_1,V_2))$ 
 \upbr{for generalized Büchi objectives} as follows.
 As vertices $V'$ we have four copies $V^1, V^2, V^3, V^4$ of $V$ and a vertex $s$.
 A vertex $v^i \in V^i, i \in \{1,2,3\}$ has an edge to a vertex $u^{i+1} \in V^{i+1}$ iff $(v,u) \in E$. 	  
 Moreover, $s$ has an edge to all vertices of $V^1$ and all vertices of $V^4$ have an edge to~$s$.
 All the vertices are owned by player 2, i.e., $V_1=\emptyset$ and $V_2=V$.
 Finally, we consider the generalized Büchi objective $\bigwedge_{v \in V} \objsty{Büchi}{\target_v}$, with
 $\target_v= (V^1 \setminus \{v^1\}) \cup (V^4 \setminus \{v^4\})$.
\end{reduction}
{The game graph $\game'$ is constructed such that  there is a triangle in the graph $G$ if and only if 
the vertex $s$ is winning for player~2 in the generalized Büchi game on $\game'$.
For instance consider the example in Figure~\ref{fig:TriangletoGraphs}. The graph $G$ has a triangle $a$,$b$,$c$
and this triangle gives rise to the following winning strategy for player~2 starting at vertex $s$. 
When a play is in the vertex $s$ then player~2 moves to vertex $a^1$, when in $a^1$ he moves to $b^2$,
when in $b^2$ he moves to $c^3$, when in $c^3$ he moves to $a^4$, and finally from $a^4$ he moves back to $s$.
This strategy does not visit any vertex of the set $T_a$ and thus the conjunctive Büchi objective of player~1 is not satisfied, 
i.e., player~2 wins. 
The following Lemma we show that also the other direction holds, i.e., that a memoryless winning strategy from $s$
gives rise to a triangle in the original graph. 
This correspondence between triangles and memoryless winning strategies then gives the correctness of the reduction.}
{
\begin{figure}
 \centering
  \begin{tikzpicture}[yscale=0.8, xscale=1.6,>=stealth]
   \small
		\draw  (0,-2.25)node[player2](s){$s$};
  		\path 	(1,0)node[player2](a1){$a^1$}
			++(0,-1.5)node[player2](b1){$b^1$}
			++(0,-1.5)node[player2](c1){$c^1$}
			++(0,-1.5)node[player2](d1){$d^1$}
			;
  		\path 	(2,0)node[player2](a2){$a^2$}
			++(0,-1.5)node[player2](b2){$b^2$}
			++(0,-1.5)node[player2](c2){$c^2$}
			++(0,-1.5)node[player2](d2){$d^2$}
			;
  		\path 	(3,0)node[player2](a3){$a^3$}
			++(0,-1.5)node[player2](b3){$b^3$}
			++(0,-1.5)node[player2](c3){$c^3$}
			++(0,-1.5)node[player2](d3){$d^3$}
			;
		\path 	(4,0)node[player2](a4){$a^4$}
			++(0,-1.5)node[player2](b4){$b^4$}
			++(0,-1.5)node[player2](c4){$c^4$}
			++(0,-1.5)node[player2](d4){$d^4$}
			;
		\path [->, thick]
			(s) edge (a1)
			(s) edge (b1)
			(s) edge (c1)
			(s) edge (d1)
			(a1) edge (b2)
			(a2) edge (b3)
			(a3) edge (b4)
			(b1) edge (c2)
			(b2) edge (c3)
			(b3) edge (c4)
			(c1) edge (a2)
			(c2) edge (a3)
			(c3) edge (a4)
			(b1) edge (a2)
			(b2) edge (a3)
			(b3) edge (a4)
			
			(c1) edge (d2)
			(c2) edge (d3)
			(c3) edge (d4)
			
			(d1) edge (a2)
			(d2) edge (a3)
			(d3) edge (a4)
			;
		\draw[left,thick,->,rounded corners=5pt]  
		(a4) -- ++(0.5,0) -- (4.5,-5.2) -- (0,-5.2) -- (s);
		\draw[left,thick,->,rounded corners=5pt]  
		(b4) -- ++(0.5,0) -- (4.5,-5.2) -- (0,-5.2) -- (s);
		\draw[left,thick,->,rounded corners=5pt]  
		(c4) -- ++(0.5,0) -- (4.5,-5.2) -- (0,-5.2) -- (s);
		\draw[left,thick,->,rounded corners=5pt]  
		(d4) -- ++(0.5,0) -- (4.5,-5.2) -- (0,-5.2) -- (s);

 \end{tikzpicture}
 \caption{Illustration of Reduction~\ref{red:TriangletoGraphs}, with $G=(\{a,b,c,d\},\{(a,b), (b,a),(b,c), (c,a),(c,d),$ $(d,a)\})$.
 The target sets for disjunctive co-Büchi are $T_a = \set{b^1, c^1, d^1, b^4, c^4, d^4}$, $T_b = \set{a^1, c^1, d^1, a^4, c^4, d^4}$, $T_c = \set{a^1, b^1, d^1, a^4, b^4, d^4}$, and $T_d = \set{a^1, b^1, c^1, a^4, b^4, c^4}$.
 }
 \label{fig:TriangletoGraphs}
\end{figure}
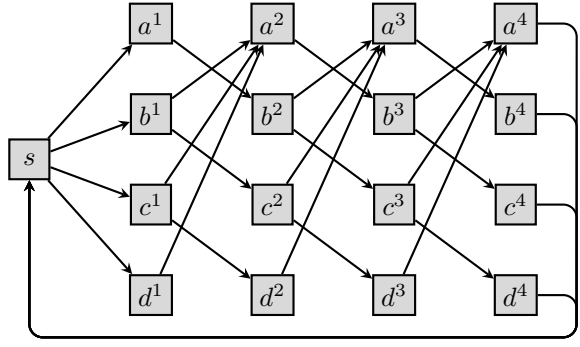
} 

{
\begin{lemma}
 Let $\game'$ be the game
 graph given by Reduction~\ref{red:TriangletoGraphs} for a graph~$G$
 and let $\target_v= (V^1 \setminus \{v^1\}) \cup (V^4 \setminus \{v^4\})$. Then 
 {the following statements are equivalent.
 \begin{enumerate}
    \item $G$ has a triangle.
    \item $s \not\in W_1(\game',\bigwedge_{v \in V} \objsty{Büchi}{\target_v})$.
    \item The winning set $W_1(\game', \bigwedge_{v \in V} \objsty{Büchi}{\target_v})$ is empty.
 \end{enumerate}
 }
\end{lemma}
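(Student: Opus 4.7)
The strategy is to prove the cycle $(1) \Rightarrow (3) \Rightarrow (2) \Rightarrow (1)$; the middle implication is immediate since $s$ is a vertex of $\game'$. Without loss of generality I assume $G$ has no self-loops and no sink vertices (both preserve triangle existence and ensure that the constructed game graph satisfies $\OutDeg(u) \ge 1$ everywhere, as required by the preliminaries). The crucial structural observation is that $\game'$ has only player-$2$ vertices, so plays are determined by player~$2$'s strategy alone, and that the only way to return to a given layer is via the cycle $s \to V^1 \to V^2 \to V^3 \to V^4 \to s$ of length five.

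For $(1) \Rightarrow (3)$, given a triangle on distinct vertices $a, b, c$ with $(a,b), (b,c), (c,a) \in E$, I would exhibit a memoryless player-$2$ strategy $\pi$ that plays $s \mapsto a^1$, $a^1 \mapsto b^2$, $b^2 \mapsto c^3$, $c^3 \mapsto a^4$, $a^4 \mapsto s$ and picks an arbitrary outgoing edge elsewhere. Starting from any vertex, the play consistent with $\pi$ reaches $s$ within at most four steps and then loops the cycle forever, so $\Inf(\pat) \cap V^1 = \{a^1\}$ and $\Inf(\pat) \cap V^4 = \{a^4\}$. Hence $\Inf(\pat) \cap \target_a = \emptyset$, the disjunctive co-Büchi objective of player~$2$ is satisfied from every vertex, and $W_1(\game', \bigwedge_v \objsty{Büchi}{\target_v}) = \emptyset$.

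For $(2) \Rightarrow (1)$, suppose $s \notin W_1$. By determinacy (Theorem~\ref{th:det}), player~$2$ has a winning strategy $\pi$ from $s$; let $\pat$ be the unique play from $s$ consistent with $\pi$ (unique since $V_1 = \emptyset$). Every one of the layers $\{s\}, V^1, V^2, V^3, V^4$ is entered every five steps, so each intersects $\Inf(\pat)$. Since $\pat$ satisfies $\bigvee_{v} \objsty{coBüchi}{\target_v}$, some $v \in V$ has $\Inf(\pat) \cap \target_v = \emptyset$, which forces $\Inf(\pat) \cap V^1 = \{v^1\}$ and $\Inf(\pat) \cap V^4 = \{v^4\}$. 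From some step on the play repeats segments of the form $s \to v^1 \to u^2 \to w^3 \to v^4 \to s$; pick any one. Its edges in $\game'$ correspond to $(v, u), (u, w), (w, v) \in E$. Self-loop-freeness rules out $v = u$, $u = w$, and $w = v$, so $\{v, u, w\}$ is a triangle in $G$.

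The only real obstacle is the distinctness of the triangle vertices in $(2) \Rightarrow (1)$: without the self-loop-free assumption a degenerate ``triangle'' reusing a vertex could arise. I address this via $O(n + m)$ preprocessing of $G$, which both preserves triangles and enforces $\OutDeg \ge 1$ in $\game'$; the remainder of the argument is a direct unpacking of the layered cycle structure together with the semantics of generalized Büchi versus disjunctive co-Büchi.
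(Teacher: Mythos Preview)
Your proof is correct and follows essentially the same approach as the paper: exhibit a player-2 strategy tracing the triangle to show a triangle implies player~2 wins everywhere, and conversely extract a closed length-3 walk from any winning play through~$s$. The differences are minor---you organize the equivalences as the cycle $(1)\Rightarrow(3)\Rightarrow(2)\Rightarrow(1)$ rather than the paper's $(1)\Leftrightarrow(2)$ plus $(2)\Leftrightarrow(3)$, you analyze $\Inf(\pat)$ directly instead of invoking a memoryless strategy for player~2, and you are more careful than the paper about self-loop removal (for vertex distinctness) and sink removal (for the $\OutDeg\ge 1$ assumption).
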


\begin{proof}
 (1)$\Rightarrow$(2): Assume that
 $G$ has a triangle with vertices $a,b,c$ and 
 let $a^i$, $b^i$, $c^i$ be the copies of $a,b,c$ in $V^i$.
 Now a strategy for player~2 in $\game'$ to satisfy $\objsty{coBüchi}{\target_a}$ is as follows:
 When in $s$, go to $a^1$; when in $a^1$, go to $b^2$; when in $b^2$, go to $c^3$;
 when in $c^3$, go to $a^4$; and when in $a^4$, go to $s$.
 As $a,b,c$ form a triangle, all the edges required by the above strategy exist.
 When player~2 starts at~$s$ and follows the above strategy,
 then he plays an infinite path that only uses the vertices $s,a^1,b^2,c^3,a^4$ and 
 thus satisfies $\objsty{coBüchi}{\target_a}$.

 (2)$\Rightarrow$(1): Assume that there is a memoryless winning strategy for player~2
 starting in $s$ and satisfying $\objsty{coBüchi}{\target_a}$.
 Starting from $s$, this strategy has to go to $a^1$, as all other successors of $s$ are
 contained in $\target_a$ and thus would violate the $\objsty{coBüchi}{\target_a}$ objective.
 Then the play continues on some vertex $b^2 \in V^2$ and $c^3 \in V^3$
 and then, again by the coBüchi constraint, has to enter $a^4$.
 Now by construction of $\game'$ we know that there must be edges $(a,b), (b,c), (c,a)$ in the original graph $G$,
 i.e.\ there is a triangle in $G$. 
 
 (2)$\Leftrightarrow$(3): 
 Notice that when removing $s$ from $\game'$ we get an acyclic graph and thus each infinite
 path has to contain $s$ infinitely often. Thus, if the winning set is non-empty,
 there is a cycle winning for some vertex and then 
 this cycle is also winning for $s$. For the converse direction we have that if $s$ is in the winning set, 
 then the winning set is non-empty.
\end{proof}

The size and the construction time of the game graph $\game'$, given in Reduction~\ref{red:TriangletoGraphs}, 
are linear in the size of the original graph $G$ and we have $k = \Theta(n)$ target sets.
Thus if we would have a combinatorial $O(n^{3-\epsilon})$  or $O((k\cdot n^2)^{1-\epsilon})$ algorithm for generalized Büchi games, 
we would immediately get a combinatorial  $O(n^{3-\epsilon})$ algorithm for triangle detection, 
which contradicts STC (and thus BMM).

Notice that the sets $\target_v$ in the above reduction are of linear size but 
can be reduced to logarithmic size by modifying the graph constructed in Reduction~\ref{red:TriangletoGraphs} as follows.
Remove all edges incident to $s$ and  replace them by two complete binary trees. 
The first tree with $s$ as root and the vertices $V^1$ as leaves is directed towards the leaves, 
the second tree with root~$s$ and leaves~$V^4$ is directed towards~$s$.
Now for each pair $v^1,v^4$ one can select one vertex of each non-root level of the trees 
to be in the set $\target_v$ such that the only winning path for player~2 starting in $s$ has to use 
$v^1$ and each winning path for player~2 to $s$ must pass $v^4$ (see also \cite{ChatterjeeDHL16}).
\smallskip 
}
{\par Next we present an $\Omega(m^{2-o(1)})$ lower bound for generalized Büchi objectives in sparse game graphs based on OVC and SETH.}
\begin{theorem}\label{thm:genbuchi_OVChard}
  There is no $O(m^{2-\epsilon})$ or $O(\min_{1\!\leq\,i\,\leq\!k} b_i \cdot (k \cdot m)^{1-\epsilon})$-time algorithm \upbr{for any $\epsilon\!>\!0$} 
  for generalized Büchi games under Conjecture~\ref{conj:ov} \upbr{i.e., unless
  OVC{ and }SETH fail}.
  {In particular, there is no such algorithm for deciding whether the winning set is non-empty
  or deciding whether a specific vertex is in the winning set.}
\end{theorem}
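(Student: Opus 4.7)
The plan is to reduce the Orthogonal Vectors Problem (OV) to generalized Büchi games on sparse graphs. Given an OV instance with $|S_1| = |S_2| = N$ vectors in $\set{0,1}^d$ and $d = \Theta(\log N)$, I will construct a game graph $\game$ with $n, m = O(N \log^{O(1)} N)$ vertices and edges, together with $k = \Theta(n)$ singleton target sets, such that a distinguished vertex $s^*$ lies in player~2's winning set iff there exist $u \in S_1, v \in S_2$ with $u \cdot v = 0$. Since in this scaling both $m^{2-\varepsilon}$ and $\min_i b_i \cdot (km)^{1-\varepsilon} = (n^2)^{1-\varepsilon}$ are $N^{2-\Omega(\varepsilon)}$, either sub-bound algorithm would solve OV sub-quadratically and refute OVC (hence SETH). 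Using Lemma~\ref{lem:attr}(\ref{sublem:subgraph}) and Theorem~\ref{th:det}, both decision problems mentioned in the theorem (non-emptiness of the winning set and membership of a specific vertex) are equivalent up to constant blow-up for this construction.

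The game will encode the quantifier alternation $\exists (u,v)\,\forall i : u_i v_i = 0$ of OV directly. Player~2 (existential) commits to a pair $(u,v)$ by traversing two binary selection trees of depth $\log N$ with player-2 internal vertices and leaves $\{u^1 : u \in S_1\}$ and $\{v^2 : v \in S_2\}$; player~1 (universal) then picks a challenge coordinate $i \in [d]$ at a player-1 decision vertex $c_i$, after which the play returns to $s^*$. By memoryless determinacy each player's strategy essentially fixes a single successor at every decision vertex, so plays form simple cycles through $s^*$ that encode triples $(u,v,i)$, and player~2 wins iff the cycle avoids some target set. The target singletons will be placed so that the cycle through $(u,v,i)$ visits its associated ``conflict'' vertex iff $u_i = 1$ and $v_i = 1$; then player~1 wins from $s^*$ iff, against every $(u,v)$ choice of player~2, she can route the challenge to a coordinate with $u_i v_i = 1$, and symmetrically player~2 wins iff there is a pair $(u,v)$ orthogonal to all possible challenges, i.e., iff OV has a solution.

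The main obstacle is realizing the conjunction $u_i = 1 \wedge v_i = 1$ through target-set membership while keeping the graph sparse. The naive placement of $u^1$ (resp.\ $v^2$) in $T_i$ iff $u_i = 1$ (resp.\ $v_i = 1$) yields only the disjunction $u_i = 1 \vee v_i = 1$, and storing the full pair $(u,v)$ explicitly in the vertex set would require $\Omega(N^2)$ edges. I plan to resolve this by inserting, at each player-1 coordinate vertex $c_i$, a small synchronization gadget whose ``conflict vertex'' $t_i$ is traversed on the cycle only when both the $u$-selection path and the $v$-selection path have committed to a ``$1$'' at level $i$; this exploits the fact that player~1's single coordinate choice per cycle is the unique place where the two otherwise independent selections are fused. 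Each such gadget contributes $O(1)$ vertices and edges per coordinate, the selection trees add $O(N)$ each, so sparsity is preserved. To obtain the $k = \Theta(n)$ singleton count required for the companion bound, the singletons can be further subdivided by auxiliary labels on a chained-check variant of the same gadget (e.g., one singleton per $u \in S_1$), keeping $n, m = O(N \log^{O(1)} N)$ while forcing $km = \Theta(n^2)$ to match $N^2$ up to polylogarithmic factors. Once the synchronization gadget is specified and correctness verified by case analysis on winning strategies (as in the triangle reduction preceding this theorem, and analogously to \cite{ChatterjeeDHL16}), the parameter bounds give the claimed $\Omega(m^{2-o(1)})$ and $\Omega((km)^{1-o(1)})$ lower bounds under OVC.
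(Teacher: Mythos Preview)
Your plan has a genuine gap at its declared ``main obstacle''. After player~2 traverses the $u$-selection tree and then the $v$-selection tree, the play sits at a leaf $v^2$; the current vertex encodes only $v$, not $u$. When player~1 then moves to a coordinate vertex $c_i$, the game state is just $c_i$ (or possibly a pair encoding $(v,i)$ if you fan out from $v^2$), and the identity of $u$ is gone. A gadget of $O(1)$ size at $c_i$ therefore cannot decide whether $u_i = 1$; to retain both choices you would need $\Omega(N^2)$ states or $\Omega(N^2)$ edges, destroying sparsity. Your remark that the selection paths ``committed to a `1' at level $i$'' conflates the $\log N$ levels of the index tree with the $d$ coordinates of the vectors, which are unrelated. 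The later subdivision to get $k=\Theta(n)$ singletons is also left entirely unspecified.

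The paper avoids this obstacle by \emph{not} having player~2 pick both vectors. The quantifier shape of OV is $\exists u\,\exists v\,\forall i:\neg(u_i\wedge v_i)$; the paper lets the disjunctive co-Büchi objective absorb the $\exists v$ quantifier. Concretely: only $s$ is a player-2 vertex; player~2 moves $s\to x$ for some $x\in S_1$; player~1 then moves $x\to c_i$ (edge present iff $x_i=1$) and $c_i\to y$ (edge present iff $y_i=1$); finally $y\to s$. The target sets are the singletons $T_y=\{y\}$ for $y\in S_2$, so $k=|S_2|=\Theta(N)$ and $\min_\ell b_\ell=1$ directly. Player~2 wins iff some $y$ is never visited, which happens iff player~2 can fix an $x$ such that no path $x\to c_i\to y$ exists, i.e., $x$ and $y$ are orthogonal. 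No synchronization gadget is needed because the conjunction $x_i=1\wedge y_i=1$ is encoded by the two consecutive edges $x\to c_i$ and $c_i\to y$, both chosen by player~1. This gives $n=O(N)$, $m=O(N\log N)$, and the claimed bounds follow immediately.
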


{The above theorem is by a linear time reduction from OV provided below,
and illustrated on an example in Figure~\ref{fig:OVtoConjBuchiGraphGames}.}

\begin{reduction}\label{red:OVtoConjBuchiGraphGames}
 Given two sets $S_1, S_2$ of $d$-dimensional vectors, we build the following game graph $\game = ((V, E), (V_1, V_2))$.
 {
 \begin{itemize}
  \item   The vertices $V$ 
  are given by a start vertex~$s$, sets of vertices $S_1$ and $S_2$ representing the 
  sets of vectors, and a set of
  vertices $\mathcal{C}=\{c_i \mid 1 \leq i \leq d\}$ representing the 
  coordinates. The edges $E$ are defined as follows: the start vertex~$s$
  has an edge to every vertex of $S_1$ and every vertex of $S_2$ has an edge to $s$;
  further for each $x \in S_1$
  there is an edge {to} $c_i \in \mathcal{C}$ iff $x_i=1$ and for each $y \in S_2$
  there is an edge {from} $c_i \in \mathcal{C}$ iff $y_i=1$.
  \item The set of vertices $V$ is partitioned into player~1 vertices $V_1= S_1 \cup 
  S_2 \cup \mathcal{C}$ and player~2 vertices $V_2=\{s\}$.
 \end{itemize}}
   Finally, the generalized Büchi objective is given by $\bigwedge_{v \in S_2} \objsty{Büchi}{\target_v}$ with $\target_v=\{v\}$.
\end{reduction}
{
  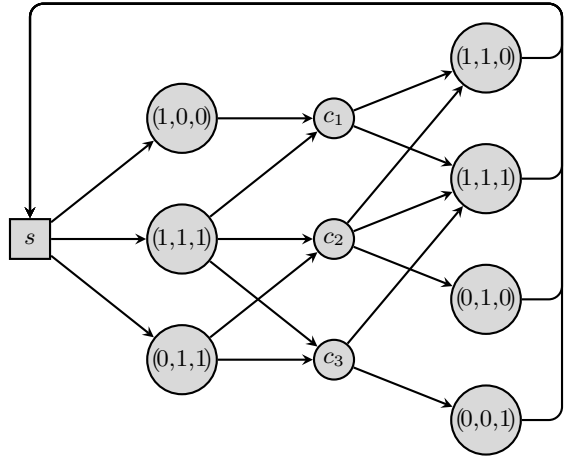
\begin{figure}
  \centering
  \begin{tikzpicture}[yscale=0.8,>=stealth]
      \footnotesize
 		\path 	node[player2](s){$s$}
			++(2,2) node[vplayer1](x1){$(\!1,\!0,\!0\!)$}
			++(0,-2)node[vplayer1] (x2){$(\!1,\!1,\!1\!)$}
			++(0,-2) node[vplayer1](x3){$(\!0,\!1,\!1\!)$}
			(4,2) node[player1](c1){$c_1$}
			++(0,-2) node[player1](c2){$c_2$}
			++(0,-2) node[player1](c3){$c_3$}
			(6,3) node[vplayer1](y1){$(\!1,\!1,\!0\!)$}
			++(0,-2) node[vplayer1](y2){$(\!1,\!1,\!1\!)$}
			++(0,-2)node[vplayer1] (y3){$(\!0,\!1,\!0\!)$}
			++(0,-2) node[vplayer1](y4){$(\!0,\!0,\!1\!)$}
			;
		\normalsize
		\path [->, thick]
			(s) edge (x1)
			(s) edge (x2)
			(s) edge (x3)
			(x1) edge (c1)
			(x2) edge (c1)
			(x2) edge (c2)
			(x2) edge (c3)
			(x3) edge (c2)
			(x3) edge (c3)
			[<-]
			(y1) edge (c1)
			(y1) edge (c2)
			(y2) edge (c1)
			(y2) edge (c2)
			(y2) edge (c3)
			(y3) edge (c2)
			(y4) edge (c3)
			;
		\draw[left,thick,->,rounded corners=5pt]  (y1) -- ++(1,0) -- (7,3.9) -- (0,3.9) -- (s);
		\draw[left,thick,->,rounded corners=5pt]  (y2) -- ++(1,0) -- (7,3.9) -- (0,3.9) -- (s);
		\draw[left,thick,->,rounded corners=5pt]  (y3) -- ++(1,0) -- (7,3.9) -- (0,3.9) -- (s);
		\draw[left,thick,->,rounded corners=5pt]  (y4) -- ++(1,0) -- (7,3.9) -- (0,3.9) -- (s);
  \end{tikzpicture} 
  \caption{Illustration of Reduction~\ref{red:OVtoConjBuchiGraphGames}, for $S_1=\{(1,0,0), (1,1,1), (0,1,1)\}$ and 
    $S_2=\{(1,1,0), (1,1,1), (0,1,0), (0,0,1)\}$.}
  \label{fig:OVtoConjBuchiGraphGames}
  \end{figure}
  } 

{The correctness of the reduction is by the following lemma.}

\begin{lemma}\label{lem:OVtoConjBuchiGraphGames}
    {Given two sets $S_1, S_2$ of $d$-dimensional vectors,
  the corresponding graph game~$\game$ given by
  Reduction~\ref{red:OVtoConjBuchiGraphGames}, and  
  $\target_v=\{v\}$ for $v \in S_2$,
  the following statements are equivalent:
  \begin{enumerate}
    \item There exist orthogonal vectors $x \in S_1$ and $y \in S_2$.
    \item $s \not\in W_1(\game, \bigwedge_{v \in S_2} \objsty{Büchi}{\target_v})$
    \item The winning set $W_1(\game, \bigwedge_{v \in S_2} \objsty{Büchi}{\target_v})$ is empty.
  \end{enumerate}}
  \end{lemma}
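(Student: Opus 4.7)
The plan is to establish the three-way equivalence via the cycle (1)$\,\Rightarrow\,$(2), (2)$\,\Rightarrow\,$(1), and (2)$\,\Leftrightarrow\,$(3). The direction (3)$\,\Rightarrow\,$(2) is immediate. For (2)$\,\Rightarrow\,$(3) I exploit the rigid topology of $\game$: each $y \in S_2$ has its sole outgoing edge going to $s$, each $c_i \in \mathcal{C}$ sends all its outgoing edges to $S_2$, and each $x \in S_1$ sends all its outgoing edges to $\mathcal{C}$. Iterating the player-2 attractor from $\{s\}$ therefore absorbs first $S_2$, then $\mathcal{C}$, then $S_1$, so that $\att(\game, \{s\}) = V$. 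By Theorem~\ref{th:det}, $s \notin W_1$ gives $s \in W_2$, and then Lemma~\ref{lem:attr}(\ref{sublem:subgraph}) applied to $\{s\} \subseteq W_2(\game)$ yields $W_2(\game) = V$, i.e.\ $W_1(\game, \bigwedge_{v \in S_2} \objsty{Büchi}{\target_v}) = \emptyset$.

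For (1)$\,\Rightarrow\,$(2) I build an explicit winning strategy for player~2 from $s$. Given orthogonal $x \in S_1$ and $y \in S_2$, the strategy always moves from $s$ to $x$. Orthogonality means there is no index $i$ with $x_i = y_i = 1$, so the $c_i$'s reachable from $x$ are exactly those with $y_i = 0$, and from any such $c_i$ the vertex $y$ is unreachable. Hence $y$ is never visited on any resulting play, $\objsty{Büchi}{\{y\}}$ fails, and player~2 wins from $s$.

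For (2)$\,\Rightarrow\,$(1) I argue the contrapositive: assume that $x \cdot y \neq 0$ for every $(x, y) \in S_1 \times S_2$, and exhibit a (memory-based) winning strategy for player~1 from $s$. Enumerate $S_2 = \{y_0, \dots, y_{N-1}\}$ and let the strategy, after the $t$th visit to $s$, adopt $y_{t \bmod N}$ as its current target. Whenever player~2 moves from $s$ to some $x \in S_1$, non-orthogonality of $x$ with the current target provides an index $i$ with $x_i = (y_{t \bmod N})_i = 1$; the strategy then plays $x \to c_i \to y_{t \bmod N} \to s$, which is available by construction of the edges. Thus every $y \in S_2$ is visited during each block of $N$ consecutive rounds, so all conjuncts $\objsty{Büchi}{\{y\}}$ are satisfied irrespective of player~2's moves, giving $s \in W_1$.

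The main (minor) obstacle is a bookkeeping point: the standing assumption $\OutDeg(u) \ge 1$ requires that every $c_i \in \mathcal{C}$ has at least one successor in $S_2$ (and similarly every $x \in S_1$ has an outgoing edge). This is ensured by a harmless preprocessing of the OV instance that discards coordinates not used by any $y \in S_2$ (respectively vectors $x \in S_1$ that are identically zero, which trivially would not yield an orthogonal pair). With this normalization the three implications above go through, and the reduction preserves the size of the OV instance up to a linear factor, as needed for the running-time transfer claimed in Theorem~\ref{thm:genbuchi_OVChard}.
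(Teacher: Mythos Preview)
Your argument is correct and follows the same overall plan as the paper: the implications (1)$\Rightarrow$(2), (2)$\Rightarrow$(1), and (2)$\Leftrightarrow$(3) are established by essentially the same mechanisms. The differences are cosmetic: for (2)$\Leftrightarrow$(3) the paper argues directly that every infinite play must pass through $s$ (since $\game\setminus\{s\}$ is acyclic), whereas you phrase this as $\att(\game,\{s\})=V$; for the out-degree issue the paper adds the all-ones vector to $S_2$ rather than discarding unused coordinates.

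There is one genuine slip in your bookkeeping paragraph. You write that vectors $x\in S_1$ that are identically zero ``trivially would not yield an orthogonal pair.'' The opposite is true: the zero vector is orthogonal to \emph{every} $y\in S_2$, so if such an $x$ exists the OV instance is a trivial YES and the reduction is not needed. The preprocessing is still harmless, but for the stated reason reversed: detect this case up front and answer YES, then proceed with the reduction only when every $x\in S_1$ has at least one $1$-coordinate surviving the coordinate pruning (so every vertex in $S_1$ has an outgoing edge).
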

\begin{proof}
 W.l.o.g.\ we assume that the $1$-vector, i.e., the vector with all coordinates being $1$, is  contained in $S_2$
 (adding the $1$-vector does not change the result of the OV instance), which guarantees that each vertex $c \in \mathcal{C}$ has at least 1 outgoing edge.
 Then a play in the game graph~$\game$ proceeds as follows. 
 Starting from $s$, player~2 chooses a vertex $x \in S_1$; 
 then player~1 first picks a vertex $c \in \mathcal{C}$  and then a vertex $y \in S_2$; 
 then the play goes back to $s$ (at each $y \in S_2$ player~1 has only this choice), 
 starting another cycle of the play.
 {\par}
 (1)$\Rightarrow$(2): Assume there are orthogonal vectors $x \in S_1$ and $y \in S_2$.
    Now player~2 can satisfy $\objsty{coBüchi}{\target_y}$ by simply going to $x$ 
    whenever the play is in $s$.
    Then player~1 can choose some adjacent $c \in \mathcal{C}$
    and then some adjacent vertex in $S_2$, 
    but as $x$ and $y$ are orthogonal, this $c$ is not connected to $y$. 
    Thus the play will never visit $y$.
 {\par}
 (2)$\Rightarrow$(1): 
  {By the fact that generalized Büchi games satisfy Determinacy, i.e., 
    $W_1=V \setminus W_2$ (cf.\ Theorem~\ref{th:det}), 
    we have that (2) is equivalent to  $s \in W_2(\game, \bigwedge_{v \in S_2} \objsty{Büchi}{\target_v})$.}
     Assume $s \in W_2(\game, \bigwedge_{v \in S_2} \objsty{Büchi}{\target_v})$ and consider
    a corresponding strategy for player~2 that satisfies $\bigvee_{v \in S_2} \objsty{coBüchi}{\target_v}$. 
    Notice that the graph is such that player~2 has to visit at least one of the vertices $v$ in $S_1$
    infinitely often.
    Moreover, for such a vertex $v$ then player~1 can visit all vertices $v' \in S_2$ that correspond to vectors not orthogonal to~$v$ infinitely often. 
    That is, if $v$ has no orthogonal vector, player~1 can satisfy all the Büchi constraints,
    a contradiction to our assumption that $s \in W_2(\game, \bigwedge_{v \in S_2} \objsty{Büchi}{\target_v})$.
    Thus there must be a vector $x\in S_1$ such that there exists a vector $y \in S_2$ that is orthogonal to~$x$.
  {\par
  (2) $\Leftrightarrow$ (3):
     Notice that when removing $s$ from the graph we get an acyclic graph and 
     thus each infinite  path has to contain $s$ infinitely often. 
     Certainly if $s$ is in the winning set of player~1, this set is non-empty. 
     Thus let us assume there is a vertex $v$ different from $s$ with a winning strategy $\sigma$.
     All (winning) paths starting in $v$ cross $s$ after at most $3$ steps and thus 
     the strategy~$\sigma$ is also winning for player~1 when the play starts at~$s$.}
\end{proof}
Let $N=\max(|S_1|,|S_2|)$.
The number of vertices in the game graph, constructed by Reduction~\ref{red:OVtoConjBuchiGraphGames}, is $O(N)$, 
the number of edges~$m$ is $O(N \log N)$ (recall that $d \in O(\log N)$), 
we have $k \in \Theta(N)$ target sets, each of size $1$, and the construction can be performed in $O(N \log N)$ time.
Thus, if we would have an $O(m^{2-\epsilon})$ or $O(\min_{1 \leq i\leq k} b_i \cdot (k \cdot m)^{1-\epsilon})$ time
algorithm for any $\epsilon > 0$, we
would immediately get an  $O(N^{2-\epsilon})$ algorithm for OV, which contradicts OVC (and thus SETH).
{
\begin{remark}
Notice that the conditional lower bounds apply to instances with $k \in \Theta(n^c)$ for arbitrary $0<c\leq 1$, 
although the reductions produce graphs with $k \in \Theta(n)$. 
The instances constructed by the reductions have the property that whenever player~2 has a winning strategy, 
he also has a winning strategy for a specific co-Büchi set $\target_v$. 
Now instead of solving the instance with $\Theta(n)$ many target sets, 
one can simply consider $O(n^{1-c})$ many instances with $\Theta(n^c)$ target sets 
and obtain the winning set for player~2 in the original instance by the union of the 
player~2 winning sets of the new instances. Finally, towards a contradiction, assume there would 
be an $O((k\cdot f(n,m))^{1-\epsilon})$-time algorithm for $k \in \Theta(n^c)$.
Then together we the above observation we would get an $O((k\cdot f(n,m))^{1-\epsilon})$-time 
algorithm for the original instance.
\end{remark}}
{\begin{remark}
In both reductions the constructed graph becomes acyclic when deleting vertex~$s$. Thus, our lower bounds
also apply for a broad range of digraph parameters. 
For instance let $w$ be the DAG-width~\cite{BerwangerDHK06} of a graph, then 
there is no $O(f(w) \cdot (k\cdot n^2)^{1-\epsilon})$-time algorithm \upbr{under BMM} and no $O(f(w)\cdot (k m)^{1-\epsilon})$-time algorithm \upbr{under SETH}.
\end{remark}}

\newcommand{\pp}{\mbox{\small ++}}

\section{Generalized Reactivity-1 Games}\label{sec:GR1Alg}

GR(1) games deal with an objective of the form 
$\bigwedge_{t=1}^{k_1} \objsty{Büchi}{L_t} \rightarrow \bigwedge_{\ell=1}^{k_2} \objsty{Büchi}{U_\ell}$ and 
can be solved in $O(k_1 k_2 \cdot  m \cdot n)$ time~\cite{JuvekarP06}
with an extension of the progress measure algorithm of~\cite{Jurdzinski00}
and in $O((k_1 k_2 \cdot n)^{2.5})$ time by combining the reduction 
to one-pair Streett objectives by~\cite{BloemCGHJ10} with the algorithm 
of~\cite{ChatterjeeHL15}. {}
In this section we develop an $O(k_1 k_2 \cdot n^{2.5})$-time
algorithm by modifying the algorithm of~\cite{JuvekarP06} to compute dominions.
We further use our 
$O(k \cdot n^2)$-time algorithm for generalized Büchi games with $k = k_1$
as a subroutine.

{\smallskip\noindent\emph{Section Outline.}}
We first describe a basic, direct algorithm for GR(1) games that is based
on repeatedly identifying player-2 dominions in generalized Büchi games. 
We then show how the progress measure 
algorithm of~\cite{JuvekarP06} can be modified to identify 
player-2 dominions in generalized Büchi games with $k_1$ Büchi objectives 
in time proportional to $k_1 \cdot m$ times the size of the dominion. 
In the $O(k_1 k_2 \cdot n^{2.5})$-time algorithm we use the modified 
progress measure algorithm in combination with the hierarchical graph decomposition 
of~\cite{ChatterjeeH14,ChatterjeeHL15} to identify dominions that contain up to 
$\sqrt{n}$ vertices and our $O(k_1 \cdot n^2)$-time algorithm for generalized 
Büchi games to identify dominions with more than $\sqrt{n}$ vertices.
Each time we search for a dominion we might have to consider $k_2$ different
subgraphs.

{\smallskip\noindent\emph{Notation.} 
In the algorithms and their analysis we denote the sets in the} $j$th-iteration of our algorithms 
with superscript~$j$, in particular
$\game^1 = \game$, where $\game$ is the input game graph, 
$G^j$ is the graph of $\game^j$,
$V^j$ is the vertex set of $G^j$, $V_1^j =V_1 \cap V^j$, $V_2^j =V_2 \cap V^j$,
$L^j_t = L_t \cap V^j$, and $U^j_\ell = U_\ell \cap V^j$.

{\subsection{Basic Algorithm for GR(1) Objectives}}
Similar to generalized Büchi games, the basic algorithm for GR(1) games, described in Algorithm~\ref{alg:GR1GameBasic},
identifies a player-2 dominion~$S^j$, removes the dominion and its 
player-2 attractor~$D^j$
from the graph, and recurses on the remaining game graph~$\game^{j+1} = \game^j \setminus D^j$. If no player-2 dominion
is found, the remaining set of vertices~$V^j$ is returned as the winning set 
of player~1. Given the set $S^j$ is indeed a player-2 dominion,
the correctness of this approach follows from Lemma~\ref{lem:attr}(\ref{sublem:subgraph}).
A player-2 dominion in $\game^j$ is identified as follows:
For each $1 \le \ell \le k_2$ first the player-1 attractor~$Y^j_\ell$
of $U^j_\ell$ is temporarily removed from the graph. 
Then a generalized Büchi game with target sets $L^j_1, \ldots,
L^j_{k_1}$ is solved on $\overline{\game^j \setminus Y^j_\ell}$. 
The generalized Büchi player in this game corresponds
to player~2 in the GR(1) game and his winning set to a player-2 dominion
in the GR(1) game.
Note that $V^j \setminus Y^j_\ell$ is player-1 closed and does not contain
$U^j_\ell$. Thus if in the game induced by the vertices of $V^j \setminus Y^j_\ell$
player~2 can win w.r.t.\ the generalized
Büchi objective $\bigwedge_{t=1}^{k_1} \objstytxt{Büchi}{L^j_t}$, 
then these vertices form a player-2 dominion in the GR(1) game. 
{This observation is formalized in Lemma~\ref{lem:dominionGR1ConjBuchi}.} Further, we can 
show that when a player-2 dominion in the GR(1)
game on $\game^j$ exists, then for one of the sets $U^j_\ell$ the winning set 
of the generalized Büchi game on $\overline{\game^j \setminus Y^j_\ell}$ is non-empty;
otherwise we can construct a winning strategy of player~1 for the GR(1) game
on $\game^j${(see Lemma~\ref{lem:domwithoutU} and Proposition~\ref{prop:p1win_GR1Basic})}.
Note that this algorithm computes a player-2 dominion $O(k_2 \cdot n)$ often using our $O(k_1 \cdot n^2)$-time generalized Büchi Algorithm~\ref{alg:ConjBuchi}{ from Section~\ref{sec:genBuchifast}}.

\begin{theorem}\label{th:GR1GameBasic}
The basic algorithm for GR(1) games computes the winning set of player~1 
in $O(k_1 \cdot k_2 \cdot n^3)$~time.
\end{theorem}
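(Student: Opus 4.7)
The plan is to verify soundness, completeness, and the running time bound separately, reusing the ingredients already developed in the excerpt.

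For soundness, I would argue that whenever the algorithm declares a set $S^j$ to be a player-2 dominion, this is in fact correct. Fix an iteration~$j$ and an index~$\ell$ for which $S^j$ is returned as the winning set of player~2 in the generalized Büchi game on $\overline{\game^j \setminus Y^j_\ell}$ with targets $L^j_1, \dots, L^j_{k_1}$. Since $Y^j_\ell = \ato(\game^j, U^j_\ell)$, by Lemma~\ref{lem:attr}(\ref{sublem:complattr}) the complement $V^j \setminus Y^j_\ell$ is $1$-closed in $\game^j$, hence by Lemma~\ref{lem:attr}(\ref{sublem:winclosed}) the winning set of player~2 in the subgame is still winning for player~2 in $\game^j$ when restricted to the objective $\bigwedge_{t} \objsty{Büchi}{L^j_t}$. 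Because $S^j$ avoids $U^j_\ell$ (as $U^j_\ell \subseteq Y^j_\ell$) and player~2 can keep plays inside $S^j$ while satisfying all $L$-Büchi conditions, he satisfies $\bigwedge_t \objsty{Büchi}{L_t} \wedge \objsty{coBüchi}{U_\ell}$, which falsifies the GR(1) implication. Thus $S^j$ is a player-2 dominion of the GR(1) game on $\game^j$, and by Lemma~\ref{lem:attr}(\ref{sublem:subgraph}) also $D^j = \att(\game^j, S^j)$ and removing it preserves the remaining winning set of player~1.

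For completeness I would show that if in some iteration no non-empty $S^j$ is found, then $V^j \subseteq W_1(\game^j, \obj)$. The key observation is that for each $\ell$ the subgame on $V^j \setminus Y^j_\ell$ is won by player~1 w.r.t.\ $\bigvee_t \objsty{coBüchi}{L_t}$ (the negation of the generalized Büchi objective of player~2 in the reversed game). Combining these $k_2$ strategies with the attractor strategies into the sets $U^j_\ell$ yields a winning strategy for player~1 for the GR(1) objective on $V^j$: whenever the play enters $\at_1(\game^j, U^j_\ell)$, player~1 uses the attractor strategy to reach $U^j_\ell$, and otherwise she plays the winning strategy of the generalized co-Büchi game in $V^j \setminus Y^j_\ell$, which guarantees that some $L_t$ is visited only finitely often whenever the play stays outside all attractors forever. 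Thus either all $U_\ell$ are visited infinitely often, or the assumption $\bigwedge_t \objsty{Büchi}{L_t}$ is violated; in both cases the GR(1) objective is satisfied.

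For the runtime, each outer iteration $j$ performs $k_2$ attractor computations (each $O(m) = O(n^2)$) and at most $k_2$ invocations of the $O(k_1 \cdot n^2)$-time generalized Büchi algorithm (Algorithm~\ref{alg:ConjBuchi}) on the game $\overline{\game^j \setminus Y^j_\ell}$, giving a per-iteration cost of $O(k_1 \cdot k_2 \cdot n^2)$. Because every outer iteration that does not terminate removes at least one vertex from the graph, there are at most $n$ outer iterations, yielding the claimed total bound of $O(k_1 \cdot k_2 \cdot n^3)$.

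The main obstacle I anticipate is the completeness argument, specifically the construction of a single winning player-1 strategy from the $k_2$ individual winning strategies obtained in the subgames on $V^j \setminus Y^j_\ell$. One has to be careful that switching between these memoryless strategies (as the play enters and leaves different attractor regions) does not cause player~2 to trap the play in a way that simultaneously avoids some $U_\ell$ and satisfies all $L_t$ infinitely often; writing this out rigorously is the delicate part, and the cleanest route is probably to argue contrapositively by exhibiting a player-2 dominion in $\game^j$ whenever player~1 does not win from all of $V^j$, contradicting the assumption that all $k_2$ subgame calls return the empty set.
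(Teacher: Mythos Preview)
Your proposal is correct and follows the same three-part structure (each set $S^j$ is a player-2 dominion; the remaining $V^{j^*}$ is player-1 winning; at most $n$ outer iterations times $O(k_1 k_2 n^2)$ work) that the paper uses, with one terminological flip: what you call ``soundness'' the paper calls completeness (Proposition~\ref{prop:p2win_GR1Basic}) and vice versa (Proposition~\ref{prop:p1win_GR1Basic}).

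The one place your sketch is genuinely underspecified is exactly the one you flag: in the player-1 strategy construction you write ``whenever the play enters $\ato(\game^j, U^j_\ell)$, player~1 uses the attractor strategy'', but you do not say which $\ell$ governs the current phase, and without that the switching argument does not go through. The paper resolves this with a cyclic counter $c\in\{1,\dots,k_2\}$: play the disjunctive co-B\"uchi winning strategy on $\game^j\setminus Y^j_c$ until the play enters $Y^j_c$, then follow the attractor to $U^j_c$, then increment $c$ modulo $k_2$. Either the counter cycles forever (all $U_\ell$ seen infinitely often) or it eventually freezes at some $c$ (play stays in $V^j\setminus Y^j_c$, so the co-B\"uchi strategy there kills some $L_t$). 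Your suggested contrapositive route---show any player-2 dominion contains a sub-dominion disjoint from some $U_\ell$---is precisely the content of the paper's Lemma~\ref{lem:domwithoutU}, whose proof is again this same counter construction; so both routes converge on the same idea.
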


{
We first show that the dominions we compute via the generalized Büchi games are indeed 
player-2 dominions for the GR(1) game.
\begin{lemma}\label{lem:dominionGR1ConjBuchi}
 We are given a game with game graph~$\game$ and GR(1) 
 objective $\bigwedge_{t=1}^{k_1}\objsty{Büchi}{L_t} \rightarrow \bigwedge_{\ell=1}^{k_2} \objsty{Büchi}{U_\ell}$.
  Each player-1 dominion $D$ of the game graph $\overline{\game}$ with generalized Büchi objective 
  $\bigwedge_{t=1}^{k_1} \objsty{Büchi}{L_t}$,  
  for which there is an index $1 \leq \ell \leq k_2$ with $D \cap U_\ell=\emptyset$,
  is a player-2 dominion of $\game$ with the original GR(1) objective.
\end{lemma}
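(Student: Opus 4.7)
The plan is to unpack the definition of a player-1 dominion in $\overline{\game}$ and translate it directly into a player-2 winning strategy for the GR(1) objective in $\game$. First, by definition of $\overline{\game}$, a strategy of player~1 in $\overline{\game}$ coincides with a strategy of player~2 in $\game$ on the same underlying graph, and vice versa. Hence, since $D$ is a player-1 dominion in $\overline{\game}$ for the generalized Büchi objective $\bigwedge_{t=1}^{k_1} \objsty{Büchi}{L_t}$, there is a strategy~$\strab$ of player~2 in $\game$ such that, starting from any vertex $v \in D$, every play~$\pat$ consistent with~$\strab$ stays within $D$ and satisfies $\Inf(\pat) \cap L_t \neq \emptyset$ for all $1 \le t \le k_1$.

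Next, I would argue that this same strategy~$\strab$ is winning for player~2 in $\game$ with respect to the negation of the GR(1) objective, namely
\[
\bigwedge_{t=1}^{k_1} \objsty{Büchi}{L_t} \ \wedge\  \bigvee_{\ell=1}^{k_2} \objsty{coBüchi}{U_\ell}.
\]
The first conjunct is already guaranteed by~$\strab$. For the second, I would invoke the hypothesis that there exists an index~$\ell$ with $D \cap U_\ell = \emptyset$: since every $\pat$ consistent with~$\strab$ starting in $D$ never leaves $D$, it never visits $U_\ell$ at all, so $\Inf(\pat) \cap U_\ell = \emptyset$ and therefore $\pat \in \objsty{coBüchi}{U_\ell}$.

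Combining these two observations, the strategy~$\strab$ both keeps the play inside $D$ and forces the negation of the GR(1) objective to hold, from every starting vertex in $D$ and against every player-1 strategy. By the definition recalled just before the lemma, this is exactly the statement that $D$ is a player-2 dominion of $\game$ with the GR(1) objective. No step seems genuinely hard; the only place to be careful is making sure that the ``stays in $D$'' clause of the dominion definition transfers from $\overline{\game}$ to $\game$, but this is immediate since the game graphs share the same edge set and $\strab$ is literally the same function in both games.
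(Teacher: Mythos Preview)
Your proposal is correct and follows essentially the same approach as the paper's proof: translate the player-1 dominion strategy in $\overline{\game}$ into a player-2 strategy in $\game$, observe it visits all $L_t$ infinitely often while staying in $D$, and use $D \cap U_\ell = \emptyset$ to conclude the consequent of the GR(1) implication fails. The paper's proof is a terse two-sentence version of exactly this argument; your write-up is simply more explicit about the strategy correspondence and about why both conjuncts of the negated GR(1) objective hold.
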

\begin{proof}
 By definition of a dominion, in  $\overline{\game}$ player~1 has a strategy that visits all sets $L_t$ infinitely often and 
 only visits vertices in $D$.
 But then for some $\ell$ the Büchi set $U_\ell$ is not visited at all and thus in $\game$ the strategy is winning for player~2 w.r.t.\ the GR(1) objective.
\end{proof}

Next we show that each player-2 dominion contains a sub-dominion that does not intersect with 
one of the sets $U_\ell$, and thus can be computed via generalized Büchi games.

\begin{lemma}\label{lem:domwithoutU}
 We are given a game with game graph $\game$ and 
 GR(1) objective $\bigwedge_{t=1}^{k_1} \objsty{Büchi}{L_t} \rightarrow \bigwedge_{\ell=1}^{k_2} \objsty{Büchi}{U_\ell}$.
 Each player-2 dominion $D$ has a subset $D' \subseteq D$ that is a player-2 dominion
 with $D' \cap U_\ell=\emptyset$ for some $1 \leq \ell \leq k_2$.
\end{lemma}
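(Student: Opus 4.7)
If $D \cap U_\ell = \emptyset$ for some $\ell$, then $D' = D$ works; otherwise assume $D \cap U_\ell \neq \emptyset$ for every $\ell$. For each $\ell$ let $A_\ell = \ato(\game[D], U_\ell \cap D)$ and let $X_\ell$ be the winning set of player~$2$ for the generalized Büchi objective $\bigwedge_{t=1}^{k_1} \objsty{Büchi}{L_t}$ in the sub-game $\game[D \setminus A_\ell]$ (taken to be $\emptyset$ when $A_\ell = D$). The plan is to show that some $X_\ell$ is non-empty and that $D' := X_\ell$ is then the desired dominion.

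The crux is non-emptiness, which I would establish by contradiction. Assume $X_\ell = \emptyset$ for every $\ell$. By determinacy (Theorem~\ref{th:det}), for each $\ell$ with $D \setminus A_\ell \ne \emptyset$ player~$1$ has a memoryless winning strategy $\sigma_\ell$ in $\game[D \setminus A_\ell]$ for the complementary objective $\bigvee_{t=1}^{k_1} \objsty{coBüchi}{L_t}$; this being a Rabin objective with pairs $(L_t, V)$, memoryless strategies suffice for player~$1$. I would then combine these $\sigma_\ell$'s with attractor strategies into a finite-memory strategy on $\game[D]$ that maintains a current target $\ell \in [k_2]$: at vertices in $A_\ell$ (or when $D \setminus A_\ell = \emptyset$) follow player-$1$'s attractor towards $U_\ell \cap D$, otherwise follow $\sigma_\ell$; each visit to~$U_\ell$ advances the target cyclically. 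For any player-$2$ response, either the target changes infinitely often, in which case every $U_\ell$ is visited infinitely often and so the GR$(1)$ consequent holds, or the target stabilizes at some $\ell^*$, in which case the play cannot be in $A_{\ell^*}$ infinitely often (the attractor would trigger a target change), so eventually the play stays inside $D \setminus A_{\ell^*}$ where $\sigma_{\ell^*}$ forces some $L_{t^*}$ to be visited only finitely often, falsifying the GR$(1)$ antecedent. In both cases player~$1$ wins GR$(1)$ from every vertex of $D$, contradicting that $D$ is a player-$2$ dominion.

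Picking such an $\ell$ and setting $D' := X_\ell$, we get $D' \subseteq D \setminus A_\ell \subseteq D \setminus U_\ell$, hence $D' \cap U_\ell = \emptyset$. Player-$1$ closedness of $D'$ in $\game$ then follows by chaining: $D'$ is player-$1$ closed in $\game[D \setminus A_\ell]$ as a player-$2$ winning region, $D \setminus A_\ell$ is player-$1$ closed in $\game[D]$ by Lemma~\ref{lem:attr}(\ref{sublem:complattr}), and $D$ itself is player-$1$ closed in $\game$ since it is a player-$2$ dominion. Player-$2$'s strategy witnessing $X_\ell$ in $\game[D \setminus A_\ell]$ uses only edges inside $X_\ell = D'$, so it is also a valid strategy in~$\game$; combined with $D' \cap U_\ell = \emptyset$ (which automatically takes care of $\objsty{coBüchi}{U_\ell}$) it wins the full GR$(1)$-complement objective, so $D'$ is the required player-$2$ dominion. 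The main obstacle is the cyclic combined-strategy construction used in the contradiction step: one must carefully track that any excursion of player~$2$ from $D \setminus A_{\ell^*}$ into $A_{\ell^*}$ triggers the attractor and forces a target change, which is what cleanly reduces the play analysis to the two stated cases.
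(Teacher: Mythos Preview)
Your proposal is correct and follows essentially the same route as the paper's proof: both restrict to the player-1 closed sub-game $\game[D]$, remove the player-1 attractor of $U_\ell$ for each $\ell$, assume (towards a contradiction) that player~2 wins nowhere in any of these sub-games, and then build a finite-memory player-1 strategy on $\game[D]$ that cycles a counter through $\{1,\ldots,k_2\}$, alternating between the disjunctive co-B\"uchi winning strategy on $D\setminus A_\ell$ and the attractor strategy towards $U_\ell$; the two-case analysis (counter stabilizes vs.\ cycles forever) is identical. Your write-up is in fact slightly more explicit than the paper's in two places: you name $X_\ell$ directly as the player-2 generalized-B\"uchi winning set (the paper phrases this via the GR(1) objective, which coincides since $U_\ell$ is absent), and you spell out the closed-set chaining that lifts $D'$ from $\game[D\setminus A_\ell]$ to $\game$.
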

\begin{proof}
First, note that $D$ is closed for player-1 and thus by the definition of a 2-dominion
we have that $D$ is equal to $W_2(\game[D],\bigwedge_{t=1}^{k_1} \objsty{Büchi}{L_t \cap D} \rightarrow \bigwedge_{\ell=1}^{k_2} \objsty{Büchi}{U_\ell \cap D})$.
Moreover, as each 1-closed set in $\game[D]$ is also 1-closed in $\game$, 
a set $D' \subseteq D$ is a player-2 dominion of $\game$ iff it is a player-2 dominion of $\game[D]$.

Towards a contradiction, assume that there does not exist such a player-2 dominion $D'$ in $\game[D]$. We will show that then player~1 can win from the vertices of~$D$.
By the assumption we have that player-1 has a winning strategy for each
$1 \le \ell \le k_2$ for the game graph
$
  \game^\ell[D] =\game[D] \setminus \ato(\game[D], U_\ell)
$
w.r.t.\ the GR(1) objective.
As $U_\ell \cap \game^\ell[D] =\emptyset$, the same strategy is also winning for
the disjunctive co-Büchi objective $\bigvee_{t=1}^{k_1} \objsty{coBüchi}{L_t}$.
Now consider the following strategy for player~1 in $\game[D]$.
The winning strategy of player~1 is constructed from her winning strategies for 
the game graphs $\game^\ell[D]$ and the attractor strategies for 
$\ato(\game[D], U_\ell)$ for $1 \le \ell \le k_2$ as follows.
Player~1 maintains a counter $c \in \set{1, \ldots, k_2}$ that is initialized to~1.
As long as the current vertex in the play is contained in $\game^c[D]$,
player~1 plays her winning strategy for $\game^c[D]$.
If a vertex of $\ato(\game[D], U_c)$ is reached, player~1
follows the corresponding attractor strategy until $U_c$ is reached. Then 
player~1 increases the counter by one or sets the counter to~1 if its value was
$k_2$ and continues playing the above strategy for the new value $c$.
In each  play one of two cases must happen:
\begin{itemize}
 \item Case~1: After some prefix of the play for some counter value $c$ the set 
$\ato(\game[D], U_c)$ is never reached. Then the play satisfies the 
disjunctive co-Büchi objective $\bigvee_{t=1}^{k_1} \objsty{coBüchi}{L_t}$ and 
thus the GR(1) objective.
 \item Case~2: For all $c \in \set{1, \ldots k_2}$
the set $U_c$ is reached infinitely often. Then the play satisfies the
generalized Büchi objective $\bigwedge_{\ell=1}^{k_2} \objsty{Büchi}{U_\ell}$ and 
thus the GR(1) objective.
\end{itemize}
Hence, we have shown that $D \subseteq W_1$, a contradiction.
\end{proof}

\begin{algorithm}[t]
		\SetAlgoRefName{GR(1)GameBasic}
	\caption{GR(1) Games in $O(k_1 \cdot k_2 \cdot n^3)$ Time} 
	\label{alg:GR1GameBasic}
	\SetKwInOut{Input}{Input}
	\SetKwInOut{Output}{Output}
	\SetKw{break}{break}
	\BlankLine
	\Input{Game graph $\game$, Obj.\ $\bigwedge_{t=1}^{k_1} \objsty{Büchi}{L_t} \rightarrow \bigwedge_{\ell=1}^{k_2} \objsty{Büchi}{U_\ell}$}
	\Output
	{
	  Winning set of player~1
	}
	\BlankLine
	$\game^1
	\gets \game$\;
	$\set{U^1_\ell} \gets \set{U_\ell}$; $\set{L^1_t} \gets \set{L_t}$\;
	$j \gets 0$\;
	\Repeat
	  {
	    $D^j= \emptyset$
	  }
	  {
	    $j \gets j+1$ \;
	    \For{$1 \leq \ell \leq k_2$}
	      {
		  $Y^j_\ell \gets \ato(\game^j, U^j_\ell)$\;
		  $S^j \gets W_1\left(\overline{\game^j \setminus Y^j_\ell}, 
		  \bigwedge_{t=1}^{k_1} \objsty{Büchi}{L^j_t  \setminus Y^j_\ell} \right)$\;
		  \lIf{$S^j \not= \emptyset$}{\break}
	      }
	     $D^j \gets \att(\game^j, S^j)$\;
	    $\game^{j+1} 
	    \gets \game^j \setminus D^j$\;
	    $\set{U^{j+1}_\ell} \gets \set{U^j_\ell \setminus D^j}$; 
	    $\set{L^{j+1}_t} \gets \set{L^j_t \setminus D^j}$\;  
	  }
	  \Return $V^j$
\end{algorithm}

Let the $j^*$th iteration be the last iteration of Algorithm~\ref{alg:GR1GameBasic}.
For the final game graph $\game^{j^*}$ we can build a winning strategy for player~1
by combining her winning strategies for the disjunctive objective in the subgraphs $\overline{\game_\ell^{j^*}}$
and the attractor strategies for $\ato(\game^{j^*}, U_\ell)$. 

\begin{proposition}[Soundness Algorithm~\ref{alg:GR1GameBasic}]\label{prop:p1win_GR1Basic}
Let $V^{j^*}$ be the set of vertices returned by Algorithm~\ref{alg:GR1GameBasic}.
Each vertex in $V^{j^*}$ is winning for player~1.
\end{proposition}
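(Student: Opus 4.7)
The plan is to construct, from every vertex of $V^{j^*}$, an explicit winning strategy for player~1 in the induced subgame $\game^{j^*} = \game[V^{j^*}]$, by combining the witnesses produced by the termination condition with player-1 attractor strategies to the sets $U^{j^*}_\ell$, controlled by a counter as in the proof of Lemma~\ref{lem:domwithoutU}. To reduce to the induced subgame I would first observe that Lemma~\ref{lem:dominionGR1ConjBuchi} ensures each set $S^j$ identified by the algorithm is a player-2 dominion of $\game^j$, so $S^j \subseteq W_2(\game^j)$. Applying Lemma~\ref{lem:attr}(\ref{sublem:subgraph}) with player~2 in the role of $\pl$ and $X = S^j$ yields $W_1(\game^j) = W_1(\game^{j+1})$, and iterating over $j$ gives $W_1(\game) = W_1(\game^{j^*})$. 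It therefore suffices to prove $V^{j^*} \subseteq W_1(\game^{j^*})$.

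Next I would extract the strategies made available by the loop exit. Since $S^{j^*} = \emptyset$, for every $\ell$ we have $W_1\bigl(\overline{\game^{j^*} \setminus Y^{j^*}_\ell},\, \bigwedge_{t=1}^{k_1} \objsty{Büchi}{L^{j^*}_t \setminus Y^{j^*}_\ell}\bigr) = \emptyset$. By Theorem~\ref{th:det} applied to the reversed game graph, player~2 wins this generalized Büchi game from every vertex; reading this back in the original orientation yields for each $\ell$ a player-1 strategy $\sigma_\ell$ on $\game^{j^*} \setminus Y^{j^*}_\ell$ that enforces the disjunctive co-Büchi objective $\bigvee_{t=1}^{k_1} \objsty{coBüchi}{L^{j^*}_t \setminus Y^{j^*}_\ell}$. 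Lemma~\ref{lem:attr}(\ref{sublem:complattr}) ensures that $V^{j^*} \setminus Y^{j^*}_\ell$ is 1-closed in $\game^{j^*}$, so $\sigma_\ell$ is well-defined on that subgame. Let $\tau_\ell$ denote the player-1 attractor strategy on $Y^{j^*}_\ell$ reaching $U^{j^*}_\ell$ provided by Lemma~\ref{lem:attr}(\ref{sublem:attrstr}).

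The combined strategy uses a counter $c \in \{1, \dots, k_2\}$ initialized to $1$: while the current vertex lies in $V^{j^*} \setminus Y^{j^*}_c$ play $\sigma_c$; once $Y^{j^*}_c$ is entered play $\tau_c$ until $U^{j^*}_c$ is visited, then set $c \leftarrow (c \bmod k_2) + 1$ and continue. For any resulting play either the counter stops changing at some value $c$ — in which case the play eventually stays inside $V^{j^*} \setminus Y^{j^*}_c$ (because $\tau_c$ is guaranteed to reach $U^{j^*}_c$ whenever $Y^{j^*}_c$ is entered) and $\sigma_c$ forces $\bigvee_{t=1}^{k_1} \objsty{coBüchi}{L^{j^*}_t}$, falsifying the GR(1) antecedent — or the counter cycles through all values infinitely often, in which case every $U^{j^*}_\ell$ is visited infinitely often and the GR(1) consequent $\bigwedge_{\ell=1}^{k_2} \objsty{Büchi}{U^{j^*}_\ell}$ holds. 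Either way the GR(1) objective is satisfied. The chief obstacle is the bookkeeping verifying that these two cases are exhaustive and that each phase uses a strategy that is well-defined on its current region and respects the 1-closedness of $V^{j^*} \setminus Y^{j^*}_c$, so that the play never forces player~1 to take an edge leaving the region on which $\sigma_c$ or $\tau_c$ was computed; this essentially reproduces the case analysis of Lemma~\ref{lem:domwithoutU} inside $\game^{j^*}$.
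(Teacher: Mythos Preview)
Your proposal is correct and follows essentially the same approach as the paper: construct a player-1 strategy on $\game^{j^*}$ that maintains a cyclic counter $c\in\{1,\dots,k_2\}$, playing the disjunctive co-Büchi winning strategy on $\game^{j^*}\setminus Y^{j^*}_c$ and the attractor strategy inside $Y^{j^*}_c$, and conclude via the same two-case analysis (counter eventually stuck $\Rightarrow$ antecedent fails; counter cycles forever $\Rightarrow$ consequent holds). The only cosmetic difference is your reduction step: you invoke Lemma~\ref{lem:dominionGR1ConjBuchi} and Lemma~\ref{lem:attr}(\ref{sublem:subgraph}) iteratively to get $W_1(\game)=W_1(\game^{j^*})$, whereas the paper simply observes that $V^{j^*}$ is player-2 closed (by repeated application of Lemma~\ref{lem:attr}(\ref{sublem:complattr})) so that a winning strategy confined to $V^{j^*}$ lifts directly to $\game$---both are fine, the paper's route is a touch shorter since it avoids appealing to the dominion property of the removed sets~$S^j$ (that argument properly belongs to the completeness direction).
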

\begin{proof}
First note that $V^{j^*}$ is closed for player~2 by 
Lemma~\ref{lem:attr}(\ref{sublem:complattr}). 
Thus as long as player~1 plays a strategy that stays within $V^{j^*}$, a play
that reaches $V^{j^*}$ will never leave $V^{j^*}$. The following strategy for 
player~1 for the vertices of $V^{j^*}$ satisfies this condition.
The winning strategy of player~1 is constructed from the winning strategies 
of  the disjunctive co-Büchi player, i.e., player~2, in the generalized Büchi 
games with game graphs 
$\overline{\game_\ell^{j^*}} = \overline{\game^{j^*} \setminus Y^{j^*}_\ell}$ 
and objectives $\bigwedge_{t=1}^{k_1} \objsty{Büchi}{L^{j^*}_t \setminus Y^{j^*}_\ell}$
and the attractor strategies for $\ato(\game^{j^*}, U_\ell^{j^*})$ for $1 \le \ell \le k_2$.
Player~1 maintains a counter $c \in \set{1, \ldots, k_2}$ that is initialized to~1
and proceeds as follows.
(1) As long as the current vertex in the play is contained in $G^{j^*}_c = G^{j^*} 
\setminus Y^{j^*}_c$, player~1 plays her winning strategy for the
disjunctive co-Büchi objective on $\game_c^{j^*}$.
(2) If a vertex of $Y^{j^*}_c = \ato(\game^{j^*}, U_c^{j^*})$ is reached, player~1
follows the corresponding attractor strategy until $U_c^{j^*}$ is reached. Then 
player~1 increases the counter by one, or sets the counter to~1 if its value was
$k_2$, and continues with (1).
As the play stays within $V_1^{j^*}$, one of two cases must happen:
Case~1: After some prefix of the play for some counter value $c$ the set 
$\ato(\game^{j^*}, U_c^{j^*})$ is never reached. Then the play satisfies the 
disjunctive co-Büchi objective $\bigvee_{t=1}^{k_1} \objsty{coBüchi}{L_t}$ and 
thus the GR(1) objective. Case~2: For all $c \in \set{1, \ldots k_2}$
the set $U_c^{j^*}$ is reached infinitely often. Then the play satisfies the
generalized Büchi objective $\bigwedge_{\ell=1}^{k_2} \objsty{Büchi}{U_\ell}$ and 
thus the GR(1) objective.
\end{proof}

We show next that whenever Algorithm~\ref{alg:GR1GameBasic} removes vertices from the game graph,
these vertices are indeed winning for player~2. This is due to Lemma~\ref{lem:dominionGR1ConjBuchi} that states that these sets are 
dominions in the current game graph, and Lemma~\ref{lem:attr}(\ref{sublem:subgraph})
that states that all player-2 dominions of the current game graph $\game^j$
are also winning for player~2 in the original game graph $\game$.

\begin{proposition}[Completeness Algorithm~\ref{alg:GR1GameBasic}]\label{prop:p2win_GR1Basic}
Let $V^{j^*}$ be the set of vertices returned by Algorithm~\ref{alg:GR1GameBasic}.
Each vertex in $V \setminus V^{j^*}$ is winning for player~2.
\end{proposition}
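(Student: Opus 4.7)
The plan is to show, by induction on the iterations $j$ of Algorithm~\ref{alg:GR1GameBasic}, that every set $D^j$ removed in iteration $j$ satisfies $D^j \subseteq W_2(\game)$. Since $V \setminus V^{j^*} = \bigcup_{j < j^*} D^j$, this immediately yields the claim.

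The main step is to argue that, in each iteration $j$ with $D^j \ne \emptyset$, the set $S^j$ returned by the inner for-loop is a player-2 dominion of the current game $\game^j$ for the (restricted) GR(1) objective. The key observation is that $Y^j_\ell = \ato(\game^j, U^j_\ell)$ is a player-1 attractor, so by Lemma~\ref{lem:attr}(\ref{sublem:complattr}) the set $V^j \setminus Y^j_\ell$ is player-1 closed in $\game^j$, i.e.\ player-2 closed in $\overline{\game^j}$. Consequently, in $\overline{\game^j}$ no strategy of player-2 can force a play that starts in $V^j \setminus Y^j_\ell$ to leave, so any player-1 winning strategy on the subgame $\overline{\game^j \setminus Y^j_\ell}$ can be lifted verbatim to $\overline{\game^j}$; the resulting plays never leave $S^j \subseteq V^j \setminus Y^j_\ell$, and therefore $S^j$ is a player-1 dominion of $\overline{\game^j}$ for the generalized Büchi objective $\bigwedge_{t=1}^{k_1} \objsty{Büchi}{L^j_t}$. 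Since $U^j_\ell \subseteq Y^j_\ell$ and hence $S^j \cap U^j_\ell = \emptyset$, Lemma~\ref{lem:dominionGR1ConjBuchi} upgrades $S^j$ to a player-2 dominion of $\game^j$ for the GR(1) objective, and this property is preserved under taking the player-2 attractor, so $D^j = \att(\game^j, S^j)$ is likewise such a dominion.

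It remains to transfer the dominion property from $\game^j$ to~$\game$. A short side-induction shows that each $V^j$ is player-1 closed in $\game$: the base case $V^1 = V$ is trivial, and the inductive step combines the fact that $D^j$ is a player-2 attractor in $\game^j$ with the inductive closedness of $V^j$. Hence Lemma~\ref{lem:attr}(\ref{sublem:winclosed}) gives $W_2(\game[V^j]) \subseteq W_2(\game)$, and together with Lemma~\ref{lem:attr}(\ref{sublem:subgraph}) applied in $\game^j$ we conclude $D^j \subseteq W_2(\game^j) \subseteq W_2(\game)$, closing the induction. The only substantive difficulty lies in the first step, namely bridging the winning set of the restricted subgame $\overline{\game^j \setminus Y^j_\ell}$ to a genuine dominion in $\overline{\game^j}$ so that Lemma~\ref{lem:dominionGR1ConjBuchi} becomes applicable; the rest is bookkeeping built on the attractor and closedness lemmas.
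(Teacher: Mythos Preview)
Your proof is correct and follows essentially the same approach as the paper's: both argue that $V^j \setminus Y^j_\ell$ is player-1 closed in $\game^j$ (via Lemma~\ref{lem:attr}(\ref{sublem:complattr})), lift the generalized B\"uchi winning set $S^j$ from $\overline{\game^j \setminus Y^j_\ell}$ to a player-1 dominion in $\overline{\game^j}$, invoke Lemma~\ref{lem:dominionGR1ConjBuchi} to obtain a player-2 GR(1) dominion in $\game^j$, and then transfer to $\game$. The only cosmetic difference is that the paper handles the transfer from $\game^j$ to $\game$ by a single appeal to Lemma~\ref{lem:attr}(\ref{sublem:subgraph}) applied iteratively, whereas you unwind this into an explicit side-induction on the player-1 closedness of $V^j$ together with Lemma~\ref{lem:attr}(\ref{sublem:winclosed}); your additional reference to Lemma~\ref{lem:attr}(\ref{sublem:subgraph}) in that last step is then redundant.
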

\begin{proof}
By Lemma~\ref{lem:attr}(\ref{sublem:subgraph}) it is sufficient to show that
in each iteration~$j$ with $S^j \ne \emptyset$ player~2 has a winning strategy
from the vertices in $S^j$ in $\game^j$. 
Let $j$ be such that $S^j = W_1\left(\overline{\game^j \setminus Y^j_\ell}, \bigwedge_{t=1}^{k_1} \objsty{Büchi}{L^j_t \setminus Y^j_\ell} \right)$.
We first show that $S^j$ is also a player-1 dominion for the generalized Büchi
game on the game graph $\overline{\game^j}$ that includes the vertices of  $Y^j_\ell$,
i.e., that $S$ is a player-2 dominion on $\game^j$. 
By Lemma~\ref{lem:attr}(\ref{sublem:complattr}) the set 
$V^j \setminus Y^j_\ell$ is 1-closed in $\game^j$, i.e., it is  2-closed in $\overline{\game^j}$. 
Thus each 1-dominion of $\overline{\game^j \setminus Y^j_\ell}$ is also 
2-closed in $\overline{\game^j}$ and hence a 1-dominion in $\overline{\game^j}$ 
(see also Lemma~\ref{lem:attr}(\ref{sublem:winclosed})). 
Now as $S^j$ does not contain 
any vertices of $U_\ell$, it is{ by Lemma~\ref{lem:dominionGR1ConjBuchi}} a player-2 dominion in $\game$ w.r.t.\ the GR(1) objective.
Finally, from the above and Lemma~\ref{lem:attr}(\ref{sublem:attrstr}) we have that also $\att(\game^j, S^j)$
is a player-2 dominion in $\game$ with the GR(1) objective.
\end{proof}

Finally the runtime of Algorithm~\ref{alg:GR1GameBasic} is the product of the number of iterations of the nested loops and the runtime
for the generalized Büchi algorithm.

\begin{proposition}[Runtime Algorithm~\ref{alg:GR1GameBasic}]
Algorithm~\ref{alg:GR1GameBasic} runs in $O(k_2 \cdot n \cdot B)$ where $B$ is the runtime bound for the used ConjBüchi algorithm,
i.e., if we use Algorithm~\ref{alg:ConjBuchi} to solve ConjBüchi, the bound is $O(k_1 \cdot k_2 \cdot n^3)$.
\end{proposition}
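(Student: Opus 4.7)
The plan is to bound the number of iterations of the outer repeat-until loop and multiply by the work per iteration. First I would argue that the outer loop executes at most $n+1$ times: in every iteration in which $D^j \ne \emptyset$ we remove at least one vertex from $V^j$ to obtain $V^{j+1}$ (since $S^j \subseteq \att(\game^j, S^j) = D^j$ and $S^j \ne \emptyset$ whenever $D^j \ne \emptyset$), and the loop terminates as soon as an iteration with $D^j = \emptyset$ occurs. Hence the number of outer iterations is $O(n)$.

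Next I would bound the work inside a single outer iteration. The inner for-loop runs over $\ell = 1, \ldots, k_2$ and is aborted as soon as a non-empty $S^j$ is found, so in the worst case it performs $k_2$ passes. Each pass computes (a) the player-1 attractor $Y^j_\ell = \ato(\game^j, U^j_\ell)$, which by Lemma~\ref{lem:attr}(\ref{sublem:attrtime}) costs $O(m) = O(n^2)$, and (b) the winning set $S^j$ of a generalized Büchi game on $\overline{\game^j \setminus Y^j_\ell}$, which by assumption costs time $B$. After the for-loop the single attractor computation $\att(\game^j, S^j)$ again costs $O(m)$, and updating the lists $\{U^{j+1}_\ell\}$, $\{L^{j+1}_t\}$ can be charged to the removed vertices. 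Since $B = \Omega(m)$ for any reasonable generalized Büchi algorithm (in particular for Algorithm~\ref{alg:ConjBuchi}), the total work per outer iteration is $O(k_2 \cdot B)$.

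Multiplying the two bounds yields a total running time of $O(n \cdot k_2 \cdot B)$. Plugging in the bound $B = O(k_1 \cdot n^2)$ from Proposition~\ref{prop:time_conjbuchi} for Algorithm~\ref{alg:ConjBuchi} gives the claimed $O(k_1 \cdot k_2 \cdot n^3)$ bound. There is no real obstacle here: the only subtlety is making sure the attractor cost does not dominate $B$ and that the $O(n)$ bound on the outer iterations holds even in the final terminating iteration, both of which are immediate from the fact that every non-terminating iteration strictly shrinks $V^j$.
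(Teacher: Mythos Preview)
Your proof is correct and follows exactly the same approach as the paper: bound the outer loop by $O(n)$ iterations because each non-terminating iteration removes at least one vertex, and bound each iteration by $k_2$ calls to the generalized B\"uchi solver. You simply provide more detail than the paper, which absorbs the attractor and update costs into~$B$ without comment.
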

\begin{proof}
  As in each iteration of the outer loop, except the last one, at least one
  vertex is removed from the maintained graph, there are only $O(n)$ iterations.
  In the inner loop we have $k_2$ iterations, each with a call to the 
  generalized Büchi game algorithm. 
  Thus, in total we have a running time of $O(k_2 \cdot n \cdot B)$.
\end{proof}
} 

{
\subsection{Progress Measure Algorithm for Finding Small Dominions}
{
Our goal for the remaining part Section~\ref{sec:GR1Alg}
is to speed up the basic algorithm
by computing ``small'' player-2 dominions faster such that in each iteration
 of the algorithm a ``large'' 2-dominion is found and thereby the number of 
iterations of the algorithm is reduced.
To compute small dominions we use a \emph{progress measure} for generalized Büchi 
games
which is a special instance of the more general progress measure for GR(1) games presented 
in~\cite[Section 3.1]{JuvekarP06}, which itself is based on \cite{Jurdzinski00}. 
In this section w}e first restate the progress measure of \cite{JuvekarP06} in our notation and simplified to generalized
Büchi, then adapt it to not compute the winning sets but dominions of a given size, and finally give an efficient algorithm
to compute the progress measure.

The progress measure of \cite{JuvekarP06} is defined as follows.
Let $\bigwedge_{\ell=1}^{k} \mbox{Büchi}(\target_i)$ be a generalized Büchi objective.
For each $1 \leq \ell \leq k$ we define the value $\bar{n}_\ell$ to be 
$\bar{n}_\ell=|V \setminus \target_\ell|$ 
and a function $\rho_\ell: V \rightarrow \{0,1,\dots,\bar{n}_\ell,\infty\}$. 
The intuitive meaning of a value $\rho_\ell(v)$ is the number of moves player~1 
needs, when starting in $v$, 
to reach a vertex of $\target_\ell \cap W_1$, i.e., $\rho_\ell(v)$ will 
equal to the rank $rank_\pl(\game,\target_\ell \cap W_1,v)$.
As there are only $\bar{n}_\ell$ many vertices which are not in $\target_\ell$,
one can either reach them within $\bar{n}_\ell$ steps or cannot reach them at all.

The actual value $\rho_\ell(v)$ is defined in a recursive fashion via the values of the successor vertices of $v$.
That is, for $v \not\in \target_\ell$ we define $\rho_\ell(v)$ by the values $\rho_\ell(w)$ for $(v,w) \in E$.
Otherwise, if $v \in \target_\ell$, then we already reached $\target_\ell$ and we only have to check whether $v$ is in the winning set.
That is, whether $v$ can reach a vertex of the next target set $\target_{\ell \oplus 1}$ that is also in the winning set $W_1$.
Hence, for $v \in \target_\ell$ we define $\rho_\ell(v)$ by the values $\rho_{\ell \oplus 1}(w)$ for $(v,w) \in E$, where 
$\ell \oplus 1 = \ell + 1$ if $\ell <k$ and $k \oplus 1=1$ and analogously 
$\ell \ominus 1 = \ell - 1$ if $\ell > 1$ and $1 \ominus 1 = k$.
For $v \in V$ one considers all the successor vertices and their values and then picks the minimum if $v \in V_1$
or the maximum if $v \in V_2$. 
In both cases $\rho_\ell(v)$ is set to this  value increased by $1$ if $v \not\in \target_\ell$.
If $v \in \target_\ell$, the value is set to $\infty$ if the minimum (resp. maximum) over the successors is $\infty$
and to $0$ otherwise.
This procedure is formalized via two functions. 
First, $\best(v)$ returns the value of the best neighbor for the player owning~$v$.
\[
 \best(v)= \begin{cases} 
            \min_{(v,w) \in E} \rho_{\ell \oplus 1}(w) & \cif v \in V_1 \wedge v \in \target_\ell\,,\\
            \min_{(v,w) \in E} \rho_{\ell}(w) & \cif v \in V_1 \wedge v \not\in \target_\ell\,,\\
            \max_{(v,w) \in E} \rho_{\ell \oplus 1}(w) & \cif v \in V_2 \wedge v \in \target_\ell\,,\\
            \max_{(v,w) \in E} \rho_{\ell}(w) & \cif v \in V_2 \wedge v \not\in \target_\ell\,.\\
           \end{cases}
\]

Second, the function $\incr$ formalizes the incremental step described above.
To this end, we define for each set $\{0,1,\dots,\bar{n}_\ell,\infty\}$ the unary $\pp$ operator as $x\pp=x+1$ for $x<\bar{n}_\ell$ and $x\pp=\infty$ otherwise.
\[
 \incr(x)= \begin{cases} 
            0  & \cif v \in \target_\ell \wedge x \not= \infty\,,\\
            x\pp & otherwise.
           \end{cases}
\]

The functions $\rho_\ell(.)$ are now defined as the least fixed-point of the operation that updates
all $\rho_\ell(v)$ to $\max(\rho_\ell(v),\incr(\best(v)))$.
The least fixed-point can be computed via the lifting algorithm~\cite{Jurdzinski00}, that starts with all the $\rho_\ell(.)$ initialized as the zero functions and iteratively updates $\rho_\ell(v)$ to $\incr(\best(v))$, for all $v \in V$, until the least fixed-point is reached. 

Given the progress measure, we can decide the generalized Büchi game by 
the following theorem. Intuitively, player~1 can win starting from a vertex with 
$\rho_1(v)<\infty$ by keeping a counter~$\ell$ that is initialized to $1$,
choosing the outgoing edge to $\best(v)$ whenever at a vertex of $V_1$, and 
increasing the counter with $\oplus 1$ when a vertex of $\target_\ell$ is reached.
\begin{theorem}{\cite[Thm.~1]{JuvekarP06}}
 Player~1 as a winning strategy from a vertex $v$ iff $\rho_1(v)<\infty$.
\end{theorem}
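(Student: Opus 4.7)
The plan is to prove the two directions separately, exploiting the fixed-point characterization of $\rho_\ell$ together with the two key properties of the $\incr$ and $\best$ operators: at a player-1 vertex, $\best$ takes the minimum over successors, which gives player~1 the ability to enforce non-increase of $\rho_\ell$; at a player-2 vertex, $\best$ takes the maximum, which prevents player~2 from escaping to a worse (larger) value. I will also use that $\rho_\ell(v) \in \{0,1,\ldots,\bar n_\ell,\infty\}$ has a bounded finite part, so any strictly monotone argument on finite values must terminate.

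For the ``if'' direction, assume $\rho_1(v) < \infty$. I would define a memory-based strategy $\sigma^*$ for player~1 that maintains a counter $\ell \in \{1,\ldots,k\}$ initialized to~$1$: at a vertex $u \in V_1$, player~1 moves to a successor $w$ achieving $\best(u)$ (w.r.t.\ the current~$\ell$); the counter is incremented $\ell \mapsto \ell \oplus 1$ precisely when the play enters~$\target_\ell$. The key invariant I would maintain along a consistent play $v_0,v_1,\ldots$ is that $\rho_{\ell(i)}(v_i) < \infty$ for all $i$, where $\ell(i)$ is the counter value at step~$i$. This follows from the fixed-point equation $\rho_\ell(u) \ge \incr(\best(u))$: at player-1 vertices player~1 chooses the minimum successor and at player-2 vertices the maximum is already~$\le \rho_\ell(u)$, so $\rho_{\ell(i+1)}(v_{i+1})$ never becomes~$\infty$. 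Moreover, as long as the counter does not advance, $\rho_{\ell(i)}(v_i)$ strictly decreases by the definition of $\incr$ on vertices outside~$\target_{\ell(i)}$; since its value is bounded by $\bar n_{\ell(i)}$, the counter must advance within finitely many steps, so $\target_{\ell(i)}$ is visited. Iterating this around the cyclic order $1 \to 2 \to \cdots \to k \to 1$ ensures every $\target_\ell$ is visited infinitely often, so $\sigma^*$ is winning.

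For the ``only if'' direction, I would argue the contrapositive: if $\rho_1(v) = \infty$, then player~2 wins from~$v$. Let $Z_\ell = \{u \mid \rho_\ell(u) = \infty\}$; I will show the sets $Z_\ell$ form a ``trap'' structure for player~2 that can be used to refute the generalized Büchi objective. Concretely, the fixed-point inequality $\rho_\ell(u) \ge \incr(\best(u))$ applied at $u \in Z_\ell$ gives: if $u \in V_2$, then $\best(u) = \infty$ still holds by picking a successor witnessing the max, so some successor lies in $Z_\ell$ or (if $u \in \target_\ell$) in $Z_{\ell \oplus 1}$; if $u \in V_1$, then all successors of~$u$ have value~$\infty$ (otherwise $\best(u)$ would be finite, forcing $\rho_\ell(u)$ to be finite). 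I would then define a player-2 strategy that at each step picks such an $\infty$-successor in the appropriate $Z_\ell$, with $\ell$ advanced whenever the play visits~$\target_\ell$. A resulting play either stays in $Z_\ell$ forever without visiting $\target_\ell$, in which case the Büchi set $\target_\ell$ is visited only finitely often, or it keeps advancing the counter, which would require reaching $\target_\ell$ infinitely often for every $\ell$; but the latter would force $\rho_\ell(u) < \infty$ via the least-fixed-point characterization (by induction on the distance to such a visit), contradicting $u \in Z_\ell$.

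The main obstacle is the last step: carefully arguing that staying in $Z_\ell$ vertices precludes visiting $\target_\ell$ infinitely often. This is where the least-fixed-point nature of $\rho$ is essential -- any vertex from which player~1 can guarantee a visit to $\target_\ell$ followed by a winning continuation would have $\rho_\ell < \infty$ by a straightforward induction on the rank. I would thus dedicate a short lemma to showing $\rho_\ell(v) = rank_1(\game, T_\ell \cap W_1, v)$ (with the convention $\infty$ outside the attractor), which both yields the ``only if'' direction and matches the intuition given in the exposition before the theorem. Once this correspondence is in place, both directions follow in a clean way, and no further delicate case analysis is needed beyond tracking the cyclic counter.
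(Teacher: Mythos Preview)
The paper does not give its own proof of this theorem; it is cited from~\cite{JuvekarP06} and only the ``if'' direction is sketched informally in the sentence preceding the statement. Your ``if'' direction matches that sketch exactly and is correct: the counter-based strategy together with the strict decrease of $\rho_\ell$ along non-target steps forces each $\target_\ell$ to be visited, and the invariant $\rho_{\ell(i)}(v_i)<\infty$ is preserved across counter increments.

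Your ``only if'' direction, however, has a concrete gap in the trap argument. You claim that for $u\in Z_\ell$ one has $\best(u)=\infty$, and hence that at a $V_1$-vertex all successors lie in $Z_\ell$ and at a $V_2$-vertex some successor does. This inference is wrong: from $\rho_\ell(u)=\infty$ and the fixed-point equation $\rho_\ell(u)=\incr(\best(u))$ you only get $\best(u)\in\{\bar n_\ell,\infty\}$, because $\incr$ already sends $\bar n_\ell$ to $\infty$. So the sets $Z_\ell$ need not be trapping for player~2 in the way you describe; from a $V_1$-vertex in $Z_\ell$ player~1 may be able to move to a successor with finite value $\bar n_\ell$, and from there your ranking argument for the ``if'' direction shows that player~1 can reach $\target_\ell$. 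The obstacle you flag (``staying in $Z_\ell$ precludes visiting $\target_\ell$ infinitely often'') is therefore not the real issue; the problem is one step earlier, in establishing that player~2 can confine the play to the $Z_\ell$'s at all.

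Your fallback via the rank identity is the right repair, and it is exactly what the paper carries out for the \emph{modified} measure: Invariant~(\ref{inv:Inv5}) proves $\rho_\ell(v)\le rank_1(\game,\target_\ell\cap D,v)$ for every player-1 dominion~$D$ of size $\le h$, by induction using the least-fixed-point property. For the unmodified measure you take $h=n$ and $D=W_1$, which gives $\rho_\ell(v)<\infty$ for $v\in W_1$ directly. Two cautions: phrase it as the inequality $\rho_\ell(v)\le rank_1(\game[W_1],\target_\ell\cap W_1,v)$ rather than an equality (the equality is not needed and is harder to justify), and make explicit that this uses only that $W_1$ is \emph{some} player-1 dominion, so there is no circularity with the theorem you are proving.
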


As our goal is to compute small dominions, say of size $h$, instead of the whole winning set, 
we have to modify the above progress measure as follows.
In the definition of the functions $\rho_\ell$ we
redefine the value $\bar{n}_\ell$ to be  $\min \{h-1,|V \setminus \target_\ell|\}$
instead of $|V \setminus \target_\ell|$.
The intuition behind this is that if the dominion contains at most $h$~vertices,
then from each vertex in the dominion we can reach each set $\target_\ell$
within $h-1$ steps and we do not care about vertices with a larger distance.

With Algorithm~\ref{alg:ConjBuchiLifting} we give an 
$O(k\cdot h \cdot m)$-time realization of the lifting algorithm for
computing the functions $\rho_\ell$.
It is a 
corrected 
version of the lifting algorithm in~\cite[Section 3.1]{JuvekarP06},
tailored to generalized Büchi objectives and dominion computation,
and exploits ideas from the lifting algorithm in \cite{EtessamiWS05}.
We iteratively increase the values $\rho_\ell(v)$ for all pairs $(v,\ell)$.
The main idea for the runtime bound is to consider each pair  $(v,\ell)$ at most $h$ times and each time we consider a pair we increase the value of $\rho_\ell(v)$ and 
only do computations in the order of the degree of $v$.
To this end, we maintain a list of pairs $(v,\ell)$ for which $\rho_\ell(v)$ must be increased because of some update on $v$'s 
neighbors. We additionally maintain $B_\ell(v)$, which stores the value of $\best(v)$ from the last time we updated $\rho_\ell(v)$, and 
a counter $C_\ell(v)$ for $v \in V_1$, which stores the number of successors $w \in \Out(v)$ with $\rho_\ell(w)=B_\ell(v)$.
Moreover, in order to initialize $C_\ell(v)$ when $B_\ell(v)$ is updated, we use the function $\cnt(v)$ counting the number of successor vertices that have minimal $\rho_\ell$. Notice that for $v \in \target_\ell$ we only distinguish whether $\rho_{\ell \oplus 1}(v)$ is finite or not.
\[
 \cnt(v)=\begin{cases}          
          \left|\{w \in \Out(v) \mid \rho_{\ell \oplus 1}(w)< \infty \}\right| & \cif v \in \target_\ell\,,\\
          \left|\{w \in \Out(v) \mid \rho_\ell(w)=\best(v)\}\right| & \cif v \not\in \target_\ell\,.
         \end{cases}
\]
\begin{algorithm}[t]
	\SetAlgoRefName{GenBuchiProgressMeasure}
	\caption{Lifting Algorithm for Generalized Büchi Games} 
	\label{alg:ConjBuchiLifting}
	\SetKwInOut{Input}{Input}
	\SetKwInOut{Output}{Output}
	\SetKwData{old}{old}
	\BlankLine
	\Input{Game graph $\game=((V, E),(\vo,\vt))$, objective $\bigwedge_{1 \le \ell \le k} \objsty{Büchi}{\target_\ell}$, integer $h \in [1, n]$}
	\Output
	{
	  player~1 dominion / winning set for player~1 if $h=n$
	}
	\BlankLine
	\ForEach{$v \in V$, $1\leq \ell \leq k$}
	    {
	      $B_\ell(v) \gets 0$\;
	      \lIf{$v \in V_1$}{$C_\ell(v) \gets \OutDeg(v)$}
	      $\rho_\ell(v) \gets 0$\;
	    }
	
	$L \gets \{(v,\ell) \mid v \in V, 1 \leq \ell \leq k, v \notin \target_\ell\}$\;
	\While
	  {
	    $L \not= \emptyset$
	  }
	  {
	    pick some $(v, \ell) \in L$ and remove it from $L$\;
	    $\old \gets \rho_\ell(v)$\;
	    $B_\ell(v) \gets \best(v)$\;
	    \lIf{$v \in V_1$}{$C_\ell(v) \gets \cnt(v)$}
	    $\rho_\ell(v) \gets \incr(\best(v))$\;
	    \ForEach{$w \in In(v) \setminus \target_\ell$ with $(w,\ell)\not\in L$, $\rho_\ell(w)<\infty$}
	      {
		{
		  \If{$w \in V_1, \old=B_\ell(w)$}
		    {
			$C_\ell(w)\gets C_\ell(w)-1$\label{lpm:countminus}\;
		        \lIf{$C_\ell(w)=0$}{$L\gets L \cup \{(w,\ell)\}$\label{lpm:addzero}}
		    }
		  \lElseIf{$w \in V_2, \rho_\ell(v)>B_\ell(w)$}{$L\gets L \cup \{(w,\ell)\}$\label{lpm:addv2}}
		}
	      }
	    \If{$\rho_\ell(v)=\infty$}
	      {
	        \ForEach{$w \in In(v) \cap \target_{\ell\ominus 1}$ with $(w,\ell\ominus 1)\not\in L$, $\rho_{\ell\ominus 1}(w)<\infty$}
		{
		  \If{$w \in V_1$}
		    {
			$C_{\ell\ominus 1}(w)\gets C_{\ell\ominus 1}(w)-1$\label{lpm:countminus_inf}\;
		        \lIf{$C_{\ell\ominus 1}(w)=0$}{$L\gets L \cup \{(w,\ell\ominus 1)\}$\label{lpm:addzero_inf}}
		    }
		  \lElseIf{$w \in V_2$}{$L\gets L \cup \{(w,\ell\ominus 1)\}$\label{lpm:addv2_inf}}
		}
	      }
	  }
	  \Return $\{v\in V \mid  \rho_\ell(v)<\infty \text{ for some } \ell\}$
\end{algorithm}

Whenever the algorithm considers a pair $(v,\ell)$, it first computes $\best(v)$,
$\cnt(v)$ in $O(\OutDeg(v))$ time, stores these values in $B_\ell(v)$ and $C_\ell(v)$,
and updates $\rho_\ell(v)$ to $\incr(\best(v))$.
It then identifies the pairs $(w,\ell)$, $(w,\ell\ominus 1)$ that are affected by the change of the value $\rho_\ell(v)$ 
and adds them to the set $L$ in $O(\InDeg(v))$ time.

{
In the remainder of the section we prove the following theorem.
\begin{theorem}\label{thm:ConjBuchiLifting}\label{thm:ConjBuchiLiftingCorrectness}
 For a game graph $\game$ and objective $\obj=\bigwedge_{1 \le \ell \le k} \objsty{Büchi}{\target_\ell}$, 
  Algorithm~\ref{alg:ConjBuchiLifting}  is an $O(k\cdot h \cdot m)$ time procedure  
 that either returns a player-1 dominion or the empty set, and,
 if there is at least one player-1 dominion of size $\leq h$
 then returns a player-1 dominion containing all player-1 dominions of size $\leq h$.
\end{theorem}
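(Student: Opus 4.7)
The plan is to establish three claims: (i) the final values of $\rho_\ell$ computed by Algorithm~\ref{alg:ConjBuchiLifting} form the least fixed point of the generalized Büchi progress-measure lifting operator on the bounded codomain $\{0,1,\ldots,h-1,\infty\}$; (ii) the set $D = \{v : \rho_\ell(v) < \infty \text{ for some } \ell\}$ is a player-1 dominion that contains every player-1 dominion of size at most~$h$; (iii) the running time is $O(k \cdot h \cdot m)$.

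For (i), I would maintain two loop invariants: (a) monotonicity, i.e.\ the values $\rho_\ell(v)$ are pointwise below the least fixed point throughout the execution, which follows because each update $\rho_\ell(v) \gets \incr(\best(v))$ is a valid lift starting from the zero function; and (b) completeness of the worklist, i.e.\ whenever the stored $\rho_\ell(v)$ is strictly smaller than $\incr(\best(v))$ evaluated at the current $\rho$-values, the pair $(v,\ell)$ lies in $L$. Invariant (b) is the delicate part. When a pair $(v',\ell')$ is popped and its value rises from $\mathrm{old}$ to the new $\rho_{\ell'}(v')$, the only pairs whose $\best$ could become stale are $(w,\ell')$ for $w \in \In(v') \setminus T_{\ell'}$ and, if $\rho_{\ell'}(v')$ just became $\infty$, $(w,\ell' \ominus 1)$ for $w \in \In(v') \cap T_{\ell' \ominus 1}$. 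For a player-1 predecessor I decrement $C(\cdot)$ and enqueue only once the count hits $0$, since a minimum over successors shifts only when every successor previously attaining it has risen; for a player-2 predecessor I enqueue as soon as the new value exceeds the stored maximum, since a single larger successor shifts the max. The infinity-branch is sufficient because for $w \in T_{\ell' \ominus 1}$ the quantity $\rho_{\ell' \ominus 1}(w)$ depends on $\rho_{\ell'}$ of successors only through the dichotomy finite-versus-$\infty$. The initial assignments $B_\ell(v)=0$, $C_\ell(v)=\OutDeg(v)$, and $L=\{(v,\ell) : v \notin T_\ell\}$ make both invariants hold initially.

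For (ii), soundness of $D$ as a dominion is the standard progress-measure argument: from any $v \in D$, player~1 picks a successor realising $\best$ and cycles a counter $\ell$ upon each visit of $T_\ell$; the $\incr$-operator forces $\rho_\ell$ to drop strictly off $T_\ell$, so each $T_\ell$ is reached within at most $h-1$ steps per cycle, $D$ is closed under this strategy, and the generalized Büchi objective is satisfied. For containment, let $D'$ be any player-1 dominion with $|D'| \le h$; in $\game[D']$ player~1 has a strategy to visit each $T_\ell$, so $T_\ell \cap D' \ne \emptyset$ and its player-1 attractor in $\game[D']$ covers $D'$ within $|D' \setminus T_\ell| \le h-1$ iterations. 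Setting $\widehat\rho_\ell(v) = \operatorname{rank}_1(\game[D'], T_\ell \cap D', v)$ on $D'$ and $\infty$ elsewhere thus yields a valid post-fixed point of the lifting operator, so by minimality of the LFP $\rho_\ell(v) \le \widehat\rho_\ell(v) < \infty$ for $v \in D'$, proving $D' \subseteq D$.

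For (iii), the key observation is that any processing of $(v,\ell)$ after its initial extraction strictly increases $\rho_\ell(v)$. For $v \in V_1$ re-enqueueing happens only when $C_\ell(v)$ has reached $0$, meaning every successor that achieved the stored minimum has already risen above it, so the newly computed $\best(v)$ exceeds $B_\ell(v)$; for $v \in V_2$ re-enqueueing requires a successor whose value exceeds $B_\ell(v)$, so again $\best(v)$ grows strictly; the infinity-branch is analogous. Since the codomain is a chain of $h+1$ values, each pair $(v,\ell)$ is processed at most $h+1$ times, and each processing does $O(\OutDeg(v))$ work to recompute $\best(v)$ and $\cnt(v)$ plus $O(\InDeg(v))$ work to scan predecessors. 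Summing yields $\bigl(\sum_{\ell=1}^{k} \sum_{v \in V} (h+1)(\OutDeg(v)+\InDeg(v))\bigr) = O(k \cdot h \cdot m)$. I expect the main obstacle to be the careful case analysis in proving invariant~(b), in particular verifying that the infinity-branch---the correction over the original presentation of~\cite{JuvekarP06}---enqueues exactly the pairs whose progress-measure value could otherwise lag behind the lifting operator.
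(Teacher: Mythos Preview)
Your proposal is correct and follows essentially the same route as the paper. The paper organizes the correctness argument around five explicit loop invariants---your invariant~(b) is their Invariant~(4), and what you derive from LFP minimality they instead maintain directly as a separate loop invariant, namely $\rho_\ell(v) \le rank_1(\game, T_\ell \cap D, v)$ for every dominion~$D$ of size at most~$h$---while your dominion-soundness and runtime arguments coincide with theirs.
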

}
\begin{remark}
 While for the progress measure in~\cite{JuvekarP06} $\rho_\ell(v)<\infty$ for some $1 \le \ell \le k$ is equivalent to $\rho_{\ell'}(v)<\infty$
 for all $1 \leq \ell' \leq k$, this does not hold in general for our modified progress measure $\rho$.
 Thus we consider the set $\{v\in V \mid  \rho_\ell(v)<\infty \text{ for some } \ell\}$ as a player-1 dominion and 
 not just the set $\{v\in V \mid  \rho_1(v)<\infty\}$ .
\end{remark}

The correctness of Algorithm~\ref{alg:ConjBuchiLifting} is by the following invariants that 
are maintained during the whole algorithm.
These invariants show that 
(a) the data structures $L$, $B_\ell$, and $C_\ell$ are maintained correctly, and 
(b) the values $\rho_\ell(v)$ are bounded from  above by 
(i) $\incr(\best(v))$ and 
(ii) by the rank $rank_1(\game,\target_\ell \cap D,v)$ if $v$ is in a dominion $D$ of size $\leq h$.

\begin{invariant}\label{inv}
  The while loop in Algorithm~\ref{alg:ConjBuchiLifting} has the following loop invariants.
  \begin{enumerate}[\lu1\ru]
      \item For all $v \in V$ and all $1 \leq \ell \leq k$ we have $\rho_\ell(v) \leq \incr(\best(v))$. \label{inv:Inv1}
      \item For all $v \in V$ and all $1 \leq \ell \leq k$ we have that
      if $\rho_\ell(v) \not=0$ or $v \in \target_\ell$,
      then $\rho_\ell(v) = \incr(B_\ell(v))$.\label{inv:Inv2}
      \item For $v \in V_1$ we have $C_\ell(v)=
		\begin{cases}          
		      \left|\{w \in \Out(v) \mid \rho_{\ell \oplus 1}(w)< \infty \}\right| & \cif v \in \target_\ell\,,\\
		      \left|\{w \in \Out(v) \mid \rho_\ell(w)=B_\ell(v)\}\right| & \cif v \not\in \target_\ell, \rho_\ell(v) < \infty\,.
		\end{cases}$\label{inv:Inv3}      
      \item The set $L$ consists exactly of the pairs $(v,\ell)$ with
      $\rho_\ell(v) < \incr(\best(v))$.\label{inv:Inv4}
      \item For each player-1 dominion $D$ with $|D| \leq h$, for each $v \in D$ and all $1 \leq \ell \leq k$ we have 
      $\rho_\ell(v) \leq rank_1(\game,\target_\ell \cap D,v) < h$.\label{inv:Inv5}
  \end{enumerate}
\end{invariant}
\noindent 
Notice that when the algorithm terminates we have
by the Invariants (\ref{inv:Inv1}) \& (\ref{inv:Inv4}) that
$\rho_\ell(v) = \incr(\best(v))$ for all $v \in V$ and all $1 \leq \ell \leq k$, i.e.,
the functions $\rho_\ell(v)$ are a fixed-point for the $\incr(\best(v))$ updates.
By the following lemmata we prove that the above loop invariants are valid.

\begin{lemma}
 After each iteration of the  while loop in Algorithm~\ref{alg:ConjBuchiLifting}  
 we have $\rho_\ell(v) \leq \incr(\best(v))$, for all $v \in V$ and all $1 \leq \ell \leq k$.
\end{lemma}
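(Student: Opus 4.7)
The plan is to proceed by induction on the number of iterations of the while loop. For the base case, all $\rho_\ell(v)$ are initialized to $0$, and since $\best(v) \geq 0$ by the initialization $B_\ell(\cdot) \gets 0$ and the initial values of $\rho$, we have $\incr(\best(v)) \geq 0 = \rho_\ell(v)$, so the invariant holds initially.

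For the inductive step, suppose the invariant holds before some iteration of the while loop, in which the pair $(v,\ell)$ is picked from $L$. I would argue in two parts. First, the invariant is re-established for $(v,\ell)$ itself: the algorithm explicitly sets $\rho_\ell(v) \gets \incr(\best(v))$, where $\best(v)$ is computed from the current values of $\rho$, so equality (and hence the desired inequality) holds immediately after this assignment. Second, for every other pair $(w, \ell')$ no $\rho$-value is modified; only $\rho_\ell(v)$ has changed, and it has changed in a monotonically non-decreasing way, since by the inductive hypothesis $\rho_\ell(v) \leq \incr(\best(v))$ before the update and $\rho_\ell(v) = \incr(\best(v))$ after.

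The key observation to finish the step is the \emph{monotonicity} of $\best(\cdot)$ and $\incr(\cdot)$ in the values of $\rho$. Concretely, $\best(w)$ is either a minimum or a maximum over values $\rho_{\ell'}(w')$ or $\rho_{\ell' \oplus 1}(w')$ for $w' \in \Out(w)$, and since only $\rho_\ell(v)$ changed and only increased, both the min and the max can only stay the same or increase; and $\incr$ is non-decreasing in its argument. Hence for every other pair $(w,\ell')$, the right-hand side $\incr(\best(w))$ with respect to the updated $\rho$ is at least the right-hand side before the iteration, while $\rho_{\ell'}(w)$ is unchanged. Combined with the inductive hypothesis this yields $\rho_{\ell'}(w) \leq \incr(\best(w))$ after the iteration as well.

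This gives invariant~(\ref{inv:Inv1}). I do not anticipate a real obstacle here; the only subtle point is that one must be careful that the $\best(v)$ used in the update is the \emph{current} value (computed from the current $\rho$), not the stale one stored in $B_\ell(v)$, and that after the assignment the stored $B_\ell(v)$ is brought back in sync with $\rho_\ell(v)$ as required by the later invariants.
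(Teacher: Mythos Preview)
Your proposal is correct and follows essentially the same approach as the paper's proof: induction on iterations, with the base case handled by the zero initialization, and the inductive step split into the updated pair $(v,\ell)$ (where equality is forced by the assignment) and all other pairs (where monotonicity of $\best$ and $\incr$ ensures the right-hand side can only grow while the left-hand side is unchanged). One minor slip: in your base case you mention $B_\ell(\cdot)$, but $\best(v)$ is defined directly from the current $\rho$ values and does not depend on the stored $B_\ell$; the base case goes through simply because all $\rho$ values are $0$ and hence $\incr(\best(v))\ge 0$.
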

\begin{proof}
  As all $\rho_\ell(v)$ are initialized to $0$ and $0$ is the minimum value, the inequalities are all satisfied
  in the \emph{base case} when the algorithm first enters the the while loop.
  Now for the \emph{induction step} consider an iteration of the loop and assume the invariant is satisfied before the loop.
  The value $\rho_\ell(v)$ is only changed when the pair $(v,\ell)$ is processed and
  then it is set to $\incr(\best(v))$. Thus the invariant is satisfied after these iterations.
  In all the other iterations with different pairs $(v',\ell')$ the values $\rho_{\ell'}(v')$ are either unchanged or increased. 
  As $\incr(\best(v))$ is monotonic in the values of the neighbors, this can only increase the right side
  of the inequality and thus this invariant is also satisfied after these iterations.
  Hence, if the invariant is valid before an iteration of the loop, it is also valid afterwards.
\end{proof}

\begin{lemma}
 After each iteration of the  while loop in Algorithm~\ref{alg:ConjBuchiLifting}  
 we have that 
 if $\rho_\ell(v) \not=0$ or $v \in \target_\ell$, then $\rho_\ell(v) = \incr(B_\ell(v))$, 
 for all $v \in V$ and all $1 \leq \ell \leq k$.      
\end{lemma}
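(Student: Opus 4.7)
The plan is to proceed by induction on the number of iterations of the while loop.

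For the \emph{base case}, consider the state right before the first iteration. The initialization sets $\rho_\ell(v) \gets 0$ and $B_\ell(v) \gets 0$ for every $v \in V$ and every $1 \le \ell \le k$, so the hypothesis $\rho_\ell(v) \neq 0 \vee v \in \target_\ell$ can only be triggered in the case $v \in \target_\ell$. In that case, by the definition of $\incr$ applied to $0 \neq \infty$, we have $\incr(B_\ell(v)) = \incr(0) = 0 = \rho_\ell(v)$, so the invariant holds. For $v \notin \target_\ell$ with $\rho_\ell(v) = 0$, the premise fails and there is nothing to show.

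For the \emph{induction step}, suppose the invariant holds at the beginning of some iteration, and let $(v,\ell)$ be the pair picked from $L$ in that iteration. The only places in the loop body where either a value $\rho_{\ell'}(v')$ or a value $B_{\ell'}(v')$ is modified are the assignments $B_\ell(v) \gets \best(v)$ and $\rho_\ell(v) \gets \incr(\best(v))$; in particular, only the coordinate $(v,\ell)$ is touched, while all the counter updates on lines~\ref{lpm:countminus}, \ref{lpm:countminus_inf} modify neither $\rho$ nor $B$. Hence for every pair $(v',\ell') \neq (v,\ell)$ the equation $\rho_{\ell'}(v') = \incr(B_{\ell'}(v'))$ (or the vacuous premise) is preserved verbatim. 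For the pair $(v,\ell)$ itself, immediately after the two assignments we have $B_\ell(v) = \best(v)$ and therefore $\rho_\ell(v) = \incr(\best(v)) = \incr(B_\ell(v))$, so the invariant holds again.

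No real obstacle is anticipated: the argument is purely a bookkeeping check that the sole place where $\rho_\ell(v)$ and $B_\ell(v)$ are written to is precisely the update block that sets them consistently with each other, and that the other parts of the loop body (lifting neighbours into $L$, decrementing $C_{\ell}$, $C_{\ell\ominus 1}$) leave both $\rho$ and $B$ untouched. The only subtlety worth spelling out is the base case for $v\in\target_\ell$, which relies on the convention $\incr(0)=0$ when $v\in \target_\ell$.
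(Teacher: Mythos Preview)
Your proposal is correct and follows the same approach as the paper's proof: induction on iterations, observing that $\rho_\ell(v)$ and $B_\ell(v)$ are only modified when the pair $(v,\ell)$ is processed, at which point lines~9 and~11 make the equality hold. Your base case is in fact more carefully written than the paper's, since you explicitly verify the $v\in\target_\ell$ branch via $\incr(0)=0$, whereas the paper dismisses the base case as ``trivially satisfied'' from $\rho_\ell(v)=0$ alone.
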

\begin{proof}
As $\rho_\ell(v)$ is initialized to $0$, this is trivially satisfied in the \emph{base case}.
Now for the \emph{induction step} consider an iteration of the loop and let us assume the invariant is satisfied before the loop.
The values $\rho_\ell(v)$, and $B_\ell(v)$ are only changed when the pair $(v,\ell)$ is processed and
then the invariant is trivially satisfied by the assignments in line 9 and line 11 of the algorithm.
\end{proof}

\begin{lemma}
 After each iteration of the  while loop in Algorithm~\ref{alg:ConjBuchiLifting}  
 for $v \in V_1$ we have $$C_\ell(v)=
		\begin{cases}          
		      \left|\{w \in \Out(v) \mid \rho_{\ell \oplus 1}(w)< \infty \}\right| & \cif v \in \target_\ell\,,\\
		      \left|\{w \in \Out(v) \mid \rho_\ell(w)=B_\ell(v)\}\right| & \cif v \not\in \target_\ell, \rho_\ell(v) < \infty\,.
		\end{cases}$$
\end{lemma}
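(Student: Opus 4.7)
The plan is to prove the invariant by induction on the number of iterations of the while loop, in the same style as the two preceding $C_\ell$-related lemmas in the section. In the base case, immediately after the initialization block (lines~1--4), every $\rho_\ell(v)$ and $B_\ell(v)$ is zero and $C_\ell(v)=\OutDeg(v)$ for $v \in V_1$. If $v \in \target_\ell$, then every successor $w$ satisfies $\rho_{\ell \oplus 1}(w)=0<\infty$, so the set in the first case of the invariant has size $\OutDeg(v)=C_\ell(v)$. If $v \notin \target_\ell$, then $\rho_\ell(v)=0<\infty$ and $B_\ell(v)=0$, and every successor satisfies $\rho_\ell(w)=0=B_\ell(v)$, matching the second case.

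For the inductive step I would fix the iteration that removes $(v,\ell)$ from $L$ and catalog the only state that can change: $\rho_\ell(v)$ is lifted from its old value to $\incr(\best(v))$, strictly by Invariant~(4); $B_\ell(v)$ is reassigned to $\best(v)$; $C_\ell(v)$ is overwritten with $\cnt(v)$ when $v \in V_1$; and certain counters $C_\ell(w)$ (line~16) or $C_{\ell\ominus 1}(w)$ (line~20) are decremented for $w \in \In(v)$ with $w \in V_1$. All other pairs are untouched, so the invariant is preserved there by the inductive hypothesis. At the processed pair $(v,\ell)$ with $v \in V_1$, the simultaneous updates of $B_\ell(v)$ and $C_\ell(v)\gets\cnt(v)$ restore the invariant directly from the definition of $\cnt$, splitting on whether $v \in \target_\ell$.

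For every other pair $(w,\ell')$ with $w \in V_1$ the right-hand side of the invariant can change only when $v \in \Out(w)$, and the analysis splits into two subcases. If $\ell'=\ell$ and $w \notin \target_\ell$, monotonicity of $\rho$ implies that the old value of $\rho_\ell(v)$ is at least $B_\ell(w)$, so the set $\{v' \in \Out(w) \mid \rho_\ell(v')=B_\ell(w)\}$ shrinks by exactly one precisely when old $=B_\ell(w)$, which is the condition under which line~16 decrements $C_\ell(w)$. If $\ell'=\ell\ominus 1$ and $w \in \target_{\ell\ominus 1}$, the set $\{v' \in \Out(w) \mid \rho_\ell(v')<\infty\}$ loses $v$ exactly when $\rho_\ell(v)$ transitions from finite to $\infty$, which is the outer guard of the second foreach loop. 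In both cases the algorithm performs the decrement iff the set really shrinks, so the updated $C$ still matches the updated right-hand side.

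The main obstacle will be justifying the two side conditions $(w,\ell')\notin L$ and $\rho_{\ell'}(w)<\infty$ in lines~14 and~19, which suppress some decrements that the above bookkeeping would otherwise demand. The plan here is to use a simultaneous induction with Invariants~(1), (2), and~(4) to show that whenever such a decrement is skipped, either the hypothesis of the present invariant fails for $(w,\ell')$ (for instance $\rho_{\ell'}(w)=\infty$, making case~(b) vacuous, or $\rho_\ell(v)$ stays finite so the set of successors with finite $\rho_\ell$ is unchanged in case~(a)), or $(w,\ell')$ is still in $L$ and will be re-initialized via $\cnt$ at its next processing step before the invariant is consulted. Threading these observations through the case split, while keeping the interaction with Invariant~(4) (which governs membership in $L$) consistent, is the technical heart of the argument; the remaining cases are routine bookkeeping.
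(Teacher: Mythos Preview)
Your overall plan---induction on iterations, the base case via the initialization values, and the case split on whether the processed pair is $(v,\ell)$ itself or some $(u,\ell')$ with $u\in\Out(v)$---is exactly the route the paper takes. The paper's proof is organised the same way and uses Invariants~(1) and~(2) in the same place you do (to rule out $\rho^o_\ell(u)<B_\ell(v)$).

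Where you go further than the paper is in flagging the guards $(w,\ell')\notin L$ and $\rho_{\ell'}(w)<\infty$ on the two \texttt{foreach} loops. You are right that this is the delicate point; the paper's proof simply asserts ``the algorithm decreases $C_\ell(v)$ by one'' without discussing the $(v,\ell)\in L$ case. However, your proposed resolution---that for pairs in $L$ the counter ``will be re-initialized via $\cnt$ at its next processing step before the invariant is consulted''---does not establish the invariant \emph{after each iteration} as stated. In fact the invariant as written is false for pairs in $L$: take $v\in V_1\setminus\target_\ell$, so $(v,\ell)$ is put into $L$ at line~5 with $B_\ell(v)=0$ and $C_\ell(v)=\OutDeg(v)$; if the first iteration processes some $(u,\ell)$ with $u\in\Out(v)\setminus\target_\ell$, then $\rho_\ell(u)$ becomes positive, the guard $(v,\ell)\notin L$ suppresses the decrement, and afterwards $C_\ell(v)=\OutDeg(v)$ while $\lvert\{w\in\Out(v)\mid\rho_\ell(w)=B_\ell(v)\}\rvert\le\OutDeg(v)-1$.

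The clean fix---which neither you nor the paper spells out---is to read the invariant with the extra hypothesis $(v,\ell)\notin L$. This is harmless because Invariant~(3) is only ever invoked in the proof of Invariant~(4) for pairs already assumed to satisfy $(w,\ell)\notin L$. With that restriction the induction goes through directly: the two \texttt{foreach} loops range precisely over predecessors with $(w,\ell')\notin L$, so every required decrement is performed; and when a pair leaves $L$ its counter is overwritten by $\cnt$, restoring the equality. Your ``re-initialization'' remark is then exactly the step that handles the moment a pair is processed, rather than an excuse for a temporary violation.
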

\begin{proof}
  As \emph{base case} consider the point where the algorithm first enters the while loop.
  All $\rho_\ell(v)$ and $B_\ell(v)$ are initialized to $0$ and thus in both cases the right side of the invariant 
  is equal to $\OutDeg(v)$, which is exactly the value assigned to $C_\ell(v)$.
  
  Now for the \emph{induction step} consider an iteration of the loop and let us assume the Invariants~(\ref{inv:Inv1})--(\ref{inv:Inv3}) are satisfied before the loop.
  Let $v \in V_1$. 
  In an iteration where $(v,\ell)$ is processed in line 10 we set $C_\ell(v)$ to $\cnt(v)$ and 
  hence the invariant is satisfied by the definition of $\cnt(v)$. 
  Otherwise the condition for $C_\ell(v)$ is only affected if a vertex $u \in \Out(v)$ is processed.
  We distinguish the two cases where $v \in \target_\ell$ and where $v \not\in \target_\ell$.
  \begin{itemize}
    \item If $v \in \target_\ell$ then $C_\ell(v)$ is only affected in iterations where pairs $(u,\ell \oplus 1)$
	  are considered.
	  If the updated value of $\rho_{\ell \oplus 1}(u)$ is less than $\infty$ then the set
	  $\{w \in \Out(v) \mid \rho_{\ell \oplus 1}(w)< \infty\}$ is unchanged and also $C_\ell(v)$ is not changed by the algorithm,
	  i.e., the invariant is still satisfied.
	  Otherwise if the updated value of $\rho_{\ell \oplus 1}(u)$ is $\infty$ then 
	  $u$ drops out from the set $\{w \in \Out(v) \mid \rho_{\ell \oplus 1}(w)< \infty\}$ but 
	  also the algorithm decreases $C_\ell(v)$ by one,
	  i.e., again the invariant is  satisfied.
    \item If $v \not\in \target_\ell$ and $\rho_\ell(v) < \infty$ then $C_\ell(v)$ is only affected in iterations where pairs $(u,\ell)$
	  are considered.
	  Let $\rho^o_\ell(u)$  be the value of $\rho_\ell(u)$ before its update.
	  If $\rho^o_\ell(u) > B_\ell(v)$ then $u \not\in \{w \in \Out(v) \mid \rho_\ell(w)=B_\ell(v)\}$
	  and thus the set is not affected by the increased value of $\rho_\ell(u)$. 
	  In that case the algorithm does not change $C_\ell(v)$ and thus 
	  the invariant is satisfied.
	  Otherwise, if $\rho^o_\ell(u) = B_\ell(v)$, then $u \in \{w \in \Out(v) \mid \rho_\ell(w)=B_\ell(v)\}$ before the iteration
	  but not after the iteration. 
	  In that case the algorithm decreases $C_\ell(v)$ by one and thus the invariant is still satisfied.
	  
	  Notice that by Invariants~(\ref{inv:Inv1}) and (\ref{inv:Inv2}), it cannot happen 
	  that $\rho^o_\ell(u) < B_\ell(v)$. To see this, assume by contradiction 
	  $\rho^o_\ell(u) < B_\ell(v)$. 
	  Let $\best^o(v)$ denote the value of 
	  $\best(v)$ before the update of $\rho_\ell(u)$. By~(\ref{inv:Inv1})
	  we have $\rho_\ell(v) \le \incr(\best^o(v))$, by the definition of 
	  $\best^o(v)$ and $v \in V_1 \setminus \target_\ell$ we have 
	  $\best^o(v) \le \rho^o_\ell(u)$ and thus by the assumption 
	  $\best^o(v) < B_\ell(v)$. By (\ref{inv:Inv2}) we have 
	  either $\rho_\ell(v) = \incr(B_\ell(v))$ or $\rho_\ell(v) = 0$.
	  In the first case, as $\incr(x)$ is strictly increasing for $x<\infty$, we have
	  $\incr(\best^o(v)) < \incr(B_\ell(v))= \rho_\ell(v)$
	  and thus a contradiction to (\ref{inv:Inv1}).		  
	  In the second case the pair $(v,\ell)$ was not processed yet and we have
	  a contradiction by $B_\ell(v) = 0$.\qedhere
  \end{itemize}
\end{proof}

\begin{lemma}
 After each iteration of the  while loop in Algorithm~\ref{alg:ConjBuchiLifting}  
 we have that 
 the set $L$ consists exactly of the pairs $(v,\ell)$ with $\rho_\ell(v) < \incr(\best(v))$.
\end{lemma}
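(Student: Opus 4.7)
The plan is to proceed by induction on the number of iterations of the while loop, using Invariants~(\ref{inv:Inv1})--(\ref{inv:Inv3}), which have already been established. For the base case, after initialization we have $\rho_\ell(v) = 0$ and $B_\ell(v) = 0$, and therefore $\best(v) = 0$, for every pair $(v,\ell)$; hence $\incr(\best(v)) = 0$ if $v \in \target_\ell$ and $\incr(\best(v)) = 1$ otherwise. Thus $\rho_\ell(v) < \incr(\best(v))$ precisely for the pairs with $v \notin \target_\ell$, which is exactly how $L$ is initialized.

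For the inductive step, suppose the invariant holds at the start of an iteration that picks some $(v,\ell) \in L$. First, after the assignment $\rho_\ell(v) \gets \incr(\best(v))$, equality holds, so the removal of $(v,\ell)$ from $L$ is correct. The key observation is that this update can only affect $\incr(\best(w))$ for predecessors $w \in \In(v)$: for $w \notin \target_\ell$ it can change $\best(w)$ with respect to index $\ell$; for $w \in \target_{\ell \ominus 1}$ it can change $\best(w)$ with respect to index $\ell \ominus 1$, since in that case $\best$ at $w$ is taken over $\rho_\ell$-values of successors. For every other pair $(u, \ell'')$, $\best(u)$ is unchanged, and since $\rho$-values only increase, the invariant is preserved for $(u,\ell'')$ by the induction hypothesis.

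I would then argue, case by case, that the algorithm's bookkeeping catches exactly the $w$ for which $\incr(\best(w))$ strictly increases. Invariants~(\ref{inv:Inv2}) and~(\ref{inv:Inv3}) guarantee that $B_\ell(w)$ stores the value of $\best(w)$ at the last lifting of $(w,\ell)$ and that $C_\ell(w)$ counts the successors currently achieving that value (respectively, the successors with finite $\rho_{\ell\oplus 1}$-value when $w \in \target_\ell$). For $w \in V_1 \setminus \target_\ell$, the minimum $\best(w)$ strictly exceeds $B_\ell(w)$ exactly when the counter reaches $0$, which matches lines~\ref{lpm:countminus}--\ref{lpm:addzero}; for $w \in V_2 \setminus \target_\ell$, the maximum $\best(w)$ strictly exceeds $B_\ell(w)$ exactly when $\rho_\ell(v) > B_\ell(w)$, matching line~\ref{lpm:addv2}. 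The second inner loop handles $w \in \target_{\ell \ominus 1}$, where $\incr(\best(w))$ can only jump from $0$ to $\infty$, and this happens precisely when $\rho_\ell(v)$ itself becomes $\infty$; an analogous counter/max argument then justifies lines~\ref{lpm:countminus_inf}--\ref{lpm:addv2_inf}. Pairs $(w,\ell')$ already in $L$ require no action, since $\incr(\best(w))$ is monotone in the $\rho$-values and so remains strictly above $\rho_{\ell'}(w)$ until $(w,\ell')$ is itself processed.

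The main obstacle is matching the algorithm's specific guards (the branching on $V_1$ vs $V_2$, the test $\textsf{old} = B_\ell(w)$, and the test $\rho_\ell(v) > B_\ell(w)$) against the precise events that cause $\incr(\best(w))$ to increase, and making sure the two triggers---one for $w \notin \target_\ell$ (at index $\ell$) and one for $w \in \target_{\ell \ominus 1}$ (at index $\ell \ominus 1$)---together cover all affected pairs exactly once. The delicate subcase is when $v$ itself lies in $\target_\ell$, so that the index shift by $\ominus 1$ must be tracked consistently on both the old and the new side of the $\best$ computation.
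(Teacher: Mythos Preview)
Your proposal is correct and follows essentially the same approach as the paper: induction on iterations, with the base case handled by the initialization of $L$, and the inductive step by a case analysis over predecessors $w \in \In(v)$ split according to $V_1$ versus $V_2$ and whether $w \in \target_\ell$ or $w \in \target_{\ell\ominus 1}$, leaning on Invariants~(\ref{inv:Inv1})--(\ref{inv:Inv3}) to justify that the algorithm's guards fire exactly when $\incr(\best(w))$ increases. One small remark: your ``delicate subcase'' where $v \in \target_\ell$ is not actually special for the predecessor analysis, since what matters for a predecessor $w$ is only the change in $\rho_\ell(v)$, not how $\best(v)$ itself was computed; the paper's proof accordingly does not single out this case.
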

\begin{proof}
 The set $L$ is initialized in line~5 with all pairs $(v,\ell)$ such that $v \not\in \target_\ell$.
 For all of these vertices we have $\best(v)=0$ and thus $\incr(\best(v))=1$, 
 i.e., $\rho_\ell(v)\! =\! 0 < \incr(\best(v))\!=\!1$.
 Now consider $(v,\ell) \not\in L$, i.e.,  $v \in \target_\ell$. 
 As all $\rho_\ell(v)=0$, we have $\incr(\best(v))=0$ and thus 
 $\rho_\ell(v) = 0 \not< \incr(\best(v)) = 0$.
 Hence, in the \emph{base case} a pair $(v,\ell)$ is in $L$ iff $\rho_\ell(v) = 0 < \incr(\best(v))=1$.
 
  Now for the \emph{induction step} consider an iteration of the loop and let us assume the Invariants~(\ref{inv:Inv1})--(\ref{inv:Inv4}) are
  satisfied before the loop.
  For the pair $(v,\ell)$ processed in the iteration $\rho_\ell(v)$ is set to $\incr(\best(v))$
  and thus it can be removed from $L$.
  Notice that
  (a) the value of $\rho_\ell(w)$ is only changed when a pair $(w, \ell)$  processed and 
  (b) $\incr(\best(w))$ can only increase when other pairs  $(v, \ell)$ are processed.
  Thus we have to show that in an iteration where the algorithm processes the pair $(v, \ell)$
  all pairs $(w, \ell')$ with $\rho_{\ell'}(w) = \incr(\best(w))$  before the iteration 
  and $\rho_{\ell'}(w) < \incr(\best(w))$ after the iteration are added to the set $L$. 
  The only vertices affected by the change of $\rho_\ell(v)$ are those in $\In(v)$ 
  which are either 
  (i) not in $\target_\ell$ or 
  (ii) in $\target_{\ell\ominus 1}$. 
  In the former case only $\rho_\ell$ is affected 
  while in the latter case only $\rho_{\ell\ominus 1}$ is affected.
  Let $\rho^o_\ell(v)$ and $\rho^n_\ell(v)$ be the values before, 
  respectively after, the update on $\rho_\ell(v)$.
  Notice that if $w \not\in \target_\ell$ and $\rho_\ell(w)=0$, then 
  $(w,\ell)\in L$ by the initialization in line~5.
  Thus in the following, by Invariant~(\ref{inv:Inv2}), we can assume that $\rho_\ell(w) = \incr(B_\ell(w))$
  for all $(w,\ell)\not\in L$. We consider the following cases.
  \begin{itemize}
    \item $w \in \In(v) \setminus \target_\ell$ and $w \in V_1$: 
	  Then $\incr(\best(w))> \rho_\ell(w)$ iff all $u \in \Out(w)$ have $\rho_\ell(u) > B_\ell(w)$.
	  As $(w,\ell) \notin L$ we know that before the iteration there  is at least one $u \in\Out(w)$
	  with $\rho_\ell(u) = B_\ell(w)$.
	  In the case $u\not=v$, $B_\ell(w)$ will not be changed during the iteration and thus
	  $\incr(\best(w)) \not> \rho_\ell(w)$.
	  Hence $\incr(\best(w)) > \rho_\ell(w)$ iff $v$ is the only vertex in $\Out(w)$ with 
	  $\rho^o_\ell(v) = B_\ell(w)$.
	  But then, by Invariant~(\ref{inv:Inv3}), $C_\ell(v)=1$ and thus the algorithm will reduce $C_\ell(v)$ to 0 and add
	  $(v,\ell)$ to the set $L$ in lines~\ref{lpm:countminus}--\ref{lpm:addzero}.
    \item $w \in \In(v) \setminus \target_\ell$ and $w \in V_2$: 
	  Then $\incr(\best(w))> \rho_\ell(w)$ iff there is a vertex $u \in \Out(w)$ with $\rho_\ell(u) > B_\ell(w)$.
	  If there would be such an $u \in \Out(w)$ different from $v$ then by the induction hypothesis we already have $(v,\ell) \in L$.
	  Thus we must have that $\rho^n_\ell(v) > B_\ell(w)$
	  and thus $(w, \ell)$ is added to $L$ in line~\ref{lpm:addv2} of the algorithm.
    \item $w \in \In(v) \cap \target_{\ell\ominus 1}$ and $w \in V_1$:  
	  Then $\incr(\best(w))> \rho_{\ell\ominus 1}(w)$ 
	  iff all $u \in \Out(w)$ have $\rho_\ell(u) =\infty$  and $\rho_{\ell\ominus 1}(w)=0$.
	  This is the case iff $v$ was the only vertex in $\Out(w)$ with $\rho_\ell(v)<\infty$.
	  But then, Invariant \ref{inv:Inv3}, $C_\ell(v)=1$ and thus the algorithm will decrement $C_\ell(v)$ to 0 and add
	  $(v,\ell\ominus 1)$ to the set $L$ in lines~\ref{lpm:countminus_inf}--\ref{lpm:addzero_inf}.	
	  
    \item $w \in \In(v) \cap \target_{\ell\ominus 1}$ and $w \in V_2$: Then $\incr(\best(w))> \rho_{\ell\ominus 1}(w)$ 
	  iff there is an $u \in \Out(w)$ with 
	  $\rho_\ell(u) = \infty $ and $\rho_{\ell\ominus 1}(w) = 0$.
	  If there would be such an $u \in \Out(w)$ different from~$v$ then by the induction hypothesis we already have $(v,\ell\ominus 1) \in L$.
	  Thus, we have that $\rho^n_\ell(v) =\infty > \rho_{\ell\ominus 1}(w)$ and 
	  $\incr(\rho^n_\ell(v)) = \infty > \rho_{\ell\ominus 1}(w)=0$.		   
	  In that case $(w, \ell\ominus 1)$ is added to $L$ in line~\ref{lpm:addv2_inf} of the algorithm.
	  \qedhere
  \end{itemize} 
\end{proof}

\begin{lemma}
For each player-1 dominion $D$ with $|D| \leq h$, 
for each $v \in D$, and 
all $1 \leq \ell \leq k$ we have 
      $\rho_\ell(v) \leq rank_1(\game,\target_\ell \cap D,v) < h$.
\end{lemma}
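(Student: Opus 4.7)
The plan is an induction on the iterations of the while loop, carried out in two stages: first, the structural upper bound $r_\ell(v) := rank_1(\game, \target_\ell \cap D, v) < h$; second, the invariant $\rho_\ell(v) \leq r_\ell(v)$. For the upper bound, since $D$ is a player-1 dominion it is 2-closed, so $\game[D]$ is a valid subgame and player~1's winning strategy restricted to $D$ witnesses $D \subseteq \at_1(\game[D], \target_\ell \cap D)$. The $\game[D]$-attractor grows by at least one vertex of $D \setminus \target_\ell$ per level, so $rank_1(\game[D], \target_\ell \cap D, v) \leq |D \setminus \target_\ell| \leq |D| - 1 \leq h - 1$. Combined with $rank_1(\game, \target_\ell \cap D, v) \leq rank_1(\game[D], \target_\ell \cap D, v)$ and $|D \setminus \target_\ell| \leq |V \setminus \target_\ell|$, we obtain $r_\ell(v) \leq \bar{n}_\ell \leq h - 1 < h$; the bound $r_\ell(v) \leq \bar{n}_\ell$ additionally ensures that the $\incr$ operation never jumps to $\infty$ in the induction below.

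For the invariant $\rho_\ell(v) \leq r_\ell(v)$, I would strengthen the claim to hold for \emph{all} $v \in V$ (with a vacuous bound when $r_\ell(v) = \infty$), since updates of $\rho_\ell(v)$ for $v \in D \cap V_1$ depend on $\rho$-values of all successors, some of which may lie outside $D$. A crucial auxiliary fact, established separately by induction on the attractor rank, is: if $r_\ell(v) < \infty$ for some $\ell$, then $r_{\ell'}(v) < \infty$ for every $\ell'$. The base case $v \in \target_\ell \cap D \subseteq D$ follows from the first paragraph applied to $\target_{\ell'}$, and the inductive step lifts through the attractor rules (for $V_1$, a low-rank successor for $\ell$ is by the induction hypothesis also in the $\ell'$-attractor; for $V_2$, all successors inherit the property). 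The main induction is on loop iterations: when $(v, \ell)$ is processed and $\rho_\ell(v)$ is set to $\incr(\best(v))$, verifying $\incr(\best(v)) \leq r_\ell(v)$ splits on whether $v \in \target_\ell$. For $v \in \target_\ell$ with $r_\ell(v) < \infty$, the auxiliary fact yields finite $r_{\ell \oplus 1}$ at a successor (via the attractor structure for the $\ell \oplus 1$-attractor), so the IH gives finite $\rho_{\ell \oplus 1}$, hence $\best(v) < \infty$ and $\incr(\best(v)) = 0$. For $v \notin \target_\ell$ with $r_\ell(v) = j \geq 1$, the attractor definition provides a successor $w$ with $r_\ell(w) \leq j - 1$ (for $V_1$, some such $w$; for $V_2$, all), and the IH gives $\rho_\ell(w) \leq j - 1$; hence $\best(v) \leq j - 1 < \bar{n}_\ell$ and $\incr(\best(v)) = \best(v) + 1 \leq j$.

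The main obstacle is the case $v \in D \cap V_1 \setminus \target_\ell$, where the attractor witness~$w$ in $\game$ may lie outside~$D$: without the strengthened invariant we could only bound $\rho_\ell(v)$ via successors within $D$, giving the weaker $\rho_\ell(v) \leq rank_1(\game[D], \target_\ell \cap D, v)$. The strengthened invariant plus the auxiliary fact remove this obstacle by controlling $\rho_\ell$ uniformly on the entire $\game$-attractor and by reconciling the cyclic structure $\ell \to \ell \oplus 1$ used by $\best$ at $\target_\ell$-vertices with the non-cyclic attractor to $\target_\ell \cap D$. Together the two inductions yield the claim for all $v \in D$ and all $\ell$, completing the proof.
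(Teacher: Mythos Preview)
Your strengthened invariant $\rho_\ell(v) \le r_\ell(v)$ for all $v \in V$ is not maintained by the algorithm. The inductive step ``$\best(v) \le j-1 < \bar n_\ell$ and $\incr(\best(v)) = \best(v)+1 \le j$'' needs $j = r_\ell(v) \le \bar n_\ell$, but you established that bound only for $v \in D$; for $v \notin D$ the rank $r_\ell(v)$ can be finite yet strictly larger than $\bar n_\ell = \min\{h-1,\lvert V\setminus T_\ell\rvert\}$, and then $\incr$ \emph{does} jump to~$\infty$. Concretely, take $D=\{a,c\}$ with $a\to c\to a$, $T_\ell=\{c\}$, $h=2$ (so $\bar n_\ell=1$), and a chain $d_1\to\cdots\to d_{10}\to c$ of player-$1$ vertices outside~$D$. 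Then $r_\ell(d_9)=2$, but once $\rho_\ell(d_{10})$ has been lifted to~$1$ the next lift of $d_9$ yields $\incr(1)=\infty$, so $\rho_\ell(d_9)=\infty\not\le 2$. The invariant fails exactly on the vertices your strengthening was meant to cover, and with it the inductive argument for the $v\in D$ case that relies on a successor outside~$D$.

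The paper's proof avoids this entirely by keeping the induction on $D$ only: in the $V_1$ case it picks the rank-decreasing successor $w$ \emph{inside}~$D$ (using that $D$ is $2$-closed, so such a witness exists within $\game[D]$) and applies the hypothesis only to vertices of~$D$; the $V_2$ and $v\in T_\ell$ cases likewise use successors in~$D$. This is precisely the route your third paragraph dismisses as ``the weaker $\rho_\ell(v)\le rank_1(\game[D],T_\ell\cap D,v)$''---which you already showed is $<h$ and $\le\bar n_\ell$---and it needs neither a global strengthening nor the auxiliary equal-attractors fact. If you insist on the sharper $\game$-rank bound via a global invariant, cap the comparison function first: set $\hat r_\ell(v)=r_\ell(v)$ when $r_\ell(v)\le\bar n_\ell$ and $\hat r_\ell(v)=\infty$ otherwise; then $\hat r$ really is a pre-fixed point of the lifting operator on all of~$V$, and your induction goes through unchanged.
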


\begin{proof}
  As all functions $\rho_\ell(.)$ are initialized with the $0$-function, the invariant is satisfied trivially in the
  \emph{base case} when the algorithm first enters the while loop.

  Now for the \emph{induction step} consider an iteration of the loop and let us assume all the invariants are satisfied before the loop.
  First, notice that as $|D|\leq h$, 
  we have $rank_1(\game,\target_\ell \cap D,v)< h$  for all $1 \leq \ell \leq k$ and $v \in D$. 
  The value $\rho_\ell(v)$ is only updated in line 11 and there set to $\incr(\best(v))$.
  We distinguish three different cases.
    \begin{itemize}
      \item Assume $v \in V_1$ and $rank_1(\game,\target_\ell \cap D,v)= j$ with $1 \leq j < h$ then,
	    by definition of $rank_1$, there is a $w \in D, w\not=v$, with $(v,w)\in E$ and 
	    $rank_1(\game,\target_\ell \cap D,w)= j-1$.
	    Now as the invariant is valid before the iteration and $\rho_\ell(w)$
	    is not changed during the iteration, we have $\rho_\ell(w) \leq j-1$ and thus $\best(v) \leq j-1$.
	    Hence, $\incr(\best(v)) \leq j$ and the invariant is still satisfied.
      \item Assume $v \in V_2$ and $rank_1(\game,\target_\ell \cap D,v)= j$ with $1 \leq j < h$ then,
	    by definition of $rank_1$, $rank_1(\game,\target_\ell \cap D,w)= j-1$ for each $(v,w)\in E$
	    (as $D$ is 2-closed we have $w \in D$).
	    Now as the invariant is valid before the iteration and $\rho_\ell(w)$
	    is not changed during the iteration, we have $\rho_\ell(w) \leq j-1$ 
	    for each $(v,w)\in E$ and thus $\best(v) \leq j-1$. 
	    Hence, $\incr(\best(v)) \leq j$ and the invariant is still satisfied.
      \item Finally, assume $rank_1(\game,\target_\ell \cap D,v)= 0$, that is $v \in \target_\ell$.
	    By the induction hypothesis for all $w \in D$ with $(v,w)\in E$ it holds that 
	    $\rho_{\ell \oplus 1}(w)<h$ (and there exists such a $w \in D$)
	    and thus $\best(v)<h$.
	    Hence, $\incr(\best(v))=0$ and the loop invariant is still satisfied.
    \end{itemize}
    Hence, this loop invariant is maintained during the whole algorithm.
\end{proof}

So far we have shown that the algorithm behaves as described by Invariant~\ref{inv}.
The next lemma provides the ingredients to show that the set $W=\{v \in V \mid \rho_\ell(v)<\infty \text{ for some } \ell\}$
is a player-1 dominion by exploiting the fact that the functions $\rho_\ell$ form a fixed-point of the update operator.

\begin{lemma}\label{lem:Conj_ProgressMeasureII}
 Let $W=\{v \in V \mid \rho_\ell(v)<\infty \text{ for some } \ell\}$ be the set computed by Algorithm~\ref{alg:ConjBuchiLifting}.
 \begin{enumerate}[\lu1\ru]
  \item For all $v \in V$: If $\rho_\ell(v) < \infty$,  then
	player~1 has a strategy to reach $\{v' \in \target_\ell \mid \rho_\ell(v')=0\}$ 
	from $v$ by only visiting vertices in $W$. 
  \item For all $v \in  \target_\ell$: If $\rho_\ell(v)=0$, then 
	player~1 has a strategy to reach $\{v' \in \target_{\ell \oplus 1} \mid \rho_{\ell \oplus 1}(v')=0\}$ from $v$ by only visiting vertices in $W$.   
 \end{enumerate}
\end{lemma}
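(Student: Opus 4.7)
The plan is to prove both parts by induction on $\rho_\ell(v)$, leveraging the fixed-point property of the final functions $\rho_\ell$. Before starting the induction, I would first observe that upon termination $L = \emptyset$, so by Invariants~(\ref{inv:Inv1}) and (\ref{inv:Inv4}) every pair $(v,\ell)$ satisfies $\rho_\ell(v) = \incr(\best(v))$, where $\best$ and $\incr$ are now evaluated with respect to the final $\rho$ values. This identity is the workhorse for both claims.

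For part~(1), I proceed by induction on $\rho_\ell(v)$. If $\rho_\ell(v) = 0$, then $v \in \target_\ell$ must hold, because $v \notin \target_\ell$ would force $\incr(\best(v)) \geq 1$; hence $v$ itself is in the target set and the empty strategy works. If $0 < \rho_\ell(v) < \infty$, then $v \notin \target_\ell$, so $\incr(\best(v)) = \best(v)\pp = \rho_\ell(v)$ and in particular $\best(v) = \rho_\ell(v)-1 < \infty$. For $v \in V_1$, player~1 picks the successor $w$ achieving the minimum, so $\rho_\ell(w) = \best(v) < \rho_\ell(v)$; for $v \in V_2$, by the max-definition of $\best$ every successor $w$ satisfies $\rho_\ell(w) \leq \best(v) < \rho_\ell(v)$. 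In both cases the next vertex has strictly smaller finite $\rho_\ell$ value, so lies in~$W$, and the inductive hypothesis provides the continuation.

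For part~(2), I would use the identity $\rho_\ell(v) = \incr(\best(v)) = 0$ together with $v \in \target_\ell$, which forces $\best(v) < \infty$ (otherwise $\incr(\best(v)) = \infty$). Since $v \in \target_\ell$, the quantity $\best(v)$ is computed from the values $\rho_{\ell \oplus 1}(w)$ over $w \in \Out(v)$. Thus if $v \in V_1$ player~1 chooses $w$ with $\rho_{\ell \oplus 1}(w) = \best(v) < \infty$, while if $v \in V_2$ every successor $w$ has $\rho_{\ell \oplus 1}(w) \leq \best(v) < \infty$. In either case the successor lies in $W$ with finite $\rho_{\ell \oplus 1}$, and part~(1) applied to the index $\ell \oplus 1$ furnishes the rest of the strategy to reach $\{v' \in \target_{\ell \oplus 1} \mid \rho_{\ell \oplus 1}(v')=0\}$ without leaving $W$. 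The main (very mild) subtlety is to check closure inside $W$ at player~2 vertices; but this is exactly what the $\max$ in the definition of $\best$ at $V_2$ vertices buys us, namely a uniform finite bound on $\rho$ over all successors, preventing the adversary from escaping to the $\infty$-region.
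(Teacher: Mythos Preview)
Your proposal is correct and follows essentially the same approach as the paper's proof: both first extract the fixed-point identity $\rho_\ell(v)=\incr(\best(v))$ from Invariants~(\ref{inv:Inv1}) and~(\ref{inv:Inv4}), then prove part~(1) by induction on the finite value $\rho_\ell(v)$ via a $V_1$/$V_2$ case split, and derive part~(2) by one step using the $\rho_{\ell\oplus 1}$-branch of $\best$ followed by an application of part~(1). Your treatment is in fact slightly more explicit about why $0<\rho_\ell(v)<\infty$ forces $v\notin\target_\ell$ and why the successors stay in $W$ at $V_2$-vertices.
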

\begin{proof}
Notice that by the Invariants (\ref{inv:Inv1}) \& (\ref{inv:Inv4}) we have 
$\rho_\ell(v) = \incr(\best(v))$ for all $v \in V$ and all $1 \leq \ell \leq k$, i.e.,
the functions $\rho_\ell(v)$ are a fixed-point for the $\incr(\best(v))$ updates.

\smallskip\par\noindent (1)
Consider a vertex $v \in V$ with $\rho_\ell(v)=j$ for $0<j<h$. 
We will show by induction in $j$ that then 
player~1 has a strategy to reach $S=\{v' \in \target_\ell \mid \rho_\ell(v')=0\}$ from $v$ by only visiting vertices in $W$. 
For the \emph{base case} we exploit that the functions $\rho_\ell(v)$ are a fixed-point of the $\incr(\best(v))$ updates.
By the definition of $\incr$ we have that $\rho_\ell(v)=0$ only if 
$v \in \target_\ell$ \footnote{Recall that we assume that each vertex
has at least one outgoing edge.}  and thus we already have reached $S$ in the base case.

For the \emph{induction step} let us assume the claim holds for all $j' < j$ and 
consider a vertex~$v$ with $\rho_\ell(v)=j$.
We distinguish the cases $v \in V_1$ and $v \in V_2$.
\begin{itemize}
 \item $v \in V_1$: Since $\rho$ is a fixed-point of $\incr(\best(v))$, we have that there is at least one vertex~$w$ with $(v,w)\in E$
       and $\rho_\ell(w)=j-1$. By the induction hypothesis, player~1
       has a strategy to reach $S$ starting from $w$,
       and, as player~1 can choose the edge $(v,w)$, also a strategy starting from $v$.
 \item $v \in V_2$: Since $\rho$ is a fixed-point of $\incr(\best(v))$, we have that $\rho_\ell(w) < j$ for all vertices~$w$ with $(v,w)\in E$.
       By the induction hypothesis player~1 has a strategy to reach $S$ starting from any $w$ with $(v,w)\in E$,
       and thus also when starting from $v$.
\end{itemize}
Moreover, in both cases only the vertex $v$ is added to the path induced by the strategy, which by definition is in~$W$.
Hence, in both cases player~1 has a strategy to reach $S$ from $v$ by only visiting vertices in~$W$,
which concludes the proof of part~1.

\smallskip\par\noindent (2) Recall that we have $v \in T_\ell$ and $\rho_\ell(v) = 0$.
Let $S' = \{v' \in \target_{\ell \oplus 1} \mid \rho_{\ell \oplus 1}(v')=0\}$. Again we distinguish whether $v \in V_1$ or $v \in V_2$.
\begin{itemize}
 \item If $v \in V_1$, then, as the functions $\rho_\ell$ form a fixed-point, there is at least one vertex~$w$ with $(v,w)\in E$
       and $\rho_{\ell \oplus 1}(w)< \infty$. 
       Then by (1) player~1 has a strategy to reach $S'$ starting from $w$,
       and, as player~1 can choose the edge $(v,w)$, also a strategy starting from~$v$.
 \item If $v \in V_2$, then, as $\rho$ is a fixed-point, we have $\rho_{\ell \oplus 1}(w)< \infty$ for all $w$ with $(v,w)\in E$.
       Then by~(1) player~1 has a strategy to reach $S'$ starting from any $w$ with $(v,w)\in E$,
       and thus also when starting from $v$.
\end{itemize}
Again, in both cases only the vertex $v$ is added to the path induced by the strategy, which by definition is in $W$, and thus
in both cases player~1 has a strategy to reach $S'$, which concludes the proof of part~2.
\end{proof}

We are now prepared to prove the correctness of Algorithm~\ref{alg:ConjBuchiLifting}.

\begin{proposition} \label{prop:ConjBuchiLiftingCorrectness}
 For the game graph $\game$ and objective 
 $\bigwedge_{1 \le \ell \le k} \objsty{Büchi}{\target_\ell}$,
 Algorithm~\ref{alg:ConjBuchiLifting} 
 either returns a player-1 dominion or the empty set, and,
 if there is at least one player-1 dominion of size $\leq h$
 then it returns a player-1 dominion containing all player-1 dominions of size $\leq h$.
\end{proposition}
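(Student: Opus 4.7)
Let $W = \{v \in V \mid \rho_\ell(v) < \infty \text{ for some } 1 \le \ell \le k\}$ denote the set returned by Algorithm~\ref{alg:ConjBuchiLifting}. The plan is to prove two statements separately: (a) either $W = \emptyset$ or $W$ is a player-1 dominion, and (b) if at least one player-1 dominion of size $\le h$ exists, then $W$ contains every such dominion (and is therefore nonempty and a dominion by (a)). Throughout I will use that upon termination the $\rho_\ell$ are a fixed point of $v \mapsto \incr(\best(v))$, as observed after Invariant~\ref{inv} from the combination of Invariants~(\ref{inv:Inv1}) and~(\ref{inv:Inv4}).

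For (a), assume $W \neq \emptyset$ and fix an arbitrary $v \in W$ together with some index $\ell$ such that $\rho_\ell(v) < \infty$. I will construct a finite-memory winning strategy for player~1 that starts at $v$, never leaves $W$, and satisfies $\bigwedge_{\ell=1}^{k} \objsty{Büchi}{\target_\ell}$. Player~1 maintains a counter $c$ initialized to $\ell$. While the current play vertex $u$ is not in $\{v' \in \target_c \mid \rho_c(v') = 0\}$, player~1 follows the strategy provided by Lemma~\ref{lem:Conj_ProgressMeasureII}(1), which guarantees reaching this set while visiting only vertices in $W$. When such a vertex is reached, player~1 applies Lemma~\ref{lem:Conj_ProgressMeasureII}(2) for one step to move toward $\{v' \in \target_{c \oplus 1} \mid \rho_{c \oplus 1}(v') = 0\}$, then increments $c$ to $c \oplus 1$ and repeats. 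Concatenating these phases yields a play that visits each $\target_c$ infinitely often and stays inside $W$. This simultaneously shows that every $u \in W$ is controlled so that $\Out(u) \cap W \neq \emptyset$ when $u \in V_1$ and, by part~(1) of the lemma applied to every successor (or by the fixed-point property directly), that every successor of a $V_2$-vertex in $W$ lies in $W$; hence $W$ is $2$-closed and player~1 wins from every vertex of $W$ without leaving $W$, so $W$ is a player-1 dominion.

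For (b), I will invoke Invariant~(\ref{inv:Inv5}) directly. Let $D$ be any player-1 dominion with $|D| \le h$, and fix $v \in D$ and $1 \le \ell \le k$. The invariant asserts $\rho_\ell(v) \le rank_1(\game, \target_\ell \cap D, v) < h$, which in particular is finite. Therefore $v \in W$, and since $v$ and $\ell$ were arbitrary, $D \subseteq W$. Combining (a) and (b) yields the statement of the proposition.

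The main obstacle is the construction in (a): one must verify that the counter strategy is well-defined throughout the play, i.e., that after each phase the counter is advanced only from a vertex $v'$ with $\rho_c(v') = 0$ in $\target_c$, so that Lemma~\ref{lem:Conj_ProgressMeasureII}(2) applies and places the play on a vertex $v''$ with $\rho_{c \oplus 1}(v'') < \infty$, from which part~(1) of the lemma can be reused. This is exactly what the two parts of Lemma~\ref{lem:Conj_ProgressMeasureII} are designed to guarantee, so the chaining goes through and the strategy satisfies all $k$ Büchi conditions while staying in $W$.
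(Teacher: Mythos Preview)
Your proof is correct and follows essentially the same approach as the paper's: part~(a) builds the winning strategy for player~1 by chaining the reachability strategies of Lemma~\ref{lem:Conj_ProgressMeasureII}(1) and~(2) through a cycling counter to visit all $\target_\ell$ while staying in $W$, and part~(b) applies Invariant~(\ref{inv:Inv5}) directly to conclude $D \subseteq W$. The only cosmetic difference is that you additionally spell out the $2$-closedness of $W$ via the fixed-point property, whereas the paper leaves this implicit in the fact that the constructed strategy never leaves $W$.
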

\begin{proof}
 We will show that 
 (1) $W=\{v \in V \mid \rho_\ell(v)<\infty \text{ for some } \ell\}$ is a 
 player-1 dominion and that 
 (2) each player-1 dominion of size $\leq h$ is contained in~$W$.

 \smallskip\par\noindent (1) The following strategy is winning for player~1 and does not leave $W$.
    First, for vertices  $v \in W \setminus \bigcup_{\ell=1}^{k} \target_\ell$ pick some $\ell$ s.t. $\rho_\ell(v)< \infty$ and
    play the strategy given by Lemma~\ref{lem:Conj_ProgressMeasureII}(1) to reach $U_\ell \cap W$.
    The first time a set $U_\ell$ is reached, start playing the strategies given by Lemma~\ref{lem:Conj_ProgressMeasureII}(2)
    to first reach the set $U_{\ell \oplus 1} \cap W$, then the set $U_{\ell \oplus 2} \cap W$ and so on.
    This strategy visits all Büchi sets infinitely often and will never leave the set~$W$.
    That is, $W$ is a player-1 dominion.
 
 \smallskip\par\noindent (2) Consider a player-1 dominion $D$ with $|D|\leq h$. Then,  we have that $rank_1(\game,\target_\ell \cap D,v) \leq h-1$ for all $\target_\ell$
 and all $v \in D$
    and by Invariant (\ref{inv:Inv5}) that $\rho_\ell(v) \leq h-1$ for all $v\in D$.
    That is, each $d\in D$ has $\rho_1(v)<\infty$ and thus $D \subseteq W$.
\end{proof}

Finally, let us consider the runtime of Algorithm~\ref{alg:ConjBuchiLifting}.

\begin{proposition}\label{prop:ConjBuchiLiftingTime}
 Algorithm~\ref{alg:ConjBuchiLifting} runs in time~$O(k\cdot h \cdot m)$.
\end{proposition}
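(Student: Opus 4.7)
The plan is to bound the total work via a standard amortized argument: bound the cost of processing a single pair $(v,\ell)$ in the while loop, then bound how many times each pair can be processed, and sum.

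First I would verify the per-iteration cost. When an iteration pulls $(v,\ell)$ from $L$, the algorithm recomputes $\best(v)$ and $\cnt(v)$ (each by scanning $\Out(v)$), updates the constant-size fields $B_\ell(v)$, $C_\ell(v)$, $\rho_\ell(v)$, and then scans $\In(v)$ once (and possibly a second time in the $\rho_\ell(v)=\infty$ branch, which only inspects edges into $v$ from vertices in $\target_{\ell\ominus 1}$). All inner operations on the counters and on $L$ are $O(1)$. Hence the cost of one iteration processing $(v,\ell)$ is $O(\OutDeg(v)+\InDeg(v))$.

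Second I would bound the number of times a given pair $(v,\ell)$ is processed. The key fact is Invariant~(\ref{inv:Inv4}): a pair $(v,\ell)$ belongs to $L$ only when $\rho_\ell(v)<\incr(\best(v))$, and upon processing, $\rho_\ell(v)$ is set to $\incr(\best(v))$, so the value strictly increases. Because $\rho_\ell$ takes values in the linearly ordered set $\{0,1,\dots,\bar{n}_\ell,\infty\}$ with $\bar{n}_\ell \le h-1$, the value $\rho_\ell(v)$ can strictly increase at most $h+1$ times in total, i.e., $O(h)$ times. (The monotonicity of $\rho_\ell(v)$ itself follows because it is only modified in line~11, where it is reset to $\incr(\best(v))$, and $\best$ is monotone non-decreasing in the current $\rho$-values, which themselves are monotone by the same argument.)

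Summing over all pairs gives
\[
\sum_{\ell=1}^{k}\sum_{v\in V} O(h)\cdot O\bigl(\OutDeg(v)+\InDeg(v)\bigr) \;=\; O(h)\cdot\sum_{\ell=1}^{k} O(m) \;=\; O(k\cdot h\cdot m).
\]
The initialization phase costs $O(kn)\subseteq O(km)$ (since $\OutDeg(u)\ge 1$ for all $u$), which is absorbed in the bound. The main obstacle is really only making sure that the accounting of the two inner \textbf{foreach} loops over $\In(v)$ cannot add more than $O(\InDeg(v))$ work per processing of $(v,\ell)$, and that appending pairs to $L$ can be done in $O(1)$ amortized (using, e.g., membership flags so the condition $(w,\ell)\notin L$ is checked in constant time); both are standard to arrange, giving the claimed $O(k\cdot h\cdot m)$ running time.
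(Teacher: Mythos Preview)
Your proposal is correct and follows essentially the same approach as the paper: bound the cost of processing a pair $(v,\ell)$ by $O(\OutDeg(v)+\InDeg(v))$, observe that each processing strictly increases $\rho_\ell(v)$ within a range of size $O(h)$, and sum. The only minor difference is that the paper argues the strict increase directly (each processing increments $\rho_\ell(v)$ by at least~$1$, or to $\infty$ if $v\in\target_\ell$), whereas you derive it via Invariant~(\ref{inv:Inv4}); your added remarks about $O(1)$ membership tests for $L$ make an implementation detail explicit that the paper leaves implicit.
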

\begin{proof}
 Notice that the functions $\best(v)$ and $\cnt(v)$ can be computed in time $O(\OutDeg(v))$ while $\incr(.)$ is in constant time.
 An iteration of the initial foreach loop takes time $O(\OutDeg(v))$ and, as each $v \in V$ is considered $k$ times,
 the entire foreach loop takes time  $O(k\cdot m)$. 
 The running time of Algorithm~\ref{alg:ConjBuchiLifting} is dominated by the while loop.
 Processing a pair $(v,\ell) \in L$ takes time $O(\OutDeg(v)+\InDeg(v))$. 
 Moreover, whenever $(v,\ell)$ is processed, the value of $\rho_\ell(v)$ is increased by $1$ if $v \not\in \target_\ell$
 or by $\infty$ if $v \in \target_\ell$ and thus each pair can be considered at most $h$ times.
 Hence, for the entire while loop we have a running time of 
 $
  O\left(
      h \cdot \sum_{\ell=1}^{k}\sum_{v \in V} (\OutDeg(v)+\InDeg(v))
  \right)
 $
 which can be simplified to $O(k\cdot h \cdot m)$. 
\end{proof}

}

{\subsection{Our Improved Algorithm for GR(1) Games}}
{In this section we present our $O(k_1 k_2 \cdot n^{2.5})$-time algorithm
for GR(1) games, see Algorithm~\ref{alg:GR1Game}.}
The overall structure of {the algorithm} is the 
same as for the basic algorithm: We search for a player-2 dominion~$S^j$
and if one is found, then its player-2 attractor $D^j$ is determined and removed 
from the current game graph $\game^j$ (with $\game^1 = \game$) to create the 
game graph for the next iteration, $\game^{j+1}$. If no player-2 dominion
exists, then the remaining vertices are returned as the winning set of player~1.
The difference to the basic algorithm lies in the way we search for player-2 dominions.
Two different procedures are used for this purpose: First we search for ``small''
dominions with the subroutine~$\domalg$. If no small dominion exists, 
then we search for player-2 dominions as in the basic algorithm. 
{The guarantee that we find a ``large'' dominion in the second case (if a player-2 dominion exists)
allows us to bound the number of times this can happen.}
{The subroutine $\domalg$ called with parameter $\kmax$ on a game graph~$\game$
provides the guarantee to identify all player-2 dominions~$D$ for which 
$\lvert \att(\game, D) \rvert \le \kmax$, where $\kmax$ is 
set to $\sqrt{n}$ to achieve the desired runtime.

\smallskip\noindent{\em Search for Large Dominions.}
If the subroutine $\domalg$ returns an empty set, i.e., when we have for all player-2
dominions~$D$ that $\lvert \att(\game^j, D) \rvert > \kmax$, then we search 
for player-2 dominions as in the basic algorithm: 
For each $1 \le \ell \le k_2$ first the player-1 attractor~$Y^j_\ell$
of $U^j_\ell$ is temporarily removed from the graph. 
Then a generalized Büchi game with target sets $L^j_1  \setminus Y^j_\ell, \ldots,
L^j_{k_1} \setminus Y^j_\ell$ is solved on $\overline{\game^j \setminus Y^j_\ell}$. 
The generalized Büchi player in this game corresponds
to player~2 in the GR(1) game and his winning set to a player-2 dominion
in the GR(1) game{, see Lemma~\ref{lem:dominionGR1ConjBuchi}}.
}
{
\begin{algorithm}[t!]
		\SetAlgoRefName{GR(1)Game}
	\caption{GR(1) Games in $O(k_1 \cdot k_2 \cdot n^{2.5})$ Time} 
	\label{alg:GR1Game}
	\SetKwInOut{Input}{Input}
	\SetKwInOut{Output}{Output}
	\SetKw{break}{break}
	\BlankLine
	\Input{Game graph $\game=((V, E),(\vo,\vt))$, Obj.\ $\bigwedge_{t=1}^{k_1} \objsty{Büchi}{L_t} \rightarrow \bigwedge_{\ell=1}^{k_2} \objsty{Büchi}{U_\ell}$}
	\Output
	{
	  Winning set of player~1
	}
	\BlankLine
	$\game^1 
	\gets \game$\;
	$\set{U^1_\ell} \gets \set{U_\ell}$; $\set{L^1_t} \gets \set{L_t}$\;
	$j \gets 0$\;
	\Repeat
	  {
	    $D^j= \emptyset$
	  }
	  {
	    $j \gets j+1$ \;
				$S^j \gets \domalg(\game^j, \set{U^j_\ell}, \set{L^j_t}, \sqrt{n})$\;
				\If{$S^j = \emptyset$}{
					    \For{$1 \leq \ell \leq k_2$}
							{
							$Y^j_\ell \gets \ato(\game^j, U^j_\ell)$\;
							$S^j \gets \buchialg(\overline{\game^j \setminus Y^j_\ell}, 
							\bigwedge_{\ell=1}^{k_1} \objsty{Büchi}{L^j_t  \setminus Y^j_{\ell}})$\;
							\lIf{$S^j \not= \emptyset$}{\break}
							}
				}
	    $D^j \gets \att(\game^j, S^j)$\;
	    $\game^{j+1}\gets \game^j \setminus D^j$\;
	    $\set{U^{j+1}_\ell} \gets \set{U^j_\ell \setminus D^j}$; 
	    $\set{L^{j+1}_t} \gets \set{L^j_t \setminus D^j}$\;
	  }
	  \Return $V^j$
	  \BlankLine
		\SetKwProg{myproc}{Procedure}{}{}
		\myproc{$\domalg(\game^j, 
		\set{U^j_\ell}, \set{L^j_t}, \kmax)$}{
			\For{$i \gets 1$ \KwTo $\lceil\log_2(2\kmax)\rceil$}{
				construct $G^j_{i}$\; 
				$Z^j_i \gets \{v \in \vt \mid \OutDeg(G^j_i, v)=0\} \cup 
			     \{v \in \vo \mid \OutDeg(G^j, v)>2^i\}$\;
				\For{$1 \leq \ell \leq k_2$}{
					$Y^j_{i, \ell} \gets \ato(\game^j_{i}, U^j_\ell \cup Z^j_{i})$\;
					$X_{i,\ell}^j \gets \progress(\overline{\game^j_{i} \setminus Y^j_{i, \ell}}, 
					\bigwedge_{\ell=1}^{k_1} \objsty{Büchi}{L^j_t  \setminus Y^j_{i, \ell}}, 2^i)$\;
					\lIf{$X_{i,\ell}^j \ne \emptyset$}{
						\Return{$X_{i,\ell}^j$}
					}
				}
			}
			\Return{$\emptyset$}\;
		}
\end{algorithm}
} 

\smallskip\noindent{\em Procedure $\domalg$.}
The procedure $\domalg$ searches for player-2 dominions 
in the GR(1) game, and returns some dominion if 
there exists a dominion~$D$ with $\lvert \att(\game, D) \rvert \le \kmax$.
To this end we again consider generalized Büchi games with target sets 
$L^j_1, \ldots, L^j_{k_1}$, where the generalized Büchi player corresponds to 
player~2 in the GR(1) game. We use the same hierarchical graph decomposition as for 
Algorithm~\ref{alg:ConjBuchi}: Let the incoming edges of each vertex 
be ordered such that the edges from vertices of $\vt$ come first;
for a given game graph $\game^j$ the graph 
$G^j_i$ contains all vertices of $\game^j$, for each vertex its
first $2^i$ incoming edges, and for each vertex with outdegree at most $2^i$
all its outgoing edges. The set $Z^j_i$ contains all vertices of $\vo$ with 
outdegree larger than $2^i$ in $G^j$ and all vertices of $\vt$ that have no outgoing 
edge in $G^j_i$. We start with $i = 1$ and increase $i$ by one as long as no 
dominion was found. For a given~$i$ we perform the following operations for 
each $1 \le \ell \le k_2$: First the player~1 attractor $Y_{i, \ell}^j$
of $U^j_\ell \cup Z^j_i$ is determined. Then we 
search for player-1 dominions on
$\overline{\game^j_i \setminus Y^j_{i, \ell}}$ w.r.t.\ the objective 
$\bigwedge_{t=1}^{k_1} \objsty{Büchi}{L_t\setminus Y^j_{i, \ell}}$
with the generalized Büchi progress measure algorithm and parameter $h = 2^i$,
i.e., by Theorem~\ref{thm:ConjBuchiLiftingCorrectness} 
the progress measure algorithm returns all generalized Büchi dominions in 
$\overline{\game^j_i \setminus Y^j_{i, \ell}}$ of size at most~$h$.

The following lemma shows how the properties of the hierarchical graph 
decomposition extend{ from generalized Büchi games} to GR(1) games. The 
first part is crucial for correctness: Every non-empty set found by 
the progress measure algorithm on $\overline{\game^j_i \setminus Y^j_{i, \ell}}$
for some~$i$ and~$\ell$ is indeed a player-2 dominion in the GR(1) game.
The second part is crucial for the runtime argument: Whenever the basic 
algorithm for GR(1) games would identify a player-2 dominion~$D$ with 
$\lvert \att(\game, D) \rvert \le 2^i$, then $D$ is also a generalized
Büchi dominion in $\overline{\game^j_i \setminus Y^j_{i, \ell}}$ for some $\ell$.

\begin{lemma}\label{lem:domdecomp}
	Let the notation be as in Algorithm~\ref{alg:GR1Game}.
	\begin{compactenum}
		\item Every $X_{i,\ell}^j \ne \emptyset$ is a 
		player-2 dominion in the GR(1) game on $\game^j$ 
		with $X_{i,\ell}^j \cap U_\ell^j = \emptyset$.\label{sublem:domsound}
		\item If for player~2 
		 there exists in $\game^j$ a dominion~$D$ w.r.t.\ the generalized
		Büchi objective $\bigwedge_{t=1}^{k_1} \objstytxt{Büchi}{L^j_t}$ such that 
		$D \cap U_\ell^j = \emptyset$ for some $1 \le \ell \le k_2$ and
		$\lvert \att(\game^j, D) \rvert \le 2^i$, then $D$ is a dominion
		w.r.t.\ the generalized Büchi objective $\bigwedge_{t=1}^{k_1} 
		\objstytxt{Büchi}{L^j_t \setminus Y^j_{i, \ell}}$ in 
		$\game^j_{i} \setminus Y^j_{i, \ell}$.\label{sublem:domsize}
	\end{compactenum}
\end{lemma}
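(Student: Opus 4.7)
The plan is to reduce both parts to consequences of Lemma~\ref{lem:decomp} combined with the characterization of GR(1) dominions in Lemma~\ref{lem:dominionGR1ConjBuchi}. A useful starting observation is that $U^j_\ell \cup Z^j_i \subseteq Y^j_{i,\ell}$, so any subset of $V^j \setminus Y^j_{i,\ell}$ automatically avoids both the guarantee set $U^j_\ell$ and the ``forbidden'' vertices $Z^j_i$ of the hierarchical decomposition.

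For Part~(\ref{sublem:domsound}), by Proposition~\ref{prop:ConjBuchiLiftingCorrectness} every non-empty $X_{i,\ell}^j$ is a player-1 dominion of $\overline{\game^j_i \setminus Y^j_{i,\ell}}$ w.r.t.\ $\bigwedge_t \objsty{Büchi}{L^j_t \setminus Y^j_{i,\ell}}$. First I would verify that $X_{i,\ell}^j$ is closed for player~1 in $\game^j$: it is closed for player~1 in $\game^j_i \setminus Y^j_{i,\ell}$ by the dominion property, and because $V^j \setminus Y^j_{i,\ell}$ is itself closed for player~1 in $\game^j_i$ by Lemma~\ref{lem:attr}(\ref{sublem:complattr}), this closedness lifts to $\game^j_i$; combining with $X_{i,\ell}^j \cap Z^j_i = \emptyset$ and Lemma~\ref{lem:decomp}(\ref{sublem:sound}) then yields closedness in $\game^j$. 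The witnessing strategy in $\overline{\game^j_i \setminus Y^j_{i,\ell}}$ only uses edges of $G^j$, so viewed as a player-2 strategy in $\game^j$ it keeps the play inside $X_{i,\ell}^j$; and since $X_{i,\ell}^j \cap Y^j_{i,\ell} = \emptyset$, visiting $L^j_t \setminus Y^j_{i,\ell}$ infinitely often coincides with visiting $L^j_t$ infinitely often. Hence $X_{i,\ell}^j$ is a player-1 dominion of $\overline{\game^j}$ w.r.t.\ $\bigwedge_t \objsty{Büchi}{L^j_t}$ with $X_{i,\ell}^j \cap U^j_\ell = \emptyset$, and Lemma~\ref{lem:dominionGR1ConjBuchi} concludes the GR(1) claim.

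For Part~(\ref{sublem:domsize}), since the player-2 dominion $D$ is player-1-closed in $\game^j$ and satisfies $\lvert \att(\game^j,D) \rvert \le 2^i$, Lemma~\ref{lem:decomp}(\ref{sublem:size}) yields $\game^j_i[D] = \game^j[D]$, $D \cap Z^j_i = \emptyset$, and that $D$ is closed for player~1 in $\game^j_i$. The crucial step, which I expect to be the main obstacle, is to show $D \cap Y^j_{i,\ell} = \emptyset$: if some $v \in D$ lay in the player-1 attractor $\ato(\game^j_i, U^j_\ell \cup Z^j_i)$, then in $\game^j_i$ player~1 could force reaching $U^j_\ell \cup Z^j_i$ from~$v$, while $D$ being player-1-closed in $\game^j_i$ would let player~2 keep the play inside~$D$, contradicting $D \cap (U^j_\ell \cup Z^j_i) = \emptyset$. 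Once $D \cap Y^j_{i,\ell} = \emptyset$ is established, $D$ is player-1-closed in $\game^j_i \setminus Y^j_{i,\ell}$, the player-2 witnessing strategy on $\game^j[D] = \game^j_i[D]$ stays entirely outside $Y^j_{i,\ell}$, and because $L^j_t \cap D = (L^j_t \setminus Y^j_{i,\ell}) \cap D$ it certifies $D$ as the required dominion in $\game^j_i \setminus Y^j_{i,\ell}$ for $\bigwedge_t \objsty{Büchi}{L^j_t \setminus Y^j_{i,\ell}}$.
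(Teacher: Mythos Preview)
Your proposal is correct and follows essentially the same route as the paper: both parts reduce to Lemma~\ref{lem:decomp} together with Lemma~\ref{lem:dominionGR1ConjBuchi}, passing through player-1-closedness in $\game^j_i$ and the disjointness of $X_{i,\ell}^j$ (respectively $D$) from $U^j_\ell \cup Z^j_i$ and hence from $Y^j_{i,\ell}$. Your contradiction argument for $D \cap Y^j_{i,\ell} = \emptyset$ spells out what the paper compresses into a single ``Thus,'' but the substance is identical.
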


{
\pagebreak
\begin{proof}We prove the two points separately.
	\begin{enumerate}
		\item By Theorem~\ref{thm:ConjBuchiLiftingCorrectness} the set $X_{i,\ell}^j$ is 
		a player-2 dominion on $\game^j_i \setminus Y^j_{i, \ell}$ w.r.t.\ 
		the generalized Büchi objective $\bigwedge_{t=1}^{k_1} 
		\objsty{Büchi}{L^j_t \setminus Y^j_{i, \ell}}$ of player~2.
		By Lemma~\ref{lem:attr}(\ref{sublem:complattr}) $V^j \setminus Y^j_{i, \ell}$
		is closed for player~1 on $\game^j_i$. Thus by 
		Lemma~\ref{lem:attr}(\ref{sublem:winclosed}) $X_{i,\ell}^j$ is a player-2 
		dominion w.r.t.\ the generalized Büchi objective also in $\game^j_i$.
		As $X_{i,\ell}^j$ is player~1 closed in $\game^j_i$ and 
		does not intersect with $Z_i^j$, it is player~1 closed
		in $\game^j$ by Lemma~\ref{lem:decomp}(\ref{sublem:sound}).
		Thus by $E^j_i \subseteq E^j$, the set $X_{i,\ell}^j$ is a player-2 dominion 
		w.r.t.\ the generalized Büchi objective also in $\game^j$.
		Since $X_{i,\ell}^j$ does not intersect with $U^j_\ell$, it is also a player-2 
		dominion in the GR(1) game on $\game^j${ (cf.\ Lemma~\ref{lem:dominionGR1ConjBuchi})}.
		\item Since every player-2 dominion is player-1 closed, we have by 
		Lemma~\ref{lem:decomp}(\ref{sublem:size}) that \upbr{i} $\game^j[D] = \game^j_i[D]$,
		\upbr{ii} $D$ does not intersect with $Z^j_i$, and \upbr{iii}
		$D$ is player~1 closed in $\game_i^j$. Thus we have 
		that \upbr{a} $D$ does not intersect with $Y^j_{i, \ell}$
		and \upbr{b} player~2 can play the same winning strategy for the 
		vertices in $D$ on $\game_i^j$ as on $\game^j$.
		\qedhere
	\end{enumerate}
\end{proof}
}
{\par}From this we can draw the following two corollaries: (1)~When we had to go 
up to $i^*$ in the graph decomposition to find a dominion, then its attractor
has size at least $2^{i^*-1}$ and (2)~when $\domalg$ returns an empty set,
then all player-2 dominions in the current game graph have more
than $\kmax = \sqrt{n}$ vertices.
{
\begin{corollary}\label{cor:domsize}
Let $j$ be some iteration of the repeat-until loop in Algorithm~\ref{alg:GR1Game}
and consider the call to 
$\domalg(\game^j, \set{U^j_\ell}, \set{L^j_t}, \kmax)$.
\begin{enumerate}
	\item If for some $i > 1$ we have $X_{i,\ell}^j \ne \emptyset$ but 
	$X_{i-1,\ell}^j = \emptyset$, then $\lvert \att(\game^j, X_{i,\ell}^j) 
	\rvert > 2^{i-1}$.\label{sublem:domsizelevel}
	\item If $\domalg(\game^j, 
		\set{U^j_\ell}, \set{L^j_t}, \kmax)$ returns the empty set, then for 
		every player-2 dominion $D$ in the GR(1) game 
		we have $\lvert \att(\game^j, D) \rvert > \kmax$.\label{sublem:domsizekmax}
\end{enumerate}
\end{corollary}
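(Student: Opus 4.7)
The plan is to prove both parts by contradiction, in each case combining the soundness/completeness properties of the hierarchical decomposition from Lemma~\ref{lem:domdecomp} with the correctness guarantee of the progress measure procedure (Theorem~\ref{thm:ConjBuchiLiftingCorrectness}).

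For part~(1), I will assume $\lvert\att(\game^j, X_{i,\ell}^j)\rvert \le 2^{i-1}$ and derive $X_{i-1,\ell}^j \ne \emptyset$. By Lemma~\ref{lem:domdecomp}(\ref{sublem:domsound}), the set $X_{i,\ell}^j$ is a player-2 dominion in the GR(1) game on $\game^j$ with $X_{i,\ell}^j \cap U_\ell^j = \emptyset$; inspecting that proof, it is in fact a player-2 dominion for the pure generalized Büchi objective $\bigwedge_{t=1}^{k_1} \objsty{Büchi}{L^j_t}$ in $\game^j$. Then Lemma~\ref{lem:domdecomp}(\ref{sublem:domsize}) applied at level $i-1$ (its hypothesis holds by our assumption) yields that $X_{i,\ell}^j$ is a generalized Büchi dominion for player~2 in $\game^j_{i-1} \setminus Y^j_{i-1,\ell}$, equivalently a player-1 dominion in $\overline{\game^j_{i-1} \setminus Y^j_{i-1,\ell}}$. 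Since $\lvert X_{i,\ell}^j\rvert \le \lvert\att(\game^j, X_{i,\ell}^j)\rvert \le 2^{i-1}$, Theorem~\ref{thm:ConjBuchiLiftingCorrectness} forces $\progress$ run with parameter $h = 2^{i-1}$ to return a non-empty set containing $X_{i,\ell}^j$, so $X_{i-1,\ell}^j \ne \emptyset$, contradicting the hypothesis.

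For part~(2), I will assume towards a contradiction that there is a player-2 dominion $D$ in the GR(1) game on $\game^j$ with $\lvert\att(\game^j, D)\rvert \le \kmax$. By Lemma~\ref{lem:domwithoutU}, $D$ contains a player-2 sub-dominion $D' \subseteq D$ with $D' \cap U_\ell^j = \emptyset$ for some $\ell$. Because player~2's winning strategy witnessing $D'$ never visits $U_\ell^j$, the guarantee of the GR(1) objective is violated, so the assumption $\bigwedge_{t=1}^{k_1}\objsty{Büchi}{L_t^j}$ must be forced to hold along every consistent play; thus $D'$ is a player-2 dominion for the pure generalized Büchi objective on $\game^j$. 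Setting $i^\star = \lceil\log_2(2\kmax)\rceil$, monotonicity of the attractor under set inclusion gives $\lvert\att(\game^j, D')\rvert \le \lvert\att(\game^j, D)\rvert \le \kmax < 2^{i^\star}$, so Lemma~\ref{lem:domdecomp}(\ref{sublem:domsize}) applies at level $i^\star$ and makes $D'$ a generalized Büchi dominion for player~2 in $\game^j_{i^\star} \setminus Y^j_{i^\star,\ell}$. As $\lvert D'\rvert \le \kmax < 2^{i^\star}$, Theorem~\ref{thm:ConjBuchiLiftingCorrectness} implies $X_{i^\star,\ell}^j \ne \emptyset$, contradicting the assumption that $\domalg$ returned $\emptyset$ (note $i^\star$ is within the iteration range of the procedure).

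The main delicate point is tracking the different objectives with respect to which a set qualifies as a dominion — GR(1) in $\game$ versus pure generalized Büchi over the $L_t$ in the decomposed graph — and ensuring that the sub-dominion provided by Lemma~\ref{lem:domwithoutU} still meets the attractor-size bound required by Lemma~\ref{lem:domdecomp}(\ref{sublem:domsize}); everything else is a mechanical composition of the stated lemmas.
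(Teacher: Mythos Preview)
Your proof is correct and follows essentially the same approach as the paper: both parts argue by contradiction, using Lemma~\ref{lem:domdecomp}(\ref{sublem:domsound}) to extract a generalized-B\"uchi dominion on $\game^j$, then Lemma~\ref{lem:domdecomp}(\ref{sublem:domsize}) to push it down to the appropriate level of the decomposition, and finally the correctness of $\progress$ (Theorem~\ref{thm:ConjBuchiLiftingCorrectness}) to force a non-empty output. The only cosmetic difference is that in part~(2) the paper picks the \emph{minimal} level $i'$ with $\lvert\att(\game^j,D')\rvert\le 2^{i'}$ whereas you go straight to the top level $i^\star=\lceil\log_2(2\kmax)\rceil$; either choice yields the desired contradiction.
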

}
{
\begin{proof} We prove the two points separately.
	\begin{enumerate}
		\item By Lemma~\ref{lem:domdecomp}(\ref{sublem:domsound}) $X_{i,\ell}^j$ is a 
		player-2 dominion in the GR(1) game on $\game^j$ with $X_{i,\ell}^j
		\cap U_\ell^j = \emptyset$
		and thus in particular a dominion w.r.t.\ the generalized Büchi objective 
		$\bigwedge_{t=1}^{k_1} \objsty{Büchi}{L^j_t}$ such that 
		$X_{i,\ell}^j \cap U_\ell^j = \emptyset$. Assume by contradiction 
		$\lvert \att(\game^j, X_{i,\ell}^j) \rvert \le 2^{i-1}$. Then by 
		Lemma~\ref{lem:domdecomp}(\ref{sublem:domsize}) we have 
		$X_{i-1,\ell}^j \ne \emptyset$, a contradiction.
		
		\item Assume there exists a 2-dominion $D$ with 
		$\lvert \att(\game^j, D) \rvert \leq \kmax$.
		{Then by Lemma~\ref{lem:domwithoutU} there is also a 
		2-dominion}		$D' \subseteq D$ that meets the criteria 
		of Lemma~\ref{lem:domdecomp}(\ref{sublem:domsize}). 
		Let~$i'$ be the minimal value such that $\lvert \att(\game^j, D') \rvert \leq 2^{i'}$, certainly $i' \leq \lceil \log_2(\kmax) \rceil$.
		Now, by Lemma~\ref{lem:domdecomp}(\ref{sublem:domsize}), we have that
		$D'$ is a dominion w.r.t.\ the generalized Büchi objective $\bigwedge_{t=1}^{k_1} 
		\objsty{Büchi}{L^j_t \setminus Y^j_{i', \ell}}$ in $\game^j_{i'} \setminus Y^j_{i', \ell}$.
		{By the correctness of Algorithm~\ref{alg:ConjBuchiLifting}, }the 
		set $X_{i',\ell}^j$ is a dominion containing $D'$
		and thus $\domalg(\game^j = ((V^j, E^j),(V_1^j, V_2^j)), \set{U^j_\ell}, \set{L^j_t}, \kmax)$
		returns a non-empty set.
		\qedhere
	\end{enumerate}
\end{proof}
}{

For the final game graph $\game^{j^*}$ we can build a winning strategy for player~1 in the same way as for Algorithm~\ref{alg:GR1GameBasic}.
That is, by combining her winning strategies for the disjunctive objective in the subgraphs $\overline{\game_\ell^{j^*}}$
and the attractor strategies for $\ato(\game^{j^*}, U_\ell)$. 
\begin{lemma}[Soundness of Algorithm~\ref{alg:GR1Game}]
Let $V^{j^*}$ be the set of vertices returned by Algorithm~\ref{alg:GR1Game}.
Each vertex in $V^{j^*}$ is winning for player~1.
\end{lemma}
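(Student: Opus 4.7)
The plan is to reduce the soundness of Algorithm~\ref{alg:GR1Game} to that of the basic algorithm (Proposition~\ref{prop:p1win_GR1Basic}) by observing that the call to $\domalg$ is only a fast pre-check for small player-$2$ dominions: if it returns a nonempty set, that set is a dominion by Lemma~\ref{lem:domdecomp}(\ref{sublem:domsound}); if it returns the empty set, control falls through to exactly the same basic search as in Algorithm~\ref{alg:GR1GameBasic}. The repeat--until loop exits when $D^{j^*} = \emptyset$, which via $D^{j^*} = \att(\game^{j^*}, S^{j^*})$ forces $S^{j^*} = \emptyset$. Since $S^{j^*}$ is set first by $\domalg$ and then possibly overwritten in the inner for-loop over $\ell$, both phases must have failed to find a dominion. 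In particular, $\buchialg(\overline{\game^{j^*} \setminus Y^{j^*}_\ell}, \bigwedge_{t=1}^{k_1} \objsty{Büchi}{L^{j^*}_t \setminus Y^{j^*}_\ell}) = \emptyset$ for every $1 \le \ell \le k_2$, which is exactly the termination condition of Algorithm~\ref{alg:GR1GameBasic}. By determinacy (Theorem~\ref{th:det}) this means that player~$1$ wins the disjunctive co-Büchi objective $\bigvee_{t=1}^{k_1} \objsty{coBüchi}{L^{j^*}_t \setminus Y^{j^*}_\ell}$ from every vertex of $\game^{j^*} \setminus Y^{j^*}_\ell$.

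Next I would lift this into a winning strategy for player~$1$ on $V^{j^*}$ in $\game^{j^*}$ using the same counter-based construction as in the proof of Proposition~\ref{prop:p1win_GR1Basic}: maintain a counter $c \in \{1,\dots,k_2\}$, initially $c=1$; play the winning strategy for the disjunctive co-Büchi objective on $\game^{j^*} \setminus Y^{j^*}_c$ while the current vertex lies outside $Y^{j^*}_c$; once $Y^{j^*}_c$ is entered, follow the player-$1$ attractor strategy to $U^{j^*}_c$ (guaranteed by Lemma~\ref{lem:attr}(\ref{sublem:attrstr})); and on visiting $U^{j^*}_c$, cyclically increment~$c$. The same two-case analysis applies: either for some value of~$c$ the set $Y^{j^*}_c$ is reached only finitely often, so $\bigvee_{t=1}^{k_1} \objsty{coBüchi}{L^{j^*}_t}$ holds and the GR(1) antecedent is violated; or every $U^{j^*}_\ell$ is visited infinitely often, so $\bigwedge_{\ell=1}^{k_2} \objsty{Büchi}{U^{j^*}_\ell}$ holds and the GR(1) consequent is satisfied. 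Either way the GR(1) objective is met. That $V^{j^*}$ is player-$2$-closed in $\game$, so that this strategy never leaves $V^{j^*}$, follows by inductively applying Lemma~\ref{lem:attr}(\ref{sublem:complattr}) to each removed player-$2$ attractor $D^j$.

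Finally I would propagate winning back from $\game^{j^*}$ to the original game~$\game$ by iterated application of Lemma~\ref{lem:attr}(\ref{sublem:subgraph}): each $D^j$ is the player-$2$ attractor of a player-$2$ dominion $S^j$ of $\game^j$, hence $W_1(\game^j) = W_1(\game^{j+1})$, so by induction $V^{j^*} \subseteq W_1(\game^{j^*}) = W_1(\game)$. The only obligation beyond the basic soundness proof is verifying that every nonempty $S^j$ produced by $\domalg$ really is a player-$2$ dominion of $\game^j$ for the GR(1) game; this is exactly Lemma~\ref{lem:domdecomp}(\ref{sublem:domsound}), while for the $\buchialg$ branch the corresponding fact is established as in Proposition~\ref{prop:p2win_GR1Basic}. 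I do not anticipate any genuine obstacle; the mildly subtle point is simply spelling out that the compound failure condition ``$\domalg$ returns empty \emph{and} the for-loop returns empty'' yields the single failure condition used in Proposition~\ref{prop:p1win_GR1Basic}, after which the rest of that proof carries over verbatim.
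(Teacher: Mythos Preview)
Your proposal is correct and follows essentially the same approach as the paper: observe that termination forces $S^{j^*}=\emptyset$, which means both $\domalg$ and every call to $\buchialg$ returned the empty set, so you are in exactly the situation of Proposition~\ref{prop:p1win_GR1Basic} and the counter-based strategy construction applies verbatim. The paper's own proof is a one-liner that simply points back to Proposition~\ref{prop:p1win_GR1Basic}; you have spelled out the details more fully, and your final paragraph invoking Lemma~\ref{lem:attr}(\ref{sublem:subgraph}) is redundant once you have already argued via player-$2$-closedness that the constructed strategy on $V^{j^*}$ is winning in~$\game$, but it does no harm.
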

\begin{proof}
{When the algorithm terminates we have $S^{j^*} = \emptyset$. Thus
the winning strategy of player~1 can be constructed in the same way 
as for the set returned by Algorithm~\ref{alg:GR1GameBasic}.
(cf.\ Proof of Proposition~\ref{prop:p1win_GR1Basic})}
\end{proof}
}
Next we show that whenever Algorithm~\ref{alg:GR1Game} removes vertices from the game graph,
these vertices are indeed winning for player~2. This is due to Lemma~\ref{lem:domdecomp}(\ref{sublem:domsound}), stating that these sets are 
2-dominions in the current game graph and Lemma~\ref{lem:attr}, stating that all player-2 dominions of the current game graph $\game^j$
are also winning for player~2 in the original game graph $\game$.
\begin{lemma}[Completeness Algorithm~\ref{alg:GR1Game}]
Let $V^{j^*}$ be the set of vertices returned by Algorithm~\ref{alg:GR1Game}.
Each vertex in $V \setminus V^{j^*}$ is winning for player~2.
\end{lemma}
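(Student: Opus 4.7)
The plan is to mirror the completeness proof of the basic algorithm (Proposition~\ref{prop:p2win_GR1Basic}) but with a case split on how the dominion $S^j$ was identified in each iteration. The backbone is Lemma~\ref{lem:attr}(\ref{sublem:subgraph}): if we can show that in every iteration $j$ with $S^j \neq \emptyset$ the set $S^j$ (and hence its player-2 attractor $D^j$) is a player-2 dominion in $\game^j$ w.r.t.\ the GR(1) objective, then iterating this argument shows that every vertex ever removed is winning for player~2 in the original game~$\game$.

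So first I would fix an iteration $j$ with $S^j \ne \emptyset$ and split on the two possible sources of $S^j$. In the first case, $S^j = X_{i,\ell}^j$ was returned by the subroutine $\domalg$ for some $i$ and $\ell$; here Lemma~\ref{lem:domdecomp}(\ref{sublem:domsound}) directly gives that $S^j$ is a player-2 dominion in the GR(1) game on $\game^j$ with $S^j \cap U_\ell^j = \emptyset$, so we are done. In the second case, $\domalg$ returned the empty set and $S^j$ was obtained as the winning set of player~1 in the generalized Büchi game on $\overline{\game^j \setminus Y^j_\ell}$ for some $\ell$. Here I would argue exactly as in the proof of Proposition~\ref{prop:p2win_GR1Basic}: since by Lemma~\ref{lem:attr}(\ref{sublem:complattr}) the set $V^j \setminus Y^j_\ell$ is player-1 closed in $\game^j$, equivalently player-2 closed in $\overline{\game^j}$, any player-1 dominion of $\overline{\game^j \setminus Y^j_\ell}$ w.r.t.\ $\bigwedge_{t=1}^{k_1} \objsty{Büchi}{L^j_t \setminus Y^j_\ell}$ remains a player-1 dominion of $\overline{\game^j}$ w.r.t.\ $\bigwedge_{t=1}^{k_1} \objsty{Büchi}{L^j_t}$ by Lemma~\ref{lem:attr}(\ref{sublem:winclosed}); and since $S^j \cap U_\ell^j = \emptyset$ (by construction $S^j \subseteq V^j \setminus Y^j_\ell$ and $U_\ell^j \subseteq Y^j_\ell$), Lemma~\ref{lem:dominionGR1ConjBuchi} promotes $S^j$ to a player-2 dominion of $\game^j$ w.r.t.\ the GR(1) objective.

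Finally, from $S^j$ being a player-2 dominion I would conclude via Lemma~\ref{lem:attr}(\ref{sublem:attrstr}) that $D^j = \att(\game^j, S^j)$ is also a player-2 dominion in $\game^j$ (player~2 can first attract to $S^j$ and then play her dominion strategy there). An inductive application of Lemma~\ref{lem:attr}(\ref{sublem:subgraph}) over $j = 1, \dots, j^*-1$ then transports each $D^j$ into the winning set $W_2(\game)$ of the original game, yielding $V \setminus V^{j^*} \subseteq W_2(\game)$. I do not anticipate any serious obstacle: all the heavy lifting has been done in Lemma~\ref{lem:domdecomp}(\ref{sublem:domsound}) and Lemma~\ref{lem:dominionGR1ConjBuchi}, so this proof is essentially a case-by-case application of existing results together with the standard attractor/closed-set machinery already used in the basic algorithm.
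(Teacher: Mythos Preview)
Your proposal is correct and follows essentially the same approach as the paper: reduce via Lemma~\ref{lem:attr}(\ref{sublem:subgraph}) to showing each nonempty $S^j$ is a player-2 dominion in $\game^j$, then case-split on whether $S^j$ came from $\domalg$ (handled by Lemma~\ref{lem:domdecomp}(\ref{sublem:domsound})) or from $\buchialg$ (handled exactly as in Proposition~\ref{prop:p2win_GR1Basic}). You spell out the second case in more detail than the paper does, but the argument is the same.
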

\begin{proof}
By Lemma~\ref{lem:attr}(\ref{sublem:subgraph}) it is sufficient to show that
in each iteration~$j$ with $S^j \ne \emptyset$ player~2 has a winning strategy
from the vertices in $S^j$ in $\game^j$. 
If a non-empty set $S^j$ is returned by $\domalg$, then $S^j$ 
is winning for player~2 by Lemma~\ref{lem:domdecomp}(\ref{sublem:domsound}).
For the case where $S^j$ is empty after the call to $\domalg$,
{the set $S^j$ is determined in the same way as 
in the basic algorithm for GR(1) games and thus is winning by the correctness of 
Algorithm~\ref{alg:GR1GameBasic} (cf.\ Proof of Proposition~\ref{prop:p2win_GR1Basic}).}
\end{proof}
{
Finally, as the runtime of the subroutine $\domalg$ scales with the size of the smallest player-2 dominion in $\game^j$ and we have 
only make $O(\sqrt{n})$ many calls to $\buchialg$, we obtain a runtime of $O(k_1 \cdot k_2 \cdot n^{2.5})$.
\begin{theorem}[Runtime Algorithm~\ref{alg:GR1Game}]
	The algorithm can be implemented to terminate in $O(k_1 \cdot k_2 \cdot n^{2.5})$ time.
\end{theorem}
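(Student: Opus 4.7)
The plan is to partition the total work into three sources: (i) calls to $\domalg$ that return a non-empty set $S^j$, (ii) calls to $\domalg$ that return the empty set, and (iii) the subsequent \emph{large dominion} search that runs only when $\domalg$ returns $\emptyset$. Global bookkeeping of the sorted incoming/outgoing edge lists used to build the $G^j_i$ from the hierarchical decomposition can be handled exactly as in the proof of Proposition~\ref{prop:time_conjbuchi}, contributing only $O(m)$ in total across all outer iterations, and the final attractor computations $\att(\game^j,S^j)$ on the full graph also sum to $O(m)$.

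For source~(i), I would first bound the cost of a single invocation of $\domalg$ whose inner for-loop identifies a dominion at level $i^*$. At level $i$, constructing $G^j_i$ and $Z^j_i$ from the persistent edge lists costs $O(2^i n)$, each of the $k_2$ attractor computations of $Y^j_{i,\ell}$ on $G^j_i$ is $O(2^i n)$ by Lemma~\ref{lem:attr}(\ref{sublem:attrtime}), and each call to $\progress$ with parameter $h=2^i$ on the subgraph (which has $O(2^i n)$ edges) costs $O(k_1 \cdot 2^i \cdot 2^i n)$ by Proposition~\ref{prop:ConjBuchiLiftingTime}. Summing the $k_2$ iterations at each level and the geometric series in $i$ up to $i^*$ gives $O(k_1 k_2 \cdot 4^{i^*} n)$. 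By Corollary~\ref{cor:domsize}(\ref{sublem:domsizelevel}) the resulting $D^j$ satisfies $|D^j|>2^{i^*-1}$, so the amortized cost per removed vertex is $O(k_1 k_2 \cdot 2^{i^*} n)$, which using $2^{i^*}\le 2\sqrt{n}$ is $O(k_1 k_2 \cdot n^{1.5})$. Since at most $n$ vertices are removed over the whole run, source~(i) sums to $O(k_1 k_2 \cdot n^{2.5})$.

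For sources~(ii) and~(iii), I would invoke Corollary~\ref{cor:domsize}(\ref{sublem:domsizekmax}): whenever $\domalg$ returns $\emptyset$, either no player-2 dominion remains in $\game^j$ (in which case the large-dominion search also yields $S^j=\emptyset$ and the algorithm terminates in this iteration) or the subsequent large-dominion search produces a $D^j$ with $|D^j|>\sqrt{n}$. Hence this branch is entered in at most $\sqrt{n}+1$ outer iterations. In each such iteration the $\domalg$ call runs through all levels up to $i=\lceil\log_2(2\sqrt n)\rceil$, costing $O(k_1 k_2\cdot 4^{\log_2(2\sqrt n)}\,n)=O(k_1 k_2\cdot n^2)$ by the same estimate as in~(i); and the large-dominion search performs $k_2$ attractor computations on $\game^j$ (each $O(m)$) together with $k_2$ calls to $\buchialg$ on $\overline{\game^j\setminus Y^j_\ell}$, each of which is $O(k_1 n^2)$ by Proposition~\ref{prop:time_conjbuchi}. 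Multiplying by the $O(\sqrt n)$ iterations in which this branch fires gives $O(k_1 k_2\cdot n^{2.5})$ for both~(ii) and~(iii). Adding the three sources proves the theorem.

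The main obstacle I expect is making the amortized argument for source~(i) tight across a variable level $i^*$: the bound $O(k_1 k_2 \cdot 4^{i^*} n)$ paired with $|D^j|>2^{i^*-1}$ must yield a per-vertex charge that is uniformly $O(k_1 k_2\cdot n^{1.5})$ irrespective of $i^*$, and the threshold $\kmax=\sqrt n$ is chosen precisely to equalize this charge with the $O(k_1 k_2\cdot n^{1.5})$ per-vertex bound obtained for sources~(ii) and~(iii). The remaining technicality is verifying that the persistent edge-list data structure supports the repeated construction of $G^j_i$ across different $i$ and $j$ without extra overhead, which is a direct reuse of the argument from Proposition~\ref{prop:time_conjbuchi}.
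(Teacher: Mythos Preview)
Your proposal is correct and takes essentially the same approach as the paper: you use the same geometric-series bound $O(k_1 k_2 \cdot 4^{i^*} n)$ for a $\domalg$ invocation stopping at level~$i^*$, the same per-vertex charging via Corollary~\ref{cor:domsize}(\ref{sublem:domsizelevel}), and the same $O(\sqrt{n})$ bound on iterations where $\domalg$ fails via Corollary~\ref{cor:domsize}(\ref{sublem:domsizekmax}). The only cosmetic difference is that you separate the empty-return $\domalg$ cost into its own source~(ii) and bound it by iteration count, whereas the paper folds it into the same per-vertex charge used for source~(i); both bookkeepings yield the same $O(k_1 k_2 \cdot n^{2.5})$ total.
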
}
{
\begin{proof}
		We analyze the \emph{total} runtime over all iterations of the repeat-until loop.
	The analysis uses that whenever a player-2 dominion $D^j$ is identified,
	then the vertices of $D^j$ are removed from the maintained game graph. In particular,
	we have that whenever $\domalg$ returns an empty set, either 
	at least $\kmax = \sqrt{n}$ vertices are removed from the game graph or 
	the algorithm terminates. Thus this case can happen at most $O(n / \kmax) = O(\sqrt{n})$
	times. In this case $\buchialg$ is called $k_2$ times. By Proposition~\ref{prop:time_conjbuchi}
	this takes total time $O( \sqrt{n} \cdot k_2 \cdot k_1 \cdot n^2) = O(k_1 k_2 \cdot n^{2.5})$.
	
	We next bound the total time spent in $\domalg$. 
	To efficiently construct the graphs $G^j_i$ and the vertex sets $Z^j_i$ we
	maintain (sorted) lists of the incoming and the outgoing edges of each vertex.
	These lists can be updated whenever an obsolete entry is encountered in the 
	construction of $G^j_i$; as
	each entry is removed at most once, maintaining this data structures takes
	total time $O(m)$.
	Now consider a fixed iteration~$i$ of the outer for-loop in $\domalg$.
	The graph $G_i^j$ has $O(2^i \cdot n)$ edges and thus, given the above data
	structure for adjacent edges, the graphs $G^j_i$ and the 
	sets $Z^j_i$ can be constructed in $O(2^i \cdot n)$ time. 
	Further the $k_2$ attractor computations in the inner for-loop can be done in time 
	$O(k_2 \cdot 2^i \cdot n)$.
	The runtime of iteration~$i$ is dominated by the $k_2$ calls to $\progress$.
	By Theorem~\ref{thm:ConjBuchiLifting} the calls to $\progress$ in iteration~$i$,
	with parameter $h$ set to $2^i$,
	 take time $O(k_1 k_2 \cdot n \cdot 2^{2i})$.
	Let $i^*$ be the iteration at which $\domalg$ stops after it is 
called in the $j$th iteration of the repeat-until loop. The runtime 
for this call to $\domalg$ from $i = 1$ to $i^*$ forms a geometric series that
is bounded by $O(k_1 k_2 \cdot n \cdot 2^{2i^*})$.
By Corollary~\ref{cor:domsize} 
either (1) a dominion $D$ with $\lvert \att(\game^j, D) \rvert > 2^{i^*-1}$ 
vertices was found by $\domalg$ or (2) all 
dominions in $\game^j$ have more than $\kmax$ vertices. Thus either (2a) a dominion~$D$
with more than $\kmax$ vertices is detected in the subsequent call to $\buchialg$ 
or (2b) there is no dominion in $\game^j$ and $j$ is the 
last iteration of the algorithm. Case (2b) can happen at most 
once and its runtime is bounded by $O(k_1 k_2 \cdot n \cdot 2^{2\log(\kmax)}) = 
O(k_1 k_2 \cdot n^2)$. In the cases (1) and (2a) more than $2^{i^*-1}$ vertices 
are removed from the graph in this iteration, as $\kmax > 2^{i^*-1}$. We charge each such 
vertex $O(k_1 k_2 \cdot n \cdot 2^{i^*}) = O(k_1 k_2 \cdot n \cdot \kmax)$ time. 
Hence the total runtime for these cases is $O(k_1 k_2 \cdot n^2 \cdot \kmax) = 
O(k_1 k_2 \cdot n^{2.5})$.
\end{proof}
}
{
\begin{remark}
	Algorithm~\ref{alg:GR1Game} can be modified to additionally return winning 
	strategies for both players. Procedure~$\progress(\game,\obj,h)$ can be 
	modified to return a winning strategy within the returned dominion.
	Procedure~$\buchialg$ can be modified to return winning strategies
		for both player in the generalized Büchi game.
	Thus for player~2 a winning strategy for the dominion~$D^j$ that is identified 
	in iteration~$j$ of the algorithm can be constructed by combining his winning 
	strategy in the generalized Büchi game in which~$S^j$ was identified with his 
	attractor strategy to the set $S^j$. For player~1 we can obtain a winning 
	strategy in the final iteration of the algorithm by combining 
	for $1\le \ell \le k_2$ her attractor strategies to the sets $U_\ell$ with 
	her winning strategies in the generalized Büchi games for each of the game graphs
	$\overline{\game^j_i \setminus Y^j_{i, \ell}}$
\upbr{as described in the proof of Proposition~\ref{prop:p1win_GR1Basic}}. 
\end{remark}
}

\section{Conclusion}
In this work we consider
the algorithmic problem of computing the winning 
sets for games on graphs with generalized B\"uchi and GR(1) objectives.
We present improved algorithms for both, and conditional lower bounds
for generalized B\"uchi objectives.

The existing upper bounds and our conditional lower bounds are tight for 
(a)~for dense graphs, and (b)~sparse graphs with constant size target sets.
Two interesting open questions are as follows:
(1)~For sparse graphs with $\theta(n)$ many target sets of size $\theta(n)$ 
the upper bounds are cubic, whereas the conditional lower bound is 
quadratic, and closing the gap is an interesting open question. 
(2) For GR(1) objectives we obtain the conditional lower bounds from 
generalized B\"uchi objectives, which are not tight in this case;
whether better (conditional) lower bounds can be established also remains open.

\subparagraph*{Acknowledgements.}
K. C., M. H., and W. D. are partially supported by the Vienna
Science and Technology Fund (WWTF) through project ICT15-003.
K. C. is partially supported by the Austrian Science Fund (FWF)
NFN Grant No S11407-N23 (RiSE/SHiNE) and an ERC Start grant
(279307: Graph Games). For W. D., M. H., and V. L. the research
leading to these results has received funding from the European
Research Council under the European Union’s Seventh Framework
Programme (FP/2007-2013) / ERC Grant Agreement no. 340506.

\printbibliography[heading=bibintoc]

\end{document}